\author{Amir Abboud\thanks{Weizmann Institute of Science. This work is part of the project CONJEXITY that has received funding from the European Research Council (ERC) under the European Union's Horizon Europe research and innovation programme (grant agreement No.~101078482). Supported by a research grant from the Center for New Scientists at the Weizmann Institute of Science. Parts of this work were done while visiting INSAIT, Sofia University ``St. Kliment Ohridski'', Rutgers University, and the Courant Institute of Mathematical Sciences at New York University.} \and Ron Safier\thanks{Weizmann Institute of Science.} \and Nathan Wallheimer\thanks{Weizmann Institute of Science.}}
\date{}
\title{Triangle Detection in $H$-Free Graphs}
\begin{document}

    \maketitle 

    \begin{abstract}
        We initiate the study of combinatorial algorithms for Triangle Detection in $H$-free graphs. The goal is to decide if a graph that forbids a fixed pattern $H$ as a subgraph contains a triangle, using only ``combinatorial'' methods that notably exclude fast matrix multiplication. Our work aims to classify which patterns admit a subcubic speedup, working towards a dichotomy theorem. 

        On the lower bound side, we show that if $H$ is not $3$-colorable or contains more than one triangle, the complexity of the problem remains unchanged, and no combinatorial speedup is likely possible.

        On the upper bound side, we develop an embedding approach that results in a strongly subcubic, combinatorial algorithm for a rich class of ``embeddable'' patterns. 
        Specifically, for an embeddable pattern of size $k$, our algorithm runs in $\tOrder(n^{3-\frac{1}{2^{k-3}}})$ time, where $\tOrder(\cdot)$ hides poly-logarithmic factors. 
        This algorithm also extends to listing all the triangles within the same time bound. We supplement this main result with two generalizations:
        \begin{itemize}
            \item A generalization to patterns that are embeddable up to a single obstacle that arises from a triangle in the pattern. This completes our classification for small patterns, yielding a dichotomy theorem for all patterns of size up to eight. 
            \item An $H$-sensitive algorithm for embeddable patterns, which runs faster when the number of copies of $H$ is significantly smaller than the maximum possible $\Omega(n^{k})$. 
        \end{itemize}
        Finally, we focus on the special case of odd cycles. We present  specialized Triangle Detection algorithms that are very efficient: 
        \begin{itemize}
            \item A combinatorial algorithm for $C_{2k+1}$-free graphs that runs in $\tOrder(m+n^{1+2/k})$ time for every $k \geq 2$, where $m$ is the number of edges in the graph. 
            \item A combinatorial $C_5$-sensitive algorithm that runs in $\tOrder(n^2 + n^{4/3} t^{1/3})$ time, where $t$ is the number of $5$-cycles in the graph. 
        \end{itemize}
    \end{abstract} 

    \thispagestyle{empty} 
    \newpage 
    \thispagestyle{empty} 
    \tableofcontents
    \newpage
    \setcounter{page}{1} 

    \section{Introduction}
        \label{sec:introduction}
        In the Triangle Detection problem, a graph $G$ is given as input, and the goal is to decide whether $G$ contains a triangle, i.e., three vertices such that an edge connects every pair of them. 
The trivial algorithm that checks every triplet of vertices runs in $\Order(n^3)$ time, and the only other known algorithm for the problem achieving strongly subcubic, i.e., $\Order(n^{3-\eps})$ time for some $\eps>0$, is the  $\Order(n^\omega)$ time algorithm from~\cite{itai1977finding} via fast matrix multiplication, where $\omega < 2.371339$~\cite{alman2025more}. 
So-called \emph{combinatorial} algorithms, which avoid fast matrix multiplication, have achieved only sub-polynomial improvements so far despite decades-long efforts~\cite{arlazarov1970economical,BW12,chan2014speeding,yu2018improved,abboud2024new}, leading to the conjecture that $n^{3-o(1)}$ time is unavoidable for combinatorial algorithms.
A remarkable reduction by Vassilevska and Williams~\cite{williams2010subcubic} shows that this conjecture is equivalent to the assumption that no strongly subcubic combinatorial algorithm exists for Boolean Matrix Multiplication (BMM). 

\begin{conjecture}[Combinatorial Triangle Detection (equivalently, the BMM Conjecture)]
\label{conj:bmm}
There is no strongly subcubic combinatorial Triangle Detection (equivalently, BMM) algorithm.
\end{conjecture}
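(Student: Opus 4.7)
The plan is to prove an unconditional $n^{3-o(1)}$ lower bound on Boolean Matrix Multiplication (BMM) in a formally defined combinatorial model of computation, and then transfer the bound to Triangle Detection by invoking the subcubic equivalence of Vassilevska--Williams~\cite{williams2010subcubic} as a black box. I emphasize that I am committing to a lower-bound proof of the hardness direction itself, not merely reducing one unresolved claim to another.

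The first step is to fix a precise formalization of ``combinatorial''. The working intuition is that the algorithm manipulates vertex and edge identifiers together with explicit subsets of $[n]$, and never performs additive cancellation over a ring. Concretely, I would adopt a straight-line-program model whose gates perform Boolean \textsc{And}/\textsc{Or} on $O(\log n)$-bit registers and union/intersection/complement on subsets of $[n]$, while forbidding the ring subtractions Strassen-style algorithms exploit. The model must be permissive enough to capture the known combinatorial subcubic speedups (Four Russians~\cite{arlazarov1970economical}, the Bansal--Williams preprocessing trick~\cite{BW12}, and the subsequent refinements~\cite{chan2014speeding,yu2018improved,abboud2024new}), so that a lower bound against it is meaningful, yet restricted enough that algebraic identities cannot be smuggled in through word-level bit tricks.

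The second step is the core adversary / information-theoretic argument. To certify $(AB)_{ij}=0$, the algorithm must rule out every one of the $n$ candidate middle vertices $k$ with $A_{ik}=B_{kj}=1$, and the key claim I would try to prove is that in the combinatorial model, unlike over a ring, such certifications cannot be amortized across distinct $(i,j)$ pairs. The argument would construct a hard distribution on $(A,B)$ for which each gate of the straight-line program reduces the adversary's residual entropy by at most $O(\log n)$ bits, yielding an $\Omega(n^{3}/\log^{c} n)$ lower bound on the number of gates; an inductive ``set-covering'' potential function measuring the number of $(i,j,k)$ triples still consistent with the computation so far is the natural candidate.

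The main obstacle, and the reason Conjecture~\ref{conj:bmm} has resisted attack for decades, is precisely step two: no one knows how to formalize ``combinatorial'' in a way that is simultaneously (a) broad enough to encompass every algorithm the community is willing to call combinatorial, and (b) narrow enough to provably exclude fast matrix multiplication. Broader models (general word-RAM with bit-packing) allow algebraic shortcuts to re-enter through the back door, while narrower models (e.g.\ the ``combinatorial-BMM'' circuits already studied in the literature) draw the objection that the bound does not reflect realistic algorithmic power. My realistic fallback is therefore to carry out the adversary argument in a hierarchy of progressively richer restricted models---oblivious decision trees, bounded-fan-in monotone set-circuits, and locality-preserving set-system computations---accumulating techniques and partial lower bounds, with the full resolution of Conjecture~\ref{conj:bmm} requiring a genuinely new idea for bridging gaps (a) and (b).
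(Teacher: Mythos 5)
There is a fundamental mismatch here: \cref{conj:bmm} is a \emph{conjecture}, and the paper does not prove it -- it is the hardness assumption under which the paper's conditional lower bounds (\cref{thm:lowerbounds}) are stated, with the only cited formal content being the subcubic equivalence of Triangle Detection and BMM from~\cite{williams2010subcubic}. Your proposal sets out to prove the conjecture unconditionally, which would be a major breakthrough far beyond the scope of this paper, and the proposal does not actually do it. Both of its load-bearing steps are left open by your own admission: you do not fix a formal model of ``combinatorial'' computation that provably captures the known speedups while excluding algebraic cancellation, and the central claim of step two -- that each gate can reduce the adversary's residual entropy by only $O(\log n)$ bits, so certifications of zero entries cannot be amortized -- is asserted as ``the key claim I would try to prove'' with no argument. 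A research plan whose crux is ``a genuinely new idea is required to bridge gaps (a) and (b)'' is not a proof, and it cannot be graded as one against a statement the paper itself deliberately leaves as an assumption.

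There is also a concrete internal inconsistency in the quantitative target. You require the model to be permissive enough to express the known combinatorial algorithms, explicitly including~\cite{arlazarov1970economical,BW12,chan2014speeding,yu2018improved,abboud2024new}, and then aim for an $\Omega(n^{3}/\log^{c} n)$ gate lower bound. But the algorithm of Abboud et al.~\cite{abboud2024new} runs in $n^{3}/2^{\Omega(\log^{1/7} n)}$ time, a savings that is superpolylogarithmic; any model rich enough to implement it admits circuits of size $o(n^{3}/\log^{c} n)$ for every constant $c$, so the lower bound you are aiming for is false in the very model you insist on. The correct target compatible with the conjecture (and with these algorithms) is ruling out $n^{3-\eps}$ time for constant $\eps>0$, i.e.\ an $n^{3-o(1)}$ bound, which your potential-function sketch as stated does not deliver. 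If you want to engage with this statement in the context of the paper, the right move is to treat it as an assumption and verify how the paper uses it -- namely, the reductions via color-coding and sieving in \cref{sec:lowerbounds} that transfer hardness to $H$-free instances -- rather than to attempt an unconditional circuit lower bound.
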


Triangle Detection is one of the most fundamental problems in algorithmic graph theory and fine-grained complexity; not only is it the smallest non-trivial case of $k$-Clique Detection and $k$-Cycle Detection, making it arguably the simplest graph problem we cannot solve in linear time, but its hardness also explains the complexity of a vast number of other problems. Indeed, many of the central conjectures in fine-grained complexity, which form the basis for dozens of conditional lower bounds, can be viewed as variants, strengthenings, or generalizations of the assumption that Triangle Detection is a hard problem. 
In particular, as discussed by Abboud et al.~\cite[Section 1.1]{abboud2024new}, a refutation of the Combinatorial Triangle Detection conjecture would be a major step towards refuting central conjectures such as All-Pairs Shortest-Paths (equivalent to the weighted variant), 3SUM (related to the reporting variant), and Online Matrix Vector Multiplication (related to the online variant). 
This motivates the study of this particular problem for two main reasons. First, for the purpose of gaining a better understanding of what makes problems hard, e.g., informing us about the hard instances of all these problems. Second, for the purpose of attacking the conjectures of fine-grained complexity, which should be viewed as crisply articulating the edges of our algorithmic capabilities, and hence being the most pressing for algorithmic research. 

An inspiring approach towards attacking Conjecture~\ref{conj:bmm} is via the regularity-based techniques in the $n^{3}/\Omega(\log^{2.25} n)$ time algorithm of Bansal and Williams~\cite{BW12} and in the recent $n^3/2^{\Omega(\log^{1/7}n)}$-time algorithm of Abboud et al.~\cite{abboud2024new}. 
Both works employ the ``structure vs. randomness'' methodology~\cite{tao2007structure}, which exploits a dichotomy between structure and pseudorandomness by decomposing the input into a structured part and a pseudorandom part. The precise definitions of structure and pseudorandomness can vary, as worst-case instances of Triangle Detection can be highly structured under one definition while being highly pseudorandom under another. For example, the neighborhood of every vertex in a triangle-free graph is an independent set (a highly structured property). However, these graphs can also be good expanders (a pseudorandom property)~\cite{abboud2023worst,abboud2024worst}. The currently known structural properties of worst-case instances (e.g., large independent sets) are not known to be algorithmically useful for recognizing triangle-free graphs efficiently. A challenge towards algorithmic progress is in finding the right notion of pseudorandomness for this context.

In this work, we aim to improve the understanding of the structure of worst-case Triangle Detection instances. A better understanding of this question is crucial for such a fundamental problem and may direct us toward resolving \cref{conj:bmm}. We initiate the methodological study of combinatorial Triangle Detection algorithms for structured inputs. In the language of algorithmic graph theory, we consider the following question. 

\begin{center}
\emph{
Which families of graphs admit a strongly subcubic combinatorial Triangle Detection algorithm?}
\end{center}

For graph families in which every graph is sparse, namely, has $m = \Order(n^{2-\eps})$ edges for some $\eps > 0$, the answer is trivial; a straightforward $\Order(mn)$ algorithm is already strongly subcubic. Such graph families include bounded-arboricity graphs, minor-free graphs, bounded treewidth graphs, and more. In this paper, we focus on the family of $H$-free graphs for a fixed pattern $H$: graphs that do not contain the pattern $H$ as a subgraph. This family of graphs is an excellent candidate for studying the properties of structured inputs. For example, $H$-free graphs can be dense, which implies that the following question cannot be answered by only exploiting sparsity. 

\begin{openquestion}
\label{q:free}
Which patterns $H$ admit a strongly subcubic combinatorial Triangle Detection algorithm for $H$-free graphs? 
\end{openquestion}

$H$-free graphs are also central objects in extremal graph theory, and the number of copies of a pattern $H$ is often used as a measure of pseudorandomness. In particular, the Chung-Graham-Wilson theorem~\cite{chung1989quasi} shows that a graph in which the count of every pattern $H$ is as expected in a random graph with the same density (up to an additive error) satisfies many equivalent notions of pseudorandomness, including strong spectral expansion. It is therefore natural to ask whether a graph containing few copies of $H$ has enough structure to be exploited for recognizing triangle-free graphs efficiently. Throughout the paper, we use $k$ to denote the number of vertices in $H$. 

\begin{openquestion}
\label{q:sensitive}
For which patterns $H$ is there a combinatorial Triangle Detection algorithm such that if a graph $G$ contains at most $n^{k-\eps}$ copies of $H$ for some $\eps > 0$, the algorithm runs in at most $n^{3-\delta}$ time for some $\delta := \delta(\eps) > 0$?
\end{openquestion}

In this paper, we answer both of these questions for large and complex families of patterns. 
We begin with the first question about $H$-free graphs, which are the main focus of the paper. 

\subsection{Our results for Triangle Detection in $H$-free graphs}
On the lower bound side, we identify two classes of patterns for which, under \cref{conj:bmm}, there is no strongly subcubic Triangle Detection algorithm for $H$-free graphs. Formally, we prove the following theorem using standard self-reductions in this branch of research.

\begin{restatable}[Lower bounds classes]{theorem}{lowerbounds}
\label{thm:lowerbounds}
If $H$ is not $3$-colorable, or if it contains more than one triangle, then combinatorial Triangle Detection algorithms for $H$-free graphs require $n^{3-o(1)}$ time under \cref{conj:bmm}.
\end{restatable}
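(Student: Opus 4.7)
The plan is to establish, for each of the two sufficient conditions on $H$, a linear-size reduction from general Triangle Detection (hard under Conjecture~\ref{conj:bmm}) to Triangle Detection on $H$-free graphs. The contrapositive then yields the stated $n^{3-o(1)}$ lower bound.

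For the case where $H$ is not 3-colorable, I would use the standard tripartite blow-up. Given $G$ on $n$ vertices, form $G'$ with vertex set $V(G) \times \{1,2,3\}$ and edge set $\{(u,i)(v,j) : uv \in E(G),\ i \ne j\}$. Each layer $V(G) \times \{i\}$ is independent, so a triangle of $G'$ uses one vertex from each layer and corresponds (up to the ordering of layers) to a triangle in $G$; conversely, each triangle $\{u,v,w\}$ of $G$ lifts to $\{(u,1),(v,2),(w,3)\}$ in $G'$. Since $G'$ is visibly 3-colorable by construction, every subgraph of $G'$ is 3-colorable, so the non-3-colorable pattern $H$ cannot embed in $G'$: the graph $G'$ is $H$-free. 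Running a hypothetical strongly subcubic combinatorial algorithm for $H$-free Triangle Detection on the size-$3n$ graph $G'$ would therefore refute Conjecture~\ref{conj:bmm}.

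For the case where $H$ has at least two triangles, the tripartite blow-up alone is insufficient, as a 3-colorable pattern $H$ may still embed into the dense blow-up. I would fix two triangles $T_1, T_2$ of $H$ and set $H_0 := T_1 \cup T_2$; depending on $|V(T_1) \cap V(T_2)| \in \{0,1,2\}$, $H_0$ is $2K_3$, the bowtie, or $K_4 - e$. Since $H_0$ is a subgraph of $H$, every $H_0$-free graph is $H$-free, so it suffices to reduce general Triangle Detection to Triangle Detection on $H_0$-free graphs. I would achieve this by composing the tripartite blow-up with a constant-size gadget per edge (or per vertex triple) of $G$ that preserves each triangle of $G$ one-to-one while explicitly preventing any two output triangles from jointly realizing the small pattern $H_0$.

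The main obstacle is the Case~2 gadget construction: the very edges that carry a triangle of $G$ to the output tend to also support ``incidental'' nearby triangles, which may combine with the intended triangle to realize $H_0$. The sub-case $H_0 = K_4 - e$ seems most tractable, as the forbidden pattern is local to a single edge; duplicating each vertex into a constant number of ``roles'' and requiring edges to connect compatible roles can enforce that every output edge belongs to at most one output triangle. The sub-cases $H_0 = 2K_3$ and the bowtie are more global in nature and will likely need a sparsified blow-up or an edge-coloring-based selection rule. In all sub-cases the gadget should be of constant size, so that the constructed instance has $O(n)$ vertices and the lower bound transfers directly to give $n^{3-o(1)}$ under Conjecture~\ref{conj:bmm}.
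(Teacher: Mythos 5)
Your first case is fine: the deterministic tripartite blow-up $V(G)\times\{1,2,3\}$ is 3-colorable, hence $H$-free for non-3-colorable $H$, and it preserves triangles exactly; this is a legitimate (and even deterministic) alternative to the paper's randomized route via \cref{lem:color-coding}, which instead randomly 3-colors $G$, deletes monochromatic edges, and repeats $O(\log n)$ times so that a triangle survives with high probability.

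The second case, however, has a genuine gap, and it is exactly the part you flag as an ``obstacle'' and leave unresolved. A deterministic, constant-size, per-edge (or per-triple) gadget composed with the blow-up cannot in any obvious way enforce $H_0$-freeness while preserving triangle existence: for $H_0 = K_4 - e$ you need every output edge to lie in at most one triangle, which forces a choice, for each adjacent pair, of which single common neighbor to retain; since each edge is shared by many such pairs, these choices interact globally and cannot be made consistently by a local deterministic rule without risking the destruction of all triangles (and for $H_0 = 2K_3$ or the bowtie the constraint is even more global). The paper's missing idea here is randomized \emph{sieving} (\cref{lem:sieving}): after color-coding, subsample the red vertices at rate $p$ and the blue--green edges at rate $q$, trying all $O(\log n)\times O(\log n)$ geometric scales, so that if $G$ has a triangle then with constant probability (boosted by repetition) some subgraph contains \emph{exactly one} triangle. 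A graph with at most one triangle is automatically $H$-free for every $H$ containing more than one triangle, so no case analysis on how the two triangles of $H$ intersect ($2K_3$, bowtie, or diamond) is needed at all, and soundness is immediate because any subgraph of a triangle-free $G$ is triangle-free, hence $H$-free. Your proposal, as written, does not contain a construction achieving what the sieving lemma achieves, so the theorem is not proved for the more-than-one-triangle case; if you want to stay deterministic you would need something like a derandomized isolation argument, which is far beyond a constant-size local gadget.
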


On the upper bound side, we employ three different techniques to solve the problem for different classes of patterns, putting us close to proving a \emph{dichotomy theorem}: a complete classification of patterns into those for which Triangle Detection in $H$-free graphs is strongly subcubic, and those for which a conditional lower bound of $n^{3-o(1)}$ exists.

Before we present the results, let us build some intuition. Our starting point is bipartite patterns. Suppose $H$ is a bipartite graph with parts of sizes $s \leq t$. By the Kov{\'a}ri-S{\'o}s-Tur{\'a}n theorem~\cite{kovari1954problem} (abbreviated as KST), an $H$-free graph $G$ contains at most $\Order(n^{2-1/s})$ edges, so the problem can be solved trivially in $\Order(n^{3-1/s})$ time. 

For non-bipartite graphs, the smallest case is $H=K_3$, which is trivial: If $G$ is triangle-free, an algorithm can answer NO in constant time. Another simple case is when $H$ is a triangle with an attached edge. In an $H$-free graph, any vertex of a triangle cannot have a neighbor outside that triangle. This implies that the vertices of a triangle in $G$ have degree $2$. Therefore, one can find a triangle in $\Order(m+n)$ time by processing only the vertices of degree $2$. The smallest non-trivial case is when $H$ is a cycle of length $5$, denoted by $C_5$. Our starting point is a combinatorial and strongly subcubic Triangle Detection algorithm for $C_5$-free graphs.
\begin{proposition}
    \label{prop:C5-warmup}
    There is a combinatorial Triangle Detection algorithm for $C_5$-free graphs, running in time $\tOrder(n^{7/3})$ and succeeding with probability $1-\Order(1/n)$. 
\end{proposition}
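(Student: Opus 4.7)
The plan is to combine a standard high/low-degree split with a structural consequence of $C_5$-freeness. Set a threshold $\Delta := n^{2/3}$ and call a vertex $v$ \emph{light} if $\deg(v) \le \Delta$ and \emph{heavy} otherwise. For the light case, I would iterate over each light $v$ and over all $\binom{|N(v)|}{2}$ pairs in $N(v)$, testing adjacency in $\Order(1)$ via a hashed edge list; this costs $\Order(n \Delta^2) = \Order(n^{7/3})$ in total and detects every triangle containing at least one light vertex.

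The main structural leverage of $C_5$-freeness is the following lemma, which I would prove first: for every vertex $v$, $G[N(v)]$ contains no $P_4$ on four distinct vertices, and is therefore a disjoint union of stars and triangles; in particular $|E(G[N(v)])| \le |N(v)|$. Indeed, a $P_4$ subgraph $a{-}x{-}y{-}b$ inside $N(v)$ together with the edges $va$ and $vb$ would close a $5$-cycle $v{-}a{-}x{-}y{-}b{-}v$; the star-and-triangle characterization then follows because any two disjoint edges inside a connected component would, together with a connecting subpath, yield a forbidden $P_4$. Note also that after the light pass we may assume every remaining triangle has all three vertices heavy, since if any one of its vertices were light we would have caught the triangle while enumerating that light vertex's neighborhood.

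For the heavy case, I would use randomized witness-sampling: sample a set $R \subseteq V$ of $\tOrder(n/\Delta) = \tOrder(n^{1/3})$ uniformly random vertices, so that with probability $1 - \Order(1/n)$ we have $R \cap N(v) \ne \emptyset$ for every heavy $v$. For each $r \in R$ I would precompute $N(r)$ as a hash set and, for every heavy $v$ adjacent to $r$, scan $N(r)$ and hash-lookup each element in $N(v)$; any hit gives a triangle $\{r, v, \cdot\}$. A careful amortization based on the sparsity bound $|E(G[N(v)])| \le |N(v)|$ from the lemma, combined with the fact that there are at most $\Order(m/\Delta) = \Order(n^{4/3})$ heavy vertices, keeps the total cost at $\tOrder(n^{7/3})$.

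The part I expect to be the main obstacle is the degenerate case in which $G[N(v)]$ is a single isolated edge $ab$ surrounded by many isolated vertices of $N(v)$: here a uniformly sampled $r$ is very unlikely to equal $a$ or $b$, so the witness-scanning above does not directly expose the triangle. The stars-and-triangles lemma is essential in this case because it forces both endpoints of such an edge to themselves be heavy (otherwise the light pass would already have detected the triangle $\{v,a,b\}$ when processing the light endpoint), which opens the way for a complementary deterministic sweep over heavy-heavy incidences, guided by the $|E(G[N(v)])| \le |N(v)|$ sparsity, that catches the remaining triangles within the same $\tOrder(n^{7/3})$ budget.
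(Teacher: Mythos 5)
Your light-vertex pass and your structural lemma are both correct: a $P_4$ inside $N(v)$ together with $v$ does close a $C_5$, so $G[N(v)]$ is a disjoint union of stars and triangles and $|E(G[N(v)])|\le |N(v)|$. The gap is in the heavy case, and it is not a corner case but the heart of the problem. Your witness-scan only detects triangles that \emph{contain} a sampled vertex $r$: for a heavy $v$ adjacent to $r$, a hit $w\in N(r)\cap N(v)$ certifies the triangle $rvw$. But your sampling guarantee is only that $R$ hits $N(v)$ for every heavy $v$; it does not put any $r\in R$ inside a triangle. In the worst case the graph has a single all-heavy triangle $xyz$ and each of $N(x),N(y),N(z)$ contains exactly one edge, so a sample of $\tOrder(n^{1/3})$ uniform vertices misses $\{x,y,z\}$ with probability $1-\tOrder(n^{-2/3})$ and the scan finds nothing. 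You acknowledge exactly this situation in your last paragraph, but the proposed repair (``a complementary deterministic sweep over heavy-heavy incidences'') is not an algorithm: there can be $\Theta(n^2)$ heavy--heavy edges (e.g.\ a balanced complete bipartite graph, which is $C_5$-free with all degrees $n/2$, plus one hidden triangle), and deciding whether such an edge has a common neighbor costs $\Theta(n)$ per edge naively; the bound $|E(G[N(v)])|\le|N(v)|$ bounds how many triangles there are, but gives no handle on \emph{finding} one, so no $\tOrder(n^{7/3})$ bound follows from what is written.

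The paper closes exactly this hole with a different use of $C_5$-freeness. A sampled vertex $v$ is used as a \emph{neighbor} of a triangle vertex, not as a triangle vertex: after first checking in $\Order(n^2)$ time that $v$ itself is in no triangle, exactly one triangle vertex lies in $N_1(v)$ and the other two in $N_2(v)$, and the key observation is that for any edge $uw$ inside $N_2(v)$, if $u$ and $w$ had distinct parents in $N_1(v)$ then $v$ together with the two parents and $uw$ would form a $C_5$; hence $u$ and $w$ share a parent and every edge inside $N_2(v)$ closes a triangle. So the heavy case reduces to detecting whether $N_2(v)$ spans an edge, which costs $\Order(n^2)$ per sampled vertex and $\tOrder(n^3/\Delta)$ overall, giving $\tOrder(n^{7/3})$ at $\Delta=n^{2/3}$. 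Your stars-and-triangles lemma is a genuine and correct local consequence of $C_5$-freeness, but to complete your proof you would need to replace the witness-scan (or supplement it) with an argument of this kind that detects triangles none of whose vertices are sampled.
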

In later sections, we will see that this algorithm is not optimal; however, it serves as an important warm-up because it utilizes many of the core ideas in our most general results. Specifically, the approach is to try to embed the forbidden pattern randomly in the graph $G$. The randomness ensures that a triangle that is incident to a high-degree vertex is contained in the area surrounding the embedded pattern. Eventually, we gain a speed-up by restricting the search to an area that, on the one hand, contains a triangle edge. On the other hand, every non-triangle edge in it completes the embedding of the forbidden pattern. Essentially, the problem is reduced to deciding if there is an edge in a subgraph of $G$. 

By generalizing this embedding approach, we provide an algorithm that applies to more patterns. It turns out that there is a large and rich family of patterns that can be embedded in worst-case Triangle Detection graphs, resulting in a polynomial speed-up. 
These \textit{embeddable} patterns are characterized by admitting a special type of $3$-coloring, defined as follows. 
\begin{restatable}[Nice colorings]{definition}{nice}
\label{def:nice}
    Let $C:V(H) \mapsto \set{\texttt{RED},\texttt{BLUE},\texttt{GREEN}}$ be a proper $3$-coloring of $H$. We say that $C$ is \emph{nice} if, by repeatedly deleting vertices with monochromatic neighborhoods until no such vertices remain, the leftover graph is either an empty graph or a triangle. We may also say that $H$ is nicely colored by $C$. 
\end{restatable}
\noindent
For patterns that admit a nice coloring, we provide the following result.
\begin{restatable}{theorem}{algnice}
    \label{thm:H-nice}
    For every pattern $H$ that admits a nice coloring, there is a combinatorial Triangle Detection algorithm for $H$-free graphs, running in time:
    \begin{equation*}
        \begin{cases}
        \tOrder(n^{3 - \frac{1}{2^{k-3}}}) & \text{ if $H$ is triangle-free,} \\
        \tOrder(n^{3 - \frac{1}{2^{k-4}}}) & \text{ if $H$ contains a triangle.} 
    \end{cases}
    \end{equation*}
    The algorithm succeeds with probability $1-\Order(1/n)$.
\end{restatable}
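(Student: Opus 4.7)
The plan is to prove \cref{thm:H-nice} by induction on $k$, leveraging the structure of the nice coloring. Let $v \in V(H)$ be a vertex whose neighborhood in $H$ is monochromatic, say all colored \texttt{RED}, and set $H' := H - v$. Then $H'$ has $k-1$ vertices and inherits a nice coloring with the same leftover (empty or triangle) as $H$. The key structural ingredient is the \emph{neighborhood lemma}: if $G$ is $H$-free, then for every $s \in V(G)$, the induced subgraph $G[N_G(s)]$ is $H'$-free. Any embedding of $H'$ into $G[N_G(s)]$ would extend to an embedding of $H$ into $G$ by sending $v$ to $s$, because the images of the \texttt{RED} neighbors of $v$ already lie inside $N_G(s)$ by construction, contradicting $H$-freeness.

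For the base cases, in the empty-leftover branch I take $H = P_3$. A $P_3$-free graph is a disjoint union of cliques, so triangle detection is immediate in $O(n+m) = O(n^2)$ time, matching $n^{3 - 1/2^{k-3}}$ at $k=3$. In the triangle-leftover branch I take $H = K_3 + e$; as observed in the introduction, every triangle vertex in a $(K_3+e)$-free graph has degree exactly $2$, so iterating over degree-$2$ vertices suffices in $O(n+m) = O(n^2)$ time, matching $n^{3 - 1/2^{k-4}}$ at $k=4$.

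For the inductive step, assume a $T(k-1)$-time algorithm for $H'$-free triangle detection, and set the threshold $\Delta := n^{3/2}/\sqrt{T(k-1)}$. I handle triangles in two phases. In the \emph{low-degree} phase, iterate over each vertex $x$ of degree less than $\Delta$ and check every pair in $N_G(x)$ for adjacency, at a cost of $O(n \Delta^2)$. In the \emph{high-degree} phase (the dominant one), search for triangles containing a vertex of degree at least $\Delta$: for each candidate heavy vertex $s$, $G[N_G(s)]$ is $H'$-free by the neighborhood lemma, and a triangle through $s$ corresponds to an edge in $G[N_G(s)]$. Adapting the embedding strategy from the $C_5$ warm-up (\cref{prop:C5-warmup}), one samples a random partial embedding of $H'$ into $G[N_G(s)]$ and uses $H$-freeness of $G$ to ensure that every edge in the resulting search region is either a triangle edge or would complete a forbidden copy of $H$; only the triangle-edge case survives. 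This reduces the high-degree phase to $\tOrder(\Delta)$ recursive calls of cost $T(k-1)$ each, plus a bulk-enumeration cost of $\tOrder(n^3/\Delta)$.

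Combining the two phases, the dominant cost is $\tOrder(\Delta \cdot T(k-1) + n^3/\Delta) = \tOrder(\sqrt{n^3 \cdot T(k-1)})$ (with the $n\Delta^2$ term of the low-degree phase absorbed in the parameter regime $T(k-1) \ge n^{5/3}$, which holds throughout the induction). This recurrence, together with the above base cases, solves to $\tOrder(n^{3 - 1/2^{k-3}})$ in the empty-leftover branch and $\tOrder(n^{3 - 1/2^{k-4}})$ in the triangle-leftover branch. The main technical hurdle I anticipate is in the high-degree phase: the neighborhood lemma gives $H'$-freeness of $G[N_G(s)]$, but the natural sub-problem we face is \emph{edge} detection, not triangle detection. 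Bridging this mismatch is exactly what the embedding trick does, and making the associated random-sampling argument go through at every level of the induction---showing that the ``correct'' partial embedding is drawn with high enough probability and that all surviving edges in the search region are genuine triangle edges---while controlling polylogarithmic factors, is where the work is concentrated.
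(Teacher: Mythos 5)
There is a genuine gap at the heart of your inductive step, and it is not the ``edge detection vs.\ triangle detection'' mismatch you flag at the end --- it is the neighborhood lemma itself. Restricting to $G[N_G(s)]$ does give an $H'$-free graph, but it destroys the object you are looking for: if the sampled/heavy vertex $s$ is not itself in a triangle (the only interesting case), then $s$ is adjacent to at most one vertex of any triangle of $G$, so $G[N_G(s)]$ contains at most one triangle vertex and a recursive triangle-detection call on it is vacuous. Your induction hypothesis (``a $T(k-1)$-time algorithm for $H'$-free triangle detection'') is therefore never applied to a subgraph that still contains the triangle, and no amount of ``adapting the embedding trick'' fixes this within your uncolored framework: to keep the triangle you must restrict only \emph{part} of the graph to $N(s)$, but then a copy of $H'$ in the restricted graph need not place the images of $N_H(v)$ inside $N(s)$, so $H'$-freeness is lost. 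The paper resolves exactly this tension by first passing to a properly $3$-colored subgraph of $G$ (color coding, \cref{lem:color-coding}) and working with \emph{colored} copies: since the peeled vertex $h$ has a monochromatic neighborhood of some color $c^*$, one deletes only the $c^*$-colored vertices outside $N(s)$; the other two color classes (and hence two of the three triangle vertices) survive intact, while every colored copy of $H'$ in the subgraph still extends to a colored copy of $H$ via $s$. This is precisely where the ``nice coloring'' hypothesis is used structurally, and your proposal never uses the coloring of $G$ at all.

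Two further problems follow from this. First, your recurrence is asserted rather than derived: the natural number of recursive calls per level is $\tOrder(n/\Delta)$ (one per sampled vertex), not $\tOrder(\Delta)$, and the correctness argument --- that with high probability some sampled vertex of the right color hits the neighborhood of the $c^*$-colored triangle vertex, with a union bound over the $k$ levels --- is missing; in the paper the per-level cost is $\tOrder(n^2\Delta_{k'} + \frac{n}{\Delta_{k'}}T(k'-1))$ (cleanup is $n^2\Delta$, not $n\Delta^2$, because in the colored setting a vertex with few neighbors in one class may have $\Theta(n)$ in another), and the thresholds $\Delta_j = n^{1-1/2^{j-3}}$ are optimized jointly. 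That your recurrence happens to solve to the same bound does not substitute for an algorithm realizing it. Second, the base cases are not just $P_3$ and $K_3+e$: the peeling can bottom out at any colored $3$-vertex (resp.\ $4$-vertex) pattern, and these are handled in the paper by a colored case analysis plus a low-degree cleanup (\cref{claim:base-case}); also, ``$P_3$-free implies disjoint union of cliques'' is the \emph{induced} characterization --- subgraph-$P_3$-freeness means maximum degree at most one (harmless here, but symptomatic of the missing colored bookkeeping). In short, the top-level plan (peel a monochromatic-neighborhood vertex, degree threshold, sample, recurse) matches the paper, but the recursive step as you state it fails, and repairing it requires the colored framework you omitted.
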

An immediate question that arises is about the generality of this result, i.e., which patterns admit a nice coloring. Let us begin by showing that it captures the most basic cases already discussed.
First, observe that all bipartite graphs admit a nice coloring, because every neighborhood in a proper $2$-coloring must be monochromatic. A triangle clearly admits a nice coloring. Next, observe that every cycle of odd length greater than $3$ admits a nice coloring, e.g., by coloring it greedily with two colors until the process is stuck at the last vertex, which gets a third color. The reason this coloring is nice is that the second vertex to receive a color will have a monochromatic neighborhood. After removing it, the remaining pattern is a path, and its endpoints must also have a monochromatic neighborhood. As an immediate consequence, we also get that all patterns which are $C_{k}$-colorable\footnote{A graph $G$ is $H$-colorable if there is a homomorphism from $G$ to $H$. Equivalently, it means that $G$ is a subgraph of a blowup of $H$, i.e., the graph obtained by replacing every edge of $H$ with a biclique.} for some $k > 3$ admit a nice coloring, as the greedy coloring can also nicely color the color classes in a $C_{k}$-colored pattern. However, the result is even more general than $C_k$-colored patterns: For example, the smallest triangle-free patterns that are not $C_5$-colorable have $8$ vertices~\cite[Theorem 1.1, Figure 3]{goedgebeur2024minimal}. The graph in \cref{fig:obfuscation} is such a pattern, yet it also admits a nice coloring that we illustrate in the figure.
 \begin{figure}[h]
    \centering
    \includegraphics[scale=0.8]{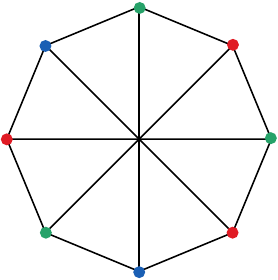}
    \caption{An $8$-vertex triangle-free graph that is not $C_5$-colorable, but admits a nice coloring (illustrated).}
    \label{fig:obfuscation}
\end{figure}

Perhaps every pattern $H$ which does not fall in the lower bound classes of \cref{thm:lowerbounds} admits a nice coloring? From a local view, it seems plausible: If $H$ is $3$-colorable and contains at most one triangle, then the neighborhoods of non-triangle vertices are independent sets. Thus, these neighborhoods can be properly colored with only one color. A similar notion of colorings, called adynamic colorings, was studied by Šurimová et al.~\cite{vsurimova2020adynamic}. This paper refutes that idea by giving an example of a triangle-free, $3$-colorable graph $H$, such that every proper $3$-coloring of $H$ does not admit a vertex with a monochromatic neighborhood. Specifically, $H$ consists of two copies of a $6$-cycle such that each vertex in the first copy is connected to two vertices in the second copy: to its copy and to the antipodal vertex of its copy. See \cref{fig:12vertices}.
    
 \begin{figure}[h]
    \centering
    \includegraphics[scale=0.8]{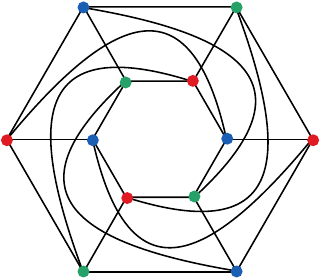}
    \caption{The graph from~\cite[Figure 6]{vsurimova2020adynamic}, together with a proper $3$-coloring (illustrated). This graph does not admit a nice coloring, and in every proper $3$-coloring of it, the neighborhood of every vertex is two-colored.}
    \label{fig:12vertices}
\end{figure}

A curious observation that we make is that cycles of length divisible by $3$ are inherent in such examples. In particular, if a $3$-colorable graph is free from such cycles, then not only does it admit a nice coloring, but \emph{every} proper $3$-coloring is nice.
\begin{restatable}{proposition}{divisible}
    \label{prop:3k}
    Let $H$ be a $3$-colorable, $3k$-cycle free graph for every $k \in \Nat$. Then in every proper $3$-coloring of $H$ there is a vertex with a monochromatic neighborhood.
\end{restatable}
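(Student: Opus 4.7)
The plan is to argue by contradiction: assume $C : V(H) \to \{0,1,2\}$ is a proper $3$-coloring of $H$ in which every vertex has a bichromatic neighborhood, and extract from $C$ a cycle whose length is forced to be a positive multiple of $3$. The key device is an auxiliary orientation $D$ of $H$: orient each edge $\{u,v\}$ from $u$ to $v$ exactly when $C(v) \equiv C(u) + 1 \pmod 3$. Since the endpoints of any edge of $H$ carry distinct colors, this rule assigns a well-defined direction to every edge.

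The bichromatic hypothesis translates into a clean combinatorial property of $D$: every vertex has out-degree at least one. Indeed, a vertex $v$ of color $c$ must have a neighbor of color $c+1 \pmod 3$, and such a neighbor is precisely an out-neighbor of $v$ in $D$. In any finite digraph in which every vertex has positive out-degree, repeatedly following out-edges must eventually revisit a vertex, producing a directed cycle $v_0 \to v_1 \to \cdots \to v_{\ell-1} \to v_0$ on distinct vertices.

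By the orientation rule, $C(v_{i+1}) \equiv C(v_i) + 1 \pmod 3$ along this cycle, so the color sequence has period $3$ and $\ell$ is divisible by $3$. Moreover, $D$ contains no anti-parallel arcs, since $C(v) \equiv C(u) + 1$ and $C(u) \equiv C(v) + 1$ modulo $3$ cannot hold simultaneously; hence $\ell \neq 2$, and combined with divisibility by $3$ we get $\ell \geq 3$. The directed cycle therefore corresponds to a simple cycle $C_\ell$ in $H$ whose length is a positive multiple of $3$, contradicting the assumption that $H$ is $3k$-cycle free for every $k \in \Nat$.

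The whole argument is short, and I do not expect a serious obstacle beyond choosing the right orientation. The only mildly subtle point is confirming that the directed cycle in $D$ yields an honest simple cycle in $H$ rather than merely a closed walk; this is automatic, since a directed cycle by definition uses distinct vertices, and the absence of $2$-cycles in $D$ rules out the only pathological case in which the corresponding subgraph of $H$ would fail to be a simple cycle.
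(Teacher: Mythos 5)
Your proof is correct and is essentially the paper's argument in different clothing: the paper follows a maximal path along edges that increment the color by $1 \pmod 3$ and closes it into a cycle by maximality, which is precisely your orientation $D$ together with the standard positive-out-degree-implies-directed-cycle step. In both cases the resulting simple cycle has length $\equiv 0 \pmod 3$, contradicting the assumption that $H$ is $3k$-cycle free.
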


Understanding these examples is crucial to our goal of proving a dichotomy theorem. It could be that the missing component for such a theorem is a lower bound for patterns that do not admit a nice coloring. 
It turns out that the story does not end here, as there are patterns $H$ which do not admit a nice coloring, yet $H$-free graphs admit a strongly subcubic combinatorial Triangle Detection algorithm.

Our next family of patterns for which we show a strongly subcubic algorithm is a generalization to patterns that are essentially characterized by a single triangle being the only obstacle to a nice coloring. 
\begin{restatable}{theorem}{attached}
\label{thm:attached}
    Suppose that $H$ satisfies the following property. There is a $3$-coloring of $H$, such that after removing $s$ vertices with monochromatic neighborhood until no such vertices remain, the leftover pattern consists of a triangle $xyz$ and a subpattern $H'$, with $xy \in E(H')$ and $z \notin V(H')$, and $H'$ is nicely colored. 
    Then there is a combinatorial Triangle Detection algorithm for $H$-free graphs that runs in time $\tOrder\left(n^{3-\frac{1}{2^{k + s - 1}}}\right)$. The algorithm succeeds with probability $1-\Order(1/n)$. 
\end{restatable}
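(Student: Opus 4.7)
The plan is to extend the random-embedding framework of Theorem~\ref{thm:H-nice} by grafting on a new step that handles the single triangle obstacle via a common-neighborhood argument enabled by $H$-freeness.

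First, I would peel the $s$ monochromatic-neighborhood vertices of $H$ exactly as in the nice coloring algorithm: for each such vertex, sample a candidate image in the appropriate color class of $V(G)$, using the monochromatic neighborhood together with the KST bound to constrain the search. After the $s$ peels, Triangle Detection on $G$ reduces to Triangle Detection on a restricted subgraph $G'$ in which the residual pattern --- a triangle $xyz$ glued to the nicely-colored $H'$ via the edge $xy$ --- effectively plays the role of the forbidden structure.

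Next, I would partition the edges of $G'$ based on their relationship to $H'$ and treat the two classes separately. Call an edge $uv$ \emph{extending} if there is an $H'$-embedding $\phi$ with $\phi(x) = u$ and $\phi(y) = v$, and \emph{non-extending} otherwise. The pivotal observation, immediate from $H$-freeness of $G$, is that extending edges have bounded common neighborhoods: for any embedding $\phi$ witnessing the extension, $N(u) \cap N(v) \subseteq \phi(V(H'))$, since any common neighbor outside $\phi(V(H'))$ would play the role of $z$ and complete a forbidden copy of $H$. Hence $|N(u)\cap N(v)| \le |V(H')|$, and triangles through extending edges can be enumerated in near-linear total time. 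For the complementary case, the subgraph $G''$ induced by non-extending edges is itself $H'$-free: any $H'$-copy in $G''$ would realize its $xy$-edge as extending, a contradiction. Applying Theorem~\ref{thm:H-nice} to $G''$ therefore detects every triangle composed entirely of non-extending edges, and the two cases together cover all triangles of $G'$.

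The main obstacle I expect is that explicitly constructing $G''$ --- certifying non-extension for every edge --- is itself an $H'$-subgraph detection task and too expensive to perform upfront. I would sidestep this by interleaving: run the nice coloring routine on $G'$ directly, and certify the extension status of each inspected edge on-the-fly via a sampled $H'$-embedding probe whose cost folds into the recursion. The resulting accounting should yield the claimed exponent $1/2^{k+s-1}$, where the $s$ peeling layers together with the on-the-fly certification for the triangle obstacle contribute the additional denominator factors beyond the nice coloring bound for $H'$.
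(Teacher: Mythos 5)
Your overall architecture (peel the $s$ monochromatic-neighborhood vertices by the random-embedding routine, then exploit the fact that a witnessing $H'$-copy on an edge $uv$ confines all triangle partners of $uv$ to the vertices of that copy) is exactly the skeleton of the paper's proof, and that key observation is correct: in the colored residual graph any common neighbor of $u,v$ outside $\phi(V(H'))$ would complete the forbidden colored pattern. The gap is in how you propose to exploit it. Your extending/non-extending dichotomy is not algorithmically realizable: a sampled $H'$-embedding probe gives only one-sided information, so it can certify that an edge \emph{is} extending (when it happens to hit a copy), but it can never certify non-extension, and for an edge lying in few but nonzero colored $H'$-copies the probe will typically find nothing --- such an edge can neither be placed in $G''$ nor be given a witnessing copy for the $\Order(1)$-time triangle check. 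Consequently $G''$ cannot be constructed (even approximately) within the budget, the claim that triangles through extending edges are enumerable in near-linear total time is unsupported without a per-edge witness, and the plan to finish by running Theorem~\ref{thm:H-nice} on an $H'$-free graph collapses; ``folding the certification cost into the recursion'' does not repair this, because the obstruction is informational, not just a matter of accounting.

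The paper's resolution, which your proposal is missing, is to replace the exact dichotomy by a heaviness threshold: for each edge colored like $xy$, sample $\tOrder(n^{k'-2}/\tau)$ tuples so that every edge lying in more than $\tau$ colored $H'$-copies is caught with high probability, check its triangles inside the sampled copy, and delete it; the surviving graph is then not $H'$-free but contains at most $n^2\tau$ colored copies of $H'$, so one must invoke the $H'$-\emph{sensitive} algorithm of Theorem~\ref{thm:H-sensitive} with $t=n^2\tau$ (not Theorem~\ref{thm:H-nice}), and finally balance the sampling cost $\tOrder(n^{k'}/\tau)$ against the sensitive running time by choosing $\tau$; combining this with the $s$ peeling levels and optimizing the degree thresholds yields the exponent $3-\frac{1}{2^{k+s-1}}$. (A minor additional point: the KST bound plays no role in the peeling phase; that phase is low-degree cleanup plus sampling $\tOrder(n/\Delta)$ vertices per level, as in Theorem~\ref{thm:H-nice}.)
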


\noindent
Together, our results provide a complete picture for small patterns.
\begin{restatable}{proposition}{smallpatterns}
    \label{prop:smallpatterns}
    Every pattern $H$ with at most $8$ vertices falls into one of the following four categories:
    \begin{itemize}
        \item Trivially Solvable: The graph is a subpattern of a triangle, and Triangle Detection can be solved in constant time.
        \item Conditional Lower Bound: The pattern is not $3$-colorable or contains more than one triangle, implying no strongly subcubic combinatorial algorithm exists under \cref{conj:bmm}.
        \item Nice Coloring: The pattern admits a nice coloring, and a strongly subcubic combinatorial algorithm is provided by \cref{thm:H-nice}.
        \item Attached Triangle: The pattern falls under the classification in \cref{thm:attached}, and a strongly subcubic combinatorial algorithm is provided by the theorem.
        \end{itemize}
\end{restatable}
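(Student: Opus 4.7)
The plan is a case analysis organized by the number of triangles in $H$ and its 3-colorability, using \cref{prop:3k} as the main structural tool. For $|V(H)| \leq 3$, the pattern embeds into a triangle and is trivially solvable. For patterns that are not 3-colorable or contain more than one triangle, the conditional lower bound follows from \cref{thm:lowerbounds}. It remains to show that every 3-colorable pattern on $4 \leq |V(H)| \leq 8$ vertices with at most one triangle falls into either the nice-coloring category via \cref{thm:H-nice} or the attached-triangle category via \cref{thm:attached}.

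The first step exploits that, because $|V(H)| \leq 8$, the only cycle lengths divisible by $3$ that can appear in $H$ are $3$ and $6$. Hence if $H$ is triangle-free and $C_6$-free, then \cref{prop:3k} applies to every proper 3-coloring to produce a vertex with monochromatic neighborhood; deleting such a vertex preserves triangle-freeness, $C_6$-freeness, and 3-colorability, so the iteration terminates at the empty graph and yields a nice coloring. For the case where $H$ contains exactly one triangle $xyz$ but is $C_6$-free, observe that in any proper 3-coloring the vertices $x$, $y$, $z$ necessarily receive pairwise distinct colors, so each of them already has two distinctly-colored neighbors from inside the triangle and can never acquire a monochromatic neighborhood. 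The iteration therefore strips only the non-triangle vertices; it either terminates at the triangle itself (a nice coloring by definition) or gets stuck in a configuration that we argue matches the attached-triangle structure of \cref{thm:attached}, since the remaining leaf-like non-triangle part $H'$ is itself triangle-free, $C_6$-free, and 3-colorable on at most $5$ vertices, hence nicely colored by the previous bullet.

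The main obstacle is the case where $H$ contains a copy of $C_6$. We plan to handle it by fixing an embedded $C_6$, enumerating the $\Order(1)$ proper 3-colorings of the cycle, and then enumerating the $\Order(1)$ ways in which the at most two external vertices can attach (by choice of neighbors among the six cycle vertices, plus, when two external vertices are present, their mutual adjacency and any triangle they may form). In each resulting configuration we will exhibit either a monochromatic-neighborhood vertex that lets the nice-coloring iteration proceed, or a subset of vertices whose removal exposes the attached-triangle structure. The 12-vertex obstruction of Šurimová et al.\ requires two disjoint copies of $C_6$ joined by a matching of antipodal edges, and no analog fits inside $8$ vertices; this strongly suggests that no genuine obstruction arises. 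This bounded (though tedious and possibly computer-assisted) enumeration completes the classification.
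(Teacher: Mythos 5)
Your first reduction (sizes $\leq 3$ trivially solvable, non-$3$-colorable or multi-triangle patterns to \cref{thm:lowerbounds}) and your first structural bullet are fine: for a triangle-free, $C_6$-free pattern on at most $8$ vertices the only possible cycle lengths divisible by $3$ are $3$ and $6$, so \cref{prop:3k} applies to every induced subgraph and any proper $3$-coloring is nice. But your second bullet has a genuine gap. Falling under \cref{thm:attached} requires much more than that the stripping process gets stuck with the triangle present: the stuck leftover must \emph{consist of} a triangle $xyz$ together with a subpattern $H'$ with $xy \in E(H')$ and $z \notin V(H')$, i.e.\ some triangle vertex $z$ must have \emph{no} leftover-neighbors other than $x$ and $y$, and the nicely colored $H'$ is the leftover minus $z$ (it contains $x$, $y$ and the edge $xy$), not the ``non-triangle part'' you describe. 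Your argument only addresses niceness of (the wrong) $H'$ and says nothing about why, when the iteration gets stuck, one triangle vertex is guaranteed to have leftover-degree $2$. Nothing you wrote rules out a stuck configuration in which all three triangle vertices retain outside neighbors, and if that happens for the (arbitrary) coloring you fixed, your argument produces neither category; one would have to prove that $C_6$-freeness plus the $8$-vertex budget forbids such configurations, or choose the coloring more carefully — that is exactly the missing content, and it is not obviously easy (the paper's $9$-vertex example in \cref{fig:9vertices} shows how close such obstructions come).

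Your third bullet is a plan rather than a proof. The enumeration is asserted, not carried out; as described it omits chords among the six cycle vertices and configurations where the unique triangle shares vertices with the $C_6$; and the heuristic ``no analog of the $12$-vertex obstruction fits inside $8$ vertices, so no genuine obstruction arises'' is contradicted in spirit by the paper's own computer search, which finds exactly eight $8$-vertex patterns (all containing a triangle) that admit \emph{no} nice coloring. Verifying that those exceptional patterns nonetheless satisfy the hypothesis of \cref{thm:attached} is the substantive step, and it is what the paper does: a computer-aided enumeration of all $8$-vertex patterns that are $3$-colorable with at most one triangle, checking for nice colorings, followed by explicit hand-verified colorings (Figures \ref{fig:7vertices} and \ref{fig:combinedfigures}) exhibiting the attached-triangle structure for the eight exceptions. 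Your structural route via \cref{prop:3k} is attractive and could shrink the computer search to the $C_6$-containing (and stuck $C_6$-free one-triangle) cases, but as written both the stuck case of bullet two and all of bullet three remain unproved.
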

\noindent
We also suggest a minimal example of a pattern $H$ that does not fall into any of the four categories. The pattern is constructed similarly to the one in \cref{fig:12vertices}, but we replace one of the $6$-cycles with a triangle. See \cref{fig:9vertices}.
     \begin{figure}[h]
        \centering
        \includegraphics[scale=0.8]{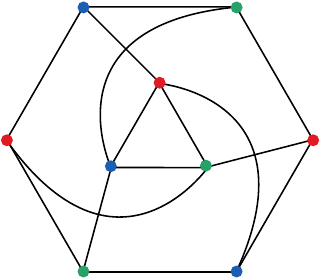}
        \caption{A minimal example of a pattern which does not fall into any of the categories of \cref{prop:smallpatterns}. Namely, the pattern is $3$-colorable (illustrated) and contains exactly one triangle. In every proper $3$-coloring of it, there are no vertices with a monochromatic neighborhood. Moreover, every vertex of the triangle has degree greater than $2$ and therefore it does not fall into the classification of \cref{thm:attached}.}
        \label{fig:9vertices}
    \end{figure}

\subsection{$H$-sensitive algorithms}
Another set of results we provide addresses \cref{q:sensitive} about $H$-sensitive algorithms. These are algorithms that run faster when the number of copies of a pattern $H$ is polynomially smaller than the maximum possible, which can be as high as $\Omega(n^k)$. $H$-sensitive algorithms are desirable because their complexity improves upon the worst-case complexity of the problem (i.e., they are never slower). In particular, an $H$-sensitive algorithm demonstrates that the hardest instances of Triangle Detection contain a maximum number of copies of $H$, aligning with the idea that pseudorandom graphs are the most challenging. 
We provide a generalization of the algorithm from \cref{thm:H-nice} to an $H$-sensitive algorithm for every $H$ that admits a nice coloring.

\begin{restatable}{theorem}{Hsensitive}
\label{thm:H-sensitive}
For every pattern $H$ that admits a nice coloring, there is a combinatorial algorithm that, given a graph $G$ and a value $t$ which upper bounds the number of copies of $H$ in $G$, decides whether $G$ contains a triangle. The running time is given by:
\[
    \begin{cases}
        \tOrder\left( n^{3-\frac{1}{2^{k-1}}} + n^{3 - \frac{k}{2^k - 1}} \cdot t^{\frac{1}{2^k - 1}}\right)  & \text{if } H \text{ is triangle-free}, \\
        \tOrder\left(n^{3-\frac{1}{2^{k-4}}} + n^{3-\frac{k-3}{2^{k-3}-1}} \cdot t^{\frac{1}{2^{k-3}-1}}\right) & \text{if } H \text{ contains a triangle}.
    \end{cases}
\]
The algorithm succeeds with probability $1-\Order(1/n)$. 
\end{restatable}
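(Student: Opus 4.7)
The plan is to enhance the embedding algorithm of \cref{thm:H-nice} with a charging argument that exploits the bound $t$ on the number of copies of $H$ in $G$. Recall how the $H$-free algorithm operates: a nice $3$-coloring of $H$ determines a peeling order of monochromatic-neighborhood vertices, which in turn dictates a sequence of random samples of $G$-vertices playing these roles. After the full peel, the algorithm restricts its triangle search to a small candidate set with the property that any edge in it either witnesses a triangle of $G$ or would extend, together with the embedded prefix, to a copy of $H$ in $G$. In the $H$-free setting the latter is ruled out, reducing the task to a simple edge-detection query; in the $H$-sensitive setting the second type of edges can exist, but each corresponds to a distinct copy of $H$ once we fix a correspondence between the embedded prefix and the copy.

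The main new ingredient is a budget argument: we allow the algorithm to discover up to $t$ ``bad'' edges (those that extend the prefix to an $H$-copy) during its execution, paying an extra cost proportional to $t$ divided by a typical per-trial success probability. For each random embedding, the algorithm enumerates the edges of the final candidate set, distinguishes triangle edges from bad edges in $\Order(1)$ time per edge using the embedded vertices, and halts as soon as any triangle edge is found. To prevent over-counting, I would fix a linear order on $V(H)$ and attribute each copy of $H$ to the lexicographically smallest embedding compatible with that order, so the total number of bad-edge events is bounded by $t$ times a combinatorial factor depending only on $k$.

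To extract the explicit exponents, I would introduce a degree threshold $\Delta$ and, for each of the $k-1$ (respectively $k-3$) peeling steps, tune the sampling intensity. Low-degree vertices contribute an $\tOrder(n^2 \Delta)$-type cost via direct neighborhood enumeration; high-degree vertices are handled by the random embedding. Balancing $\Delta$ against the sampling cost yields the ``$t$-free'' term of the running time; for triangle-free $H$ this is looser than the corresponding bound in \cref{thm:H-nice} by one level of recursion, because the final step must now \emph{enumerate} edges in the candidate set rather than merely detect their existence, while for $H$ containing a triangle the two bounds coincide. Propagating the $t$-charge through the same recursion produces the factors $t^{1/(2^k-1)}$ and $t^{1/(2^{k-3}-1)}$ in the second term.

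The main obstacle, as I see it, is the careful bookkeeping of how many embeddings each $H$-copy can ``charge'', and the derivation of a concentration statement strong enough for the $1-\Order(1/n)$ success probability. The linear-order attribution above yields a clean expected bound, and I would upgrade it to a high-probability bound via a Chebyshev or second-moment argument on the number of bad edges per round, followed by a union bound over the $\tOrder(1)$ rounds of boosting. A secondary difficulty is the triangle case, where the peel terminates at an embedded triangle of $H$: this forces the final detection step to look for an additional edge closing the triangle in $G$, which I would handle by applying the same budgeted embedding one level deeper with $t$ rescaled to reflect the choices made so far.
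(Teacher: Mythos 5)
There is a genuine gap. The paper's proof hinges on a quantitative structural fact that your proposal never supplies: in a graph with at most $t$ copies of a $k$-vertex pattern, at most $\Delta/2$ vertices can lie in more than $2kt/\Delta$ copies (\cref{claim:H-heavy}). After the low-degree cleanup, a surviving triangle has at least $\Delta$ neighbors in the relevant color class, so at least half of them are ``$H$-light,'' and the random sample hits a light one with high probability; this is what lets the recursion pass down a \emph{rescaled} bound $t' = 2kt/\Delta_k$ on the copies of the residual pattern in the very subgraph that retains the triangle, and it is why new base cases are needed (a $1$-vertex pattern, or a $4$-vertex pattern with a triangle, solved by a cleanup with threshold $t'$ in $\Order(n^2 t')$ time, \cref{claim:sensitive-base-case}). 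Your charging scheme (attributing each $H$-copy to its lexicographically smallest compatible embedding) bounds the \emph{total} number of bad-edge events summed over all sampled embeddings, but it does not bound the number of $H'$-copies inside the particular subgraph into which the triangle survives; an adversary can concentrate copies on the triangle's neighborhood, and then ``skipping bad edges'' does not locate the triangle, because the base-case structure used in \cref{thm:H-nice} (where \emph{any} remaining edge certifies a triangle, or a cleanup with $\Delta=1,2$ applies) is simply false once copies of $H$ are allowed. Relatedly, your claim that one can ``distinguish triangle edges from bad edges in $\Order(1)$ time per edge using the embedded vertices'' is unjustified: knowing that an edge either closes a triangle or completes an $H$-copy with the prefix does not tell you which, and certifying a triangle requires exhibiting a common neighbor (compare the $C_5$-sensitive algorithm, where this test costs $\min(p_v(u),p_v(w))$ and needs a separate supersaturation-style lemma to control it).

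Two further mismatches follow from this. First, your concentration plan (Chebyshev on the number of bad edges per round) is both insufficient and unnecessary: the variance can be enormous precisely when copies concentrate on few vertices, which is the case the light-neighbor sampling argument is designed to defeat, and once that argument is in place only a simple hitting-probability bound is needed for the $1-\Order(1/n)$ guarantee. Second, your accounting of the ``$t$-free'' term does not reproduce the stated exponents: in the triangle-free case the paper's recursion runs two levels deeper than in \cref{thm:H-nice} (down to a single-vertex pattern, whose copy count is the size of a color class), which is what produces $n^{3-\frac{1}{2^{k-1}}}$ rather than a bound one level looser than $n^{3-\frac{1}{2^{k-3}}}$; and the factors $t^{1/(2^k-1)}$, $t^{1/(2^{k-3}-1)}$ come from re-optimizing all thresholds jointly with the base-case cost $\Order(n^2 t')$ (via $\Delta_1 = t/\prod_{j\ge 2}\Delta_j$), not from propagating a global $t$-budget through the recursion of the $H$-free analysis.
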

Note that the distinction between triangle-free patterns and those that contain a triangle is necessary: By \cref{thm:lowerbounds}, there exist worst-case Triangle Detection instances with only one triangle, and consequently, only $\Order(n^{k-3})$ copies of $H$ for every pattern $H$ that contains a triangle, as opposed to potentially $\Omega(n^k)$ copies in the triangle-free case. Therefore, an $H$-sensitive algorithm for a pattern $H$ that contains a triangle should run in cubic time when $t = \Omega(n^{k-3})$, whereas if $H$ is triangle-free, then the running time for the same value of $t$ should be strongly subcubic. 

We also remark that the algorithm requires an upper bound $t$ on the number of copies of $H$ as input. This assumption slightly weakens the result, but it can be removed easily in the case where $H$ is triangle-free by computing an (not necessarily tight) upper bound on the number of copies of $H$ during a preprocessing step, which still yields a strongly subcubic algorithm if $G$ has polynomially fewer than $n^k$ copies of $H$.\footnote{For example, consider a triangle-free pattern $H$. Suppose that we take $\Theta(n^2 \log n)$ samples of $k$-tuples of vertices, and for each tuple we check if it induces a copy of $H$. If the first copy of $H$ was found after $ \Omega(n^{\alpha} \log n )$ samples, for any $0 < \alpha \leq 2$, we conclude that $t \leq n^{k-\alpha}$ with high probability. If no copy appears in $\Theta(n^2 \log n)$ samples, we conclude that $t \leq n^{k-2}$.} We note that if $H$ contains a triangle, then it is necessary for an $H$-sensitive algorithm to receive $t$ as an input, under \cref{conj:bmm}: If there was an $H$-sensitive Triangle Detection algorithm that is oblivious to the upper bound $t$, then it would need to run in subcubic time on every triangle-free graph $G$ (as then $t = 0$). Using such an algorithm, we can distinguish triangle-free graphs from graphs that contain a triangle by running the supposed algorithm and answering YES if its running time exceeds the stated bound when $t = 0$. 

Additionally, we provide a more efficient algorithm for the special case of $H = C_5$.
\begin{restatable}{theorem}{Cfivesensitive}
    \label{thm:C5-sensitive}
    There is a combinatorial Triangle Detection algorithm that runs in $\tilde{\Order}(n^{2} + n^{4/3} t^{1/3})$ time, where $t$ is the number of $5$-cycles in the input graph. The algorithm succeeds with probability at least $1-\Order(1/n)$.
\end{restatable}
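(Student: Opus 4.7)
The plan is to refine the random-embedding strategy behind Proposition~\ref{prop:C5-warmup} so that its sample complexity is tuned to the $5$-cycle budget $t$. The bound has two terms: the $\tilde{O}(n^2)$ part comes from a direct scan over low-degree vertices, while the $\tilde{O}(n^{4/3}t^{1/3})$ part comes from a sampling phase whose cost shrinks as $t$ shrinks.

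I would fix a degree threshold $\Delta$ (eventually $\Theta(\sqrt{n})$) and call a vertex \emph{light} if $\deg(v) \le \Delta$, \emph{heavy} otherwise. For every light vertex $v$, iterate over all $\binom{\deg(v)}{2}$ pairs in $N(v)$ and test each for adjacency. This detects every triangle that contains at least one light vertex in total time $\tilde{O}(n\Delta^2) = \tilde{O}(n^2)$, matching the first term. It then suffices to detect triangles whose three vertices are all heavy.

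The structural heart of the argument is the following observation. If $uvw$ is a triangle in $G$ and $x,y \notin \{u,v,w\}$ are distinct vertices with $ux, xy, yv \in E(G)$, then the five vertices $u,x,y,v,w$ form a $5$-cycle along $u$--$x$--$y$--$v$--$w$--$u$. Since each $C_5$-subgraph of $G$ admits only $O(1)$ chords, the map sending a triple (triangle, side, extension-pair) to this $C_5$ is at most $O(1)$-to-one, and hence the total number of such triples in $G$ is $O(t)$. In words, \emph{extensions of triangles are globally a scarce resource once $t$ is small}.

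For the heavy phase I would sample a random set $R \subseteq V$ of size $\tilde{O}(n/\Delta)$ so that every heavy vertex has a neighbor in $R$ with high probability, and for each $r \in R$ perform a small number of random probes looking for either edges inside $N(r)$ (which produce triangles containing $r$) or edges one step further out that would close a $C_5$ through a triangle on a neighbor of $r$. Each hit is classified in $O(1)$ time: extension hits are charged to the global $C_5$ budget $O(t)$, while triangle hits directly witness the answer. Balancing the per-vertex sampling budget against the extension budget $t$ yields a heavy-phase cost of $\tilde{O}(n^{4/3} t^{1/3})$, and adding the light-phase cost gives the claimed total.

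The main obstacle I expect is producing a clean separation between triangle hits and extension hits in the sampling analysis while keeping the budget tight: since the two event types are indistinguishable at the probe level, one must either test each hit explicitly in $O(1)$ time or use an amortized counting argument charging extensions against $t$. A secondary subtlety is handling triangles with a mix of heavy and light vertices; these are caught by the light-phase scan, but only if the scan iterates over every vertex role so that no triangle slips through between the two regimes.
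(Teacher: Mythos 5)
Your high-level framing matches the paper's (sample roughly $n/\Delta$ vertices, look for a triangle edge in the two-hop ball, charge the search cost to $5$-cycles), but the two load-bearing components of your heavy phase are missing or wrong. First, the degree structure: after your light-vertex scan with fixed $\Delta=\Theta(\sqrt n)$, the surviving triangles have all three vertices of degree $>\sqrt n$ with \emph{no upper bound}, so the two-hop neighborhood of a sampled $r$ can contain $\Theta(n^2)$ candidate edges, and ``a small number of random probes'' cannot reliably locate the single triangle edge among them, nor is the per-vertex work bounded. The paper does the opposite of a low-degree cleanup: it guesses (by exponential search over $\Delta_i=2^i$) the \emph{maximum} degree $\Delta$ of a triangle vertex, samples $\tilde\Order(n/\Delta)$ vertices, and builds $L_1(v),L_2(v)$ only out of vertices of degree at most $\Delta$. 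This cap is what gives $|E(L_2(v),L_2(v))|\le n\Delta$ and a cost of at most $\Delta$ per common-parent check, and it is why $\Delta$ cancels to give $\tilde\Order(n^2+n^{4/3}t^{1/3})$; with your fixed threshold and unbounded heavy degrees there is no analogous cancellation.

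Second, your charging argument counts the wrong object. You bound triples (triangle, side, extension pair) by $O(t)$, but the algorithm's dominant cost is spent examining candidate structures that contain \emph{no} triangle: for each edge $uw$ inside the second neighborhood, one must inspect parents of $u$ and $w$ in $N(r)$, and each inspected pair of \emph{distinct} parents $s\neq s'$ is a $5$-cycle $r\text{--}s\text{--}u\text{--}w\text{--}s'\text{--}r$ whether or not any triangle exists --- this is \cref{claim:pxpy}, and it is what makes the work chargeable. Moreover the charge must be \emph{local}, to $t_v$ (the number of $5$-cycles through the sampled vertex), so that Jensen's inequality and $\Ex[t_v]=\Order(t/n)$ convert it into the $t^{1/3}$ dependence; a global $O(t)$ budget spent uniformly over $\tilde\Order(n/\Delta)$ sampled balls does not yield $n^{4/3}t^{1/3}$. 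Finally, the quantitative heart of the proof --- splitting second-neighborhood edges by whether $p_v(u)p_v(w)\le\tau$, bounding the light ones by $n\Delta\sqrt\tau$ and the heavy ones by $(t_v/\tau)\Delta$, and optimizing $\tau$ to get $\sum_{uw}\min(p_v(u),p_v(w))=\Order(n^{2/3}t_v^{1/3}\Delta)$ as in \cref{lem:min}, implemented by always probing the endpoint with fewer parents --- is absent from your proposal; ``balancing the per-vertex sampling budget against the extension budget'' asserts the bound rather than deriving it.
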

\noindent 
Note that, unlike the previous algorithm, this one is oblivious to $t$.

\subsection{A special focus on $C_{2k+1}$-free graphs}
    Cycles (using $C_k$ to denote the cycle of length $k$) are one of the most natural classes of graphs, so a natural question is about the running time of Triangle Detection in $C_{k}$-free graphs, for each $k$. 
    For the case of even cycles, $C_{2k}$ is bipartite, so the KST theorem can be employed to get an $\Order(n^{3-1/k})$ time algorithm. In fact, it is possible to do much better than that, because not only is the graph sparse, it follows from the Bondy-Simonovits theorem that a $C_{2k}$-free graph is $d$-degenerate for $d = \Order(n^{1/k})$~\cite{bondy1974cycles}. A graph is $d$-degenerate if every induced subgraph contains a vertex of degree less than $d$. The classic Chiba-Nishizeki algorithm solves Triangle Detection in $d$-degenerate graphs in $\Order(md) = \Order(n^{1+2/k})$ time~\cite{chiba1985arboricity}.\footnote{The original paper uses arboricity rather than degeneracy, but the two parameters are known to differ by at most a factor of two.} For odd cycles, the problem is more complex, as illustrated by complete bipartite graphs, which are $C_{2k+1}$-free but also very dense. We overcome this challenge by essentially restricting the search to graphs that have the form of a dense bipartite graph with a bounded degeneracy graph added to one of its parts. This requires a generalization of the Chiba-Nishizeki algorithm to graphs that are only partially degenerate. In addition, we employ a novel idea, an algorithmic adaptation of a supersaturation result for even cycles by Jiang and Yepremyan~\cite{jiang2020supersaturation}. Together, we manage to prove the following theorem. 

    \begin{restatable}{theorem}{odd}
        \label{thm:odd}
        For every $k \geq 2$, there is a combinatorial Triangle Detection algorithm for $C_{2k+1}$-free graphs that runs in $\tOrder( m + n^{1+2/k})$ time. 
    \end{restatable}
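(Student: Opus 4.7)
The plan is a three-phase algorithm: peel low-degree vertices, extract structure from the high-degree residual using $C_{2k+1}$-freeness, then apply a generalized Chiba--Nishizeki. Set $d := n^{1/k}$. In the \emph{peeling phase}, I repeatedly pick a vertex $v$ of current degree less than $d$, check each of the $\binom{\deg(v)}{2} \le d^2$ pairs of its neighbors for adjacency via an $O(m)$-time preprocessed hashed edge set, and either output a triangle or delete $v$. Since each vertex is processed at most once, this phase costs $\tOrder(m + n d^2) = \tOrder(m + n^{1+2/k})$. If the graph empties without finding a triangle, return NO.

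Otherwise we are left with a subgraph $G'$ of minimum degree at least $d$, so $|E(G')| \ge |V(G')| d / 2$, well above the Bondy--Simonovits extremal number $\mathrm{ex}(n, C_{2k}) = O(n^{1+1/k})$. I plan to apply an algorithmic adaptation of the Jiang--Yepremyan supersaturation theorem to find, in $\tOrder(|E(G')|)$ combinatorial time, a copy $C = c_1 \cdots c_{2k}$ of $C_{2k}$ inside $G'$. The key point is that supersaturation guarantees not just the existence of one such cycle but a \emph{large} family, which is what allows an algorithmic search via layered BFS to succeed in near-linear time rather than requiring a costly counting step.

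The presence of such an even cycle together with $C_{2k+1}$-freeness imposes strong structure on $G'$: for instance, a vertex $x$ outside $C$ cannot be adjacent to two cycle-adjacent vertices $c_i, c_{i+1}$, since combining the length-$2$ path $c_i x c_{i+1}$ with the length-$(2k-1)$ arc would close a forbidden $C_{2k+1}$. Combining such constraints across multiple copies of $C_{2k}$ (obtained by re-running the supersaturation search after perturbing the starting seed), I plan to extract a bipartition $(A, B)$ of almost all of $V(G')$ such that the cross-edges form a dense bipartite graph and the intra-part edges on one side form an $O(n^{1/k})$-degenerate graph. The degeneracy bound on that side is obtained via an auxiliary Bondy--Simonovits-style argument: the intra-part subgraph inherits $C_{2k}$-freeness from the overall $C_{2k+1}$-freeness together with the bipartite cross-structure (a $C_{2k}$ on one side could be rerouted through the other side to build a $C_{2k+1}$).

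Finally, in the \emph{triangle-detection phase}, I run a variant of Chiba--Nishizeki tailored to this ``bipartite plus one-sided degenerate'' structure. Any triangle in $G'$ must use at least one intra-part edge (bipartite graphs are triangle-free), and there are only $\tOrder(n^{1+1/k})$ such edges; for each I enumerate common neighbors along the one-sided degeneracy ordering in $O(n^{1/k})$ time, for a total of $\tOrder(n^{1+2/k})$. Summing the three phases yields $\tOrder(m + n^{1+2/k})$. The main obstacle I expect is the structural extraction: a single $C_{2k}$ gives only the weak constraint ``no two cycle-adjacent neighbors,'' which is not yet a bipartition, so the argument must combine constraints across the many cycles guaranteed by supersaturation while keeping the total cost near-linear. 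Making this work, together with implementing the algorithmic supersaturation itself as a combinatorial near-linear-time procedure rather than a pure counting argument, is the technical heart of the proof.
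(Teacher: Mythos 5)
There is a genuine gap, in fact two. First, the entire middle phase --- extracting a \emph{global} bipartition $(A,B)$ of the high-min-degree residual graph with dense cross-edges and an $\Order(n^{1/k})$-degenerate side --- is asserted rather than proved, and the route you sketch does not get there. A single $C_{2k}$ only forbids outside vertices from seeing two consecutive cycle vertices, and ``combining constraints across many cycles'' is not a construction; moreover your argument that one side inherits $C_{2k}$-freeness by ``rerouting through the other side'' requires two adjacent vertices of an intra-part $C_{2k}$ to have a common neighbor across the cut, which need not exist (indeed, by \cref{claim:2k} the $C_{2k+1}$-freeness \emph{forbids} such a common neighbor), so the degeneracy of that side is not established. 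Note also that a min-degree-$n^{1/k}$ residual graph is not necessarily above the Bondy--Simonovits threshold (it can even be $C_{2k}$-free), so supersaturation cannot be invoked globally. The paper never builds such a global decomposition: it processes BFS balls around individual vertices, uses the fact (\cref{lem:keevash}) that each layer $G[L_i]$, $i\le k$, is $\Order(k)$-degenerate, runs a generalized Chiba--Nishizeki (\cref{lem:generalized-chiba}) on consecutive layers, and amortizes the cost by deleting the processed ball and charging work to deleted edges and to the first thin layer.

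Second, even granting your decomposition, the final phase's running time is unjustified. For an intra-part edge $uv\subseteq A$, the potential third triangle vertex lies in the dense side $B$, and both $u$ and $v$ can have $\Theta(n)$ neighbors there; a degeneracy ordering of $G[A]$ bounds only the number of $A$-neighbors, so the common-neighbor test cannot be done in $\Order(n^{1/k})$ time, and a Chiba--Nishizeki-style bound degrades to $\Order(n^{1/k}\cdot e(A,B))$, which can be $n^{2+1/k}$. This is exactly the difficulty the paper's algorithmic supersaturation theorem (\cref{thm:supersaturation}, adapted from Jiang--Yepremyan) is designed to solve, and in a different way than you use it: it attaches to almost every edge between the last two BFS layers an \emph{explicit} $2k$-cycle through that edge, whereupon \cref{claim:2k} guarantees any triangle through the edge uses one of the $2k-2$ other cycle vertices, so the check costs $\Order(k)$ and the edge can then be deleted; the few uncovered edges are charged to the (thin) layer sizes. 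Your proposal uses supersaturation only to find cycles as seeds for the structural extraction, and therefore lacks the per-edge certificate mechanism that makes the dangerous common-neighbor tests constant-time. Without both the local ball-processing/charging scheme and this certificate trick, the claimed $\tOrder(m+n^{1+2/k})$ bound does not follow.
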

    \noindent
    Note that this running time matches the bound for $C_{2k}$-free graphs (using Chiba-Nishizeki), up to an additive $m$ term.

    A summary of the results so far is presented in \cref{table:results}.

\begin{center}
\begin{table}[h!]
\caption{Our results for Triangle Detection in $H$-free graphs.}
\label{table:results}
\footnotesize
\begin{tabular}{ |c|c|c|c| }
\hline
\textbf{Classification} & \textbf{Condition on $H$} & \textbf{Bound} & \textbf{Theorem} \\
\hline
\multirow{2}{12em}{Lower bound under the BMM Conjecture} 
& Not $3$-colorable & $n^{3-o(1)}$ & \labelcref{thm:lowerbounds} \\ 
\cline{2-4}
& Contains more than one triangle & $n^{3-o(1)}$ & \labelcref{thm:lowerbounds} \\ 
\hline
\multirow{4}{12em}{\vfill Subcubic upper bound} 
& $H = C_{2k+1}$ ($k \ge 2$) & $\tilde{O}(m + n^{1+2/k})$ & \labelcref{thm:odd}\\ 
\cline{2-4}
& Nicely colored and triangle-free & $\tilde{O}(n^{3-\frac{1}{2^{k-3}}})$ & \labelcref{thm:H-nice} \\ 
\cline{2-4}
& Nicely colored and contains a triangle  & $\tilde{O}(n^{3-\frac{1}{2^{k-4}}})$ & \labelcref{thm:H-nice} \\ 
\cline{2-4}
& \makecell{``Attached Triangle'' \\($s = \#$ removable vertices)} & $\tilde{O}(n^{3-\frac{1}{2^{k+s-1}}})$ &  \labelcref{thm:attached} \\
\hline
\multirow{2}{12em}{\vfill Subcubic upper bound for any $t < n^{k-\eps}$ ($t = \# H$)} 
& $H = C_5$ & $\tilde{O}(n^{2} + n^{4/3}t^{1/3})$ & \labelcref{thm:C5-sensitive}\\ 
\cline{2-4}
& Nicely colored and triangle-free & \makecell[l]{$\tilde{O}(n^{3-\frac{1}{2^{k-1}}} $\\$\quad +n^{3-\frac{k}{2^{k}-1}} \cdot t^{\frac{1}{2^{k}-1}})$} & \labelcref{thm:H-sensitive}\\

\cline{2-4}
& Nicely colored and contains a triangle & \makecell[l]{$\tilde{O}(n^{3-\frac{1}{2^{k-4}}} $\\$\quad +n^{3-\frac{k-3}{2^{k-3}-1}} \cdot t^{\frac{1}{2^{k-3}-1}})$} & \labelcref{thm:H-sensitive}\\
\hline
\end{tabular}
\end{table}
\end{center}

\subsection{Generalized Tur\'{a}n problem and Triangle Listing}

A related line of work has considered the purely combinatorial question of bounding the number of triangles in $H$-free graphs. This Tur\'{a}n-type question was initiated in a paper by Alon and Shikhelman~\cite{alon2016many}, and since its introduction, many papers on the subject have appeared (see~\cite{gerbner2025survey} for a recent survey). For every pattern $H$, let $ex(K_3,H)$ be the maximum number of triangles in an $n$-vertex $H$-free graph. An analogous result to our first lower bound class in \cref{thm:lowerbounds} follows from~\cite[Proposition 2.1]{alon2016many} and shows that $ex(K_3,H)$ is strongly subcubic if and only if $H$ is $3$-colorable. 
One might wonder if the proof is algorithmic, in the sense that it also provides an efficient way to find the triangles if $H$ is $3$-colorable. However, an algorithmic proof of this result cannot hold generally for all $3$-colorable patterns. That is because there are patterns that are $3$-colorable but do not admit a strongly subcubic combinatorial Triangle Detection algorithm under \cref{conj:bmm}. For example, if $H$ is the diamond graph (two triangles that share an edge), then clearly $ex(K_3,H) = \Order(n^2)$, whereas \cref{thm:lowerbounds} shows that Triangle Detection in diamond-free graphs requires $n^{3-o(1)}$ time under \cref{conj:bmm}. In the opposite direction, the algorithmic question may contribute to the mathematical one, as a combinatorial Triangle Detection algorithm for $H$-free graphs offers insight into the structure of triangles in $H$-free graphs, which may be used to prove upper bounds on $ex(K_3,H)$. 

A more relatable variant of Triangle Detection is the Triangle Listing problem, in which the goal is to produce a list of all the triangles in a graph. This problem generalizes not only Triangle Detection but also its weighted variants, such as Negative Triangle (an APSP-hard problem~\cite{williams2010subcubic}) and $0$-Triangle (a 3SUM-hard problem~\cite{vassilevska2009finding}). A primary objective in designing a listing algorithm is to ensure it is output-sensitive, meaning the running time is proportional to the number of triangles in the graph. Ideally, the running time should match the running time for detection when the output size is small and scale up to $\Theta(n^3)$ when the number of triangles reaches its maximum of $\Omega(n^3)$ (see e.g.,~\cite{bjorklund2014listing}). The problem of combinatorial Triangle Listing for general graphs is straightforward, as the trivial $\Order(n^3)$ time algorithm for detection can also list all the triangles in the same running time and is tight under \cref{conj:bmm}. In the setting of $H$-free graphs, the situation is less clear. Our lower bounds for detection in \cref{thm:lowerbounds} clearly hold for listing as well. For $H$ that does not fall into a lower bound class, the trivial $\Order(n^3)$ time listing algorithm seems wasteful because the output size is at most $ex(K_3,H)$, which is strongly subcubic by~\cite[Proposition 2.1]{alon2016many} (as $H$ is $3$-colorable). 

A general upper bound on $ex(K_3,H)$ that applies to every $3$-colorable $H$ is given by plugging a KST-type theorem for $3$-uniform hypergraphs~\cite[Theorem 1.5.7]{zhao2023graph} into the proof of~\cite[Proposition 2.1]{alon2016many}, resulting in $ex(K_3,H) = \Order(n^{3-1/k^2})$. For the special case where $H = C_{2k+1}$, it is established that $ex(K_3,C_{2k+1}) = \Order(ex(K_2,C_{2k}))$~\cite{gyHori2012maximum}, where $ex(K_2,C_{2k})$ is the maximum number of edges in a $C_{2k}$-free graph, bounded by $\Order(n^{1+1/k})$~\cite{bondy1974cycles}. A general lower bound by Kostochka et al.~\cite{kostochka2015turan} shows that for every $3$-colorable $H$, $ex(K_3,H) = \Omega(n^{3-\frac{3k-9}{e(H) - 3}})$, where $e(H)$ is the number of edges in $H$. In particular, for dense $3$-colorable patterns we get $ex(K_3,H) = n^{3-\Order(1/{k})}$. This gives a lower bound on the running time for listing, as the output size can be as large as $ex(K_3,H)$. 

Our Triangle Detection algorithms for $H$-free graphs generalize well to the listing setting, producing a list of all triangles in the same time it takes to detect a single one. The running time is independent of the output size because, as discussed, the output size in $H$-free graphs is bounded by $\Order(n^{3-1/k^2})$ (or $\Order(n^{1+1/k})$ if $H = C_{2k+1}$), whereas our algorithms run slower.

\begin{theorem}[Listing for nicely colored patterns]
\label{thm:listing-nice}
Let $H$ be a pattern that admits a nice $3$-coloring. Then there is a combinatorial algorithm that lists all triangles in an $H$-free graph, in time:
    \begin{equation*}
        \begin{cases}
            \tOrder(n^{3 - \frac{1}{2^{k-3}}}) & \text{ if $H$ is triangle-free,} \\
            \tOrder(n^{3 - \frac{1}{2^{k-4}}}) & \text{ if $H$ contains a triangle.}
        \end{cases}
    \end{equation*}
\end{theorem}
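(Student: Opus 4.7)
The plan is to observe that the detection algorithm of Theorem \ref{thm:H-nice} is already, in essence, a triangle-candidate enumerator: its embedding procedure identifies, for each choice of ``anchor'' derived from the nice coloring, a residual subgraph $B$ in which every edge necessarily closes a triangle of $G$. This is because a non-triangle edge in $B$, together with the embedded skeleton of $H$, would extend to a full copy of $H$, contradicting the $H$-freeness of $G$. The detection algorithm stops at the first edge of $B$; the listing algorithm scans all of $B$ and outputs one triangle per edge. In the sparse (low-degree) regime, the Chiba--Nishizeki-based routine that handles triangles already supports output-sensitive listing within the claimed bound.

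The per-anchor cost of constructing and scanning $B$ is unchanged, so the work per anchor is the same for listing as for detection. The additional output-dependent cost is bounded by the total number of triangles in an $H$-free graph, which by the discussion preceding Theorem \ref{thm:listing-nice} is at most $\Order(n^{3-1/k^2})$ and fits within the detection time budget. Duplicate triangles that arise when the same triangle lies in the residual subgraphs of several anchor placements are suppressed by a hash set, which adds at most polylogarithmic overhead per output triangle. The sparse-regime enumeration via Chiba--Nishizeki contributes at most $\tOrder(m \cdot d)$ plus an output term that is similarly absorbed.

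The main obstacle is \emph{coverage}: the detection algorithm needs only to find \emph{some} triangle with probability $1 - \Order(1/n)$, whereas listing must guarantee that \emph{every} triangle is produced. I would address this by enumerating anchor placements deterministically wherever possible, so that each triangle of $G$ lies inside the residual subgraph of at least one placement, and by amplifying any remaining random choices by a $\mathrm{polylog}(n)$ factor followed by a union bound over the at most $n^3$ triangles, ensuring that each one is captured by some embedding with high probability. A charging argument, using the fact that the surviving skeleton after the nice-coloring reduction is either empty or a triangle (and therefore contributes only a $\mathrm{poly}(k)$-bounded number of anchor placements incident to any fixed triangle), bounds the number of duplicate discoveries and hence the deduplication cost. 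All overheads are absorbed into the $\tOrder(\cdot)$ notation, so the listing algorithm matches the detection running time up to polylogarithmic factors.
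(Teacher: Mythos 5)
Your plan coincides with the paper's own proof: keep the recursive detection algorithm intact, modify the triangle-reporting phases (the colored low-degree cleanup and the base cases) to list every triangle they encounter rather than halting at the first one, and fix coverage exactly as you describe --- boost the per-triangle sampling success probability (a larger constant in $\ell$ suffices, since the failure probability is $n^{-\Theta(1)}$) and union-bound over the at most $\binom{n}{3}$ triangles so that every triangle survives into a subgraph where it is listed, with all output/deduplication overhead absorbed into $\tOrder(\cdot)$. The only caveat is that your structural justification (``every edge of the residual subgraph closes a triangle, else it would complete a copy of $H$'') describes the $C_5$ warm-up rather than the general recursive algorithm --- there the residual pattern $H'$ has more than two vertices, and triangles are recovered precisely in the cleanup/base-case steps --- but since your coverage and listing arguments rest on the amplified union bound and on output-sensitive enumeration around low-degree vertices, this does not change the substance of the proof.
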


\begin{theorem}[$H$-sensitive listing for nicely colored patterns]
\label{thm:listing-sensitive}
For every pattern $H$ that admits a nice coloring, there is a combinatorial algorithm that, given a graph $G$ and a value $t$ which upper bounds the number of copies of $H$ in $G$, lists all the triangles in the graph. The running time is given by:
\[
    \begin{cases}
        \tOrder\left( n^{3-\frac{1}{2^{k-1}}} + n^{3 - \frac{k}{2^k - 1}} \cdot t^{\frac{1}{2^k - 1}}\right)  & \text{if } H \text{ is triangle-free}, \\
        \tOrder\left(n^{3-\frac{1}{2^{k-4}}} + n^{3-\frac{k-3}{2^{k-3}-1}} \cdot t^{\frac{1}{2^{k-3}-1}}\right) & \text{if } H \text{ contains a triangle}.
    \end{cases}
\]
\end{theorem}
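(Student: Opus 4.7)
The plan is to adapt the $H$-sensitive detection algorithm of \cref{thm:H-sensitive} into a listing algorithm using the same transformation that converts \cref{thm:H-nice} into \cref{thm:listing-nice}. The algorithm of \cref{thm:H-sensitive} proceeds by sampling random partial embeddings of the nicely colored pattern $H$ (respectively, of its triangle-free core when $H$ contains a triangle) into $G$, and then searching for a triangle inside a restricted subgraph—the common-neighborhood region—determined by each embedding. I would remove the early-termination behavior and instead enumerate all triangles discovered in each such restricted region, maintaining a global hash table keyed by the sorted vertex triple to suppress duplicates.

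Correctness is essentially inherited: the analysis of \cref{thm:H-sensitive} establishes that for any fixed triangle $\Delta$ of $G$, the sampled embedding covers $\Delta$'s restricted region with probability at least $\Omega(1/\mathrm{poly}\log n)$, so repeating the sampling $\Order(\log n)$ times and applying a union bound over the at most $n^3$ triangles of $G$ guarantees that every triangle is output with probability $1-\Order(1/n)$. Per-iteration, enumerating all triangles inside the restricted region rather than halting at the first incurs no asymptotic overhead: the detection algorithm already performs an $\Order(r^2)$-type scan over a bipartite region of size $r$, and triangle enumeration can be folded into this scan at constant overhead per reported triple.

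The delicate step is bounding the total output size, because unlike the $H$-free setting—where the Alon--Shikhelman bound $ex(K_3,H)=\tOrder(n^{3-1/k^2})$ and the $C_{2k+1}$ bound $\tOrder(n^{1+1/k})$ directly dominate the output—here $G$ may contain up to $t$ copies of $H$ and is not $H$-free. I would handle the two cases separately. When $H$ contains a triangle, every copy of $H$ contributes only $\Order(1)$ triangles, so on the subgraph $G'$ obtained by deleting one edge from each of the $t$ copies (found during the sampling phase) the listing of \cref{thm:listing-nice} applies and yields at most the $n^{3-1/2^{k-4}}$ term; the remaining triangles are charged to the $t$ deleted copies, matching the $t^{1/(2^{k-3}-1)}$ term. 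When $H$ is triangle-free, I would use a generalized Tur\'an / supersaturation inequality relating the triangle count to the $H$-count, interpolating between $ex(K_3,H)$ at $t=0$ and $\Order(n^3)$ at $t=\Omega(n^k)$, to match the stated $n^{3-k/(2^k-1)} t^{1/(2^k-1)}$ term.

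The main obstacle will be making the output-size accounting precise in the intermediate regime of $t$: one has to verify that the supersaturation inequality relating $\#K_3$ to $\#H$ in $G$ is exactly tight at the exponents of \cref{thm:H-sensitive}, or otherwise argue directly that the enumeration within each sampled restricted region, summed across iterations, is charged against either the detection budget or a distinct copy of $H$ covered by the sampling. Once this charging is in place, the listing algorithm runs in the same time as the corresponding detection algorithm, completing the proof.
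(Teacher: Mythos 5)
There is a genuine gap, and it sits exactly where you flag your ``main obstacle.'' No output-size bound via supersaturation is needed, and the paper never proves one: in the recursive algorithm, triangles are emitted only inside the low-degree cleanup (\cref{claim:lowdeg}) and the base cases (\cref{claim:sensitive-base-case}), i.e., during scans over bichromatic neighbor pairs whose cost is already charged in the detection-time analysis of \cref{thm:H-sensitive}. Listing every edge found in those scans instead of halting costs nothing extra, so the total number of reported triangles is automatically bounded by the (unchanged) running time --- your own observation that enumeration ``can be folded into the scan at constant overhead per triple'' already implies this, so the separate global accounting you then attempt is a red herring. Your proposed resolutions of it do not stand on their own: for triangle-free $H$ you invoke an unproven supersaturation inequality whose exponents would have to coincide exactly with exponents that arise from optimizing the degree thresholds $\Delta_j$ (there is no reason to expect such a combinatorial bound, and you do not supply one), and for $H$ containing a triangle you switch to a different algorithm (``delete one edge from each of the $t$ copies found during sampling'') that is neither part of \cref{thm:H-sensitive} nor justified --- a single deleted edge can lie in up to $n$ triangles, so ``charging the remaining triangles to the $t$ deleted copies'' does not yield the $t^{1/(2^{k-3}-1)}$ term.

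Two further points. First, the ingredient the paper actually adds is missing from your write-up: for listing one must argue that \emph{every} triangle simultaneously survives to a base case with a valid copy-count bound, which requires that each triangle has $\Omega(\Delta_k)$ $H$-light neighbors in the relevant color class (by \cref{claim:H-heavy}), so that the sample of $\Theta((n/\Delta_k)\log n)$ vertices hits an $H$-light neighbor of every triangle; one then boosts the per-triangle, per-level failure probability to, say, $1/n^4$ and union-bounds over the $O(n^3)$ triangles. Second, your probability bookkeeping is off: the detection analysis gives per-triangle success $1-n^{-\Theta(1)}$, not $\Omega(1/\mathrm{poly}\log n)$, and with only $\Omega(1/\mathrm{poly}\log n)$ success, $O(\log n)$ repetitions would not drive the failure probability low enough to survive a union bound over $n^3$ triangles. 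With the $H$-light argument and the correct boosting in place, the rest of your plan (list inside the restricted regions, deduplicate if desired) matches the paper's adaptation.
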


\begin{theorem}[Listing for patterns with an attached triangle]
    \label{thm:listing-attached}
    Suppose that $H$ satisfies the following property. There is a $3$-coloring of $H$, such that after removing $s$ vertices with monochromatic neighborhoods until no such vertices remain, the leftover pattern consists of a triangle $xyz$ and a subpattern $H'$, with $xy \in E(H')$ and $z \notin V(H')$, and $H'$ is nicely colored. 
    Then there is a combinatorial algorithm that lists all triangles in an $H$-free graph in time $\tOrder\left(n^{3-\frac{1}{2^{k + s - 1}}}\right)$. The algorithm succeeds with probability $1-\Order(1/n)$. 
\end{theorem}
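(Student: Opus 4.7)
The plan is to derive Theorem~\ref{thm:listing-attached} as the listing counterpart of the detection algorithm behind Theorem~\ref{thm:attached}, in precisely the same manner that Theorem~\ref{thm:listing-nice} extends Theorem~\ref{thm:H-nice}. The central observation enabling this transition at no extra cost is that every $H$-free graph contains at most $ex(K_3, H) = \Order(n^{3-1/k^2})$ triangles (Alon--Shikhelman together with the hypergraph KST bound quoted in the preceding subsection), a quantity strictly below the target running time $\tOrder(n^{3 - 1/2^{k+s-1}})$. Hence the output can never dominate the cost: the only real task is to ensure that every triangle of $G$ is produced at least once, and then deduplicate via a hash set with polylogarithmic overhead per triangle.

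First, I would replay the outer reduction of Theorem~\ref{thm:attached}: fix the promised $3$-coloring of $H$, and iteratively guess in $G$ the image of each of the $s$ vertices whose neighborhood in $H$ is monochromatic, recursing on the residual colored instance. This step is output-agnostic, introduces only an $n^s$ multiplicative factor already absorbed into the exponent $1/2^{k+s-1}$, and covers every triangle of $G$ since every triangle is consistent with at least one sequence of guesses. After these deletions the forbidden pattern is the attached triangle $xyz$ together with a nicely colored $H'$, with $xy \in E(H')$ and $z \notin V(H')$.

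For the residual problem I would enumerate every candidate $z' \in V(G)$ for the image of $z$ and invoke the listing algorithm of Theorem~\ref{thm:listing-nice} for the nicely colored pattern $H'$ on the induced subgraph $G[N(z')]$. Because the ambient graph is $H$-free and $V(H') \cup \{z\} \subseteq V(H)$, the subgraph $G[N(z')]$ is itself $H'$-free, so the hypothesis of Theorem~\ref{thm:listing-nice} is satisfied. Every triangle of $G[N(z')]$ together with $z'$ is a triangle of $G$ incident to $z'$, and conversely every triangle of $G$ is exposed in this way through at least one of its three vertices.

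The main obstacle will be the running-time analysis. Summing the $H'$-listing cost over the $n^{s+1}$ guessed vertices requires the same high-degree versus low-degree threshold balancing that appears in Theorem~\ref{thm:attached}: vertices with large neighborhood are too scarce to be expensive, while low-degree vertices induce neighborhoods that are small enough for the inner listing call to be cheap. The only new ingredient compared with the detection proof is the per-call output term, which is globally bounded by the $ex(K_3, H)$ estimate above and therefore absorbed into the stated bound. Completeness with probability $1 - \Order(1/n)$ is inherited from the detection algorithm via a union bound over the at most $n^3$ triangles of $G$ against the per-triangle success probability delivered by the embedding step.
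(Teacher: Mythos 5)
There are two genuine gaps, and together they break the proposed route. First, the reduction to \cref{thm:listing-nice} rests on a false correspondence: a triangle of $G$ through a vertex $z'$ corresponds to an \emph{edge} inside $N(z')$, not to a triangle of $G[N(z')]$. A triangle of $G[N(z')]$ together with $z'$ forms a $K_4$, and conversely a triangle $uvw$ of $G$ lies inside $G[N(z')]$ only if $z'$ is a common neighbor of all three of its vertices, which need not exist for any $z'$ (in particular not for $z'\in\{u,v,w\}$). So invoking the triangle-listing algorithm for nicely colored $H'$ on $G[N(z')]$ neither lists the triangles through $z'$ nor covers every triangle of $G$. What you would actually need is to list the edges of the appropriate colors inside each neighborhood $N(z')$, and doing that for all $z'$ is exactly the original problem (naively cubic, essentially BMM); the colored $H'$-freeness of $G[N(z')]$ gives no evident savings for edge listing.

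Second, the cost accounting does not hold up. Enumerating all $n^s$ images of the $s$ removable vertices and then all $n$ candidates for $z$, at a per-instance cost of at least $n^2$, already costs at least $n^{s+3}$; the exponent $\frac{1}{2^{k+s-1}}$ cannot ``absorb'' a polynomial factor $n^{s}$. In the paper the $s$ vertices are not enumerated: they are embedded via low-degree cleanup plus random sampling with thresholds $\Delta_j = n^{1-\frac{1}{2^{k+j-1}}}$, so that only $\tOrder\bigl(n^{s}/\prod_j \Delta_j\bigr)$ subgraphs are produced -- this balancing is precisely where $2^{k+s-1}$ comes from. Likewise, the claim that ``vertices with large neighborhoods are too scarce to be expensive'' is unsupported: the residual colored graphs can be dense with all degrees $\Theta(n)$ (complete bipartite graphs, for instance), so summing a per-neighborhood subcubic listing cost over all vertices can reach $n^{4-\varepsilon}$. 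The paper's treatment of the leftover pattern is genuinely different: it never fixes the image of $z$. Instead (\cref{lem:coloredH}), for each edge $uv$ colored like $xy$ it samples $(k-3)$-subsets; if a colored copy of $H'$ through $uv$ is found, every triangle through $uv$ must have its third vertex inside that copy (otherwise a colored copy of $H$ would arise), so those triangles are listed in $\Order(1)$ time and $uv$ is deleted, and with high probability the surviving graph has at most $n^2\tau$ colored copies of $H'$, after which the $H'$-\emph{sensitive} listing algorithm (\cref{thm:listing-sensitive}) finishes the job. Your bound on the output size via $ex(K_3,H)$ is correct but does not address either issue.
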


\begin{restatable}[Listing for odd cycles]{theorem}{listingodd}
\label{thm:listing-odd}
For every $k \geq 2$, there is a combinatorial algorithm that lists all triangles in a $C_{2k+1}$-free graph in $\tOrder( m + n^{1+2/k})$ time. 
\end{restatable}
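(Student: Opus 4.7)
The plan is to convert the Triangle Detection algorithm of \cref{thm:odd} into a listing algorithm with essentially no overhead. The crucial combinatorial fact is a bound on the output size: by Győri's result, $ex(K_3, C_{2k+1}) = \Order(ex(K_2, C_{2k}))$, and by Bondy-Simonovits the latter is $\Order(n^{1+1/k})$. Hence any $C_{2k+1}$-free graph contains only $\Order(n^{1+1/k})$ triangles, and merely writing them all out costs strictly less than the target $\tOrder(m+n^{1+2/k})$. This means the whole conversion reduces to ensuring that the detection procedure actually visits every triangle rather than just one.

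I would first retain the structural reduction that drives \cref{thm:odd}: after a preprocessing Chiba-Nishizeki-type pass that handles the sparse part in $\Order(md)$ time, the remaining work is confined to a substructure consisting of a dense bipartite graph with a bounded-degeneracy subgraph glued onto one side. For listing it is important to verify that this decomposition does not \emph{miss} triangles. The argument is that a triangle using vertices on both sides of a dense bipartite region would, together with long paths through the opposite side, close into an odd cycle of the forbidden length; hence every triangle must already lie inside the degenerate side. I then replace the ``halt upon first triangle'' rule in the generalization of Chiba-Nishizeki with ``emit and continue.'' That routine already enumerates candidate triangles vertex by vertex in a degeneracy ordering, so its analysis is unchanged up to an additive term equal to the number of emitted triangles, which is $\Order(n^{1+1/k})$. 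The Jiang-Yepremyan supersaturation step is used only as a structural certificate that justifies the decomposition, so it carries over unchanged.

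The main obstacle I foresee is avoiding duplicate reports, since the same triangle may be reachable through several branches of the decomposition (for instance, through different choices of the ``dense bipartite'' witness, or via both high- and low-degree vertex treatments). A canonical anchor rule --- for example, attributing each triangle to its lowest-indexed vertex in the degeneracy ordering, and only emitting the triangle from that branch --- ensures that each triangle is output exactly once without additional asymptotic cost. A fallback is hash-based deduplication on the sorted vertex triple, which adds only an $\Order(\log n)$ factor per emitted triangle and therefore still fits inside $\tOrder(m+n^{1+2/k})$.
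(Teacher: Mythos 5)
There is a genuine gap: your conversion rests on a structural claim that is false, and on a misreading of how the algorithm of \cref{thm:odd} actually works. The algorithm is not a one-shot decomposition into ``a degenerate part plus a dense bipartite part'' followed by a Chiba--Nishizeki pass on the degenerate side; it is an iterative peeling procedure that grows a BFS ball around a vertex until a thin layer $L_h$, lists/detects triangles only among the layer pairs $L_i\cup L_{i+1}$ with $i\le h-2$ (via \cref{lem:generalized-chiba}, using \cref{lem:keevash} for the $O(k)$-degeneracy of each layer), deletes $L_0\cup\cdots\cup L_{h-2}$, and then must also delete most edges between the last two layers to make progress. Your claim that ``every triangle must already lie inside the degenerate side,'' because a triangle straddling the dense bipartite region would close a forbidden odd cycle, is not true: a triangle with one vertex in $L_{h-1}$ and two in $L_h$ (or an edge inside $L_h$) creates no $C_{2k+1}$ by itself. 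The forbidden-cycle argument (\cref{claim:2k}) only applies to an edge that is known to lie on a $2k$-cycle, and not every edge between the last two layers does. This is exactly why the Jiang--Yepremyan supersaturation step cannot be waved away as ``only a structural certificate'': it is an algorithmic subroutine that produces, for all but $O(k^2(|L_{h-1}|+|L_h|))$ of these edges, an explicit $2k$-cycle witness, so that each such edge can be checked for triangles in $O(k)$ time and deleted only if it is triangle-free. Without it you can neither safely delete those edges (listing correctness) nor charge the work of repeated phases (running time). Note also that triangles confined to $L_{h-1}\cup L_h$ are deliberately \emph{not} listed in the current phase; correctness is argued by showing no triangle edge is ever deleted, so such triangles survive to a later phase or to a low-degree cleanup, rather than by showing each phase exhausts its ball.

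By contrast, the part of your plan that does match the paper is the ``emit and continue'' modification of the generalized Chiba--Nishizeki routine, which indeed lists all triangles with an edge in $S$ within the same $O(d\cdot e(S,V\setminus S))$ bound; the output-size bound $ex(K_3,C_{2k+1})=O(n^{1+1/k})$ is true but not needed for this, since each reported triangle is charged to a neighbor pair already being enumerated. Your deduplication machinery is also unnecessary in the paper's scheme: a triangle listed in a phase is incident to a layer in $L_0\cup\cdots\cup L_{h-2}$, and those vertices are deleted immediately afterward, so it cannot be reported again. The missing content in your write-up is precisely the treatment of triangles touching the last two layers, which is where the supersaturation-based edge deletion and \cref{claim:2k} do the real work.
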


In the case of nicely colored patterns, we observe a significant gap between $ex(K_3,H) = \Order(n^{3-1/k^2})$ and the stated running time. It is an interesting question whether the exponential dependence in our running time could be replaced by, e.g., a polynomial dependence $\Order(n^{3-1/k^{O(1)}})$. In the case of odd cycles, we see a smaller gap, a factor roughly $n^{1/k}$ between the combinatorial bound and the algorithmic running time. It seems likely that some algorithmic overhead is necessary for finding the triangles, even in $C_{2k+1}$-free graphs.

\paragraph{Paper organization.}
The paper is organized as follows. We begin with a high-level presentation of our core ideas in \cref{sec:overview}, followed by a discussion of related work in \cref{sec:related}. We then present the necessary preliminaries for our formal proofs in \cref{sec:preliminaries}. Our main algorithm for nicely colored patterns is presented in \cref{sec:H-nice}, proving \cref{thm:H-nice}. In \cref{sec:sensitive}, we extend this result to an $H$-sensitive algorithm and also present our specialized $C_5$-sensitive algorithm, proving \cref{thm:H-sensitive} and \cref{thm:C5-sensitive}. We then present our specialized algorithm for $C_{2k+1}$-free graphs in \cref{sec:odd}, proving \cref{thm:odd}. The proofs of \cref{prop:3k} and \cref{prop:smallpatterns} are given in \cref{sec:propositions}. Finally, we conclude with a summary and open questions in \cref{sec:conclusions}.

    \section{Technical Overview}
        \label{sec:overview}
In this section, we present the core ideas of the algorithms in this paper. We begin with the main result, which is the embedding algorithm.

\subsection{The embedding approach}
We begin with a simple $\tOrder(n^{7/3})$-time algorithm for $C_5$-free graphs, demonstrating the techniques and proving \cref{prop:C5-warmup}. 
\begin{proof}[Proof of \cref{prop:C5-warmup}]
    Let $1 \leq \Delta \leq n$ be a degree threshold to be set later. For every vertex with a degree smaller than $\Delta$, we can check if it participates in a triangle in $\Order(\Delta^2)$ time by examining all pairs of its neighbors. The algorithm begins by removing all vertices of degree less than $\Delta$ in a total of $\Order(n \Delta^2)$ time, until the remaining graph has a minimum degree of at least $\Delta$. Once the graph has a high minimum degree, a random sample of $\Theta(\frac{n}{\Delta} \log n)$ vertices will hit the neighborhood of a triangle vertex (if one exists) with high probability. We take such a sample and check whether any sampled vertex is itself a triangle vertex. This takes $\Order(n^2)$ per vertex for a total of $\tOrder(\frac{n^3}{\Delta})$ time. If no triangle is found, we proceed by processing the radius-$2$ balls around each of the sampled vertices, for a total of $\tOrder(\frac{n^3}{\Delta})$ time (as the number of edges in each ball could be as high as $\Theta(n^2)$). Observe that one of those balls contains the triangle with high probability. Now suppose that $v$ is a sampled vertex whose ball contains a triangle $xyz$, and let $N_1(v)$ and $N_2(v)$ denote the vertices at distances $1$ and $2$ from $v$, respectively. Since $v$ is not in a triangle (guaranteed above), there must be exactly one triangle vertex, say $x$, in $N_1(v)$, while the other two vertices, $y$ and $z$, are in $N_2(v)$. The approach is to search for such a triangle by examining every edge in $N_2(v)$ and a vertex in $N_1(v)$. Naively, this could take $\Order(n^3)$ time for each ball. 

    To exploit the $C_5$-free property, observe that for any edge $uw$ within $N_2(v)$, if the ``parents'' of its endpoints (i.e., neighbors of $u$ and $w$ in the upper layer $N_1(v)$) $p_w, p_u \in N_1(v)$ are distinct, a $C_5$ is formed with $v$. Since the graph is $C_5$-free, any two parents of $u$ and $w$ must be identical. This implies that $u$ and $w$ share a common parent, say $p \in N_1(v)$, which forms a triangle $puw$. Therefore, a triangle exists if and only if there is an edge within $N_2(v)$. This can be verified in $\Order(n^2)$ time per ball, for a total of $\tOrder(n^3/\Delta)$ time. The overall running time is $\tOrder(n \Delta^2 + n^3/\Delta)$, which is optimized to $\tOrder(n^{7/3})$ by setting $\Delta = n^{2/3}$.
\end{proof}

This algorithm for $C_5$-free graphs works by attempting to embed a $C_5$ in a $C_5$-free graph $G$. It gains a speed-up by restricting the search to a region of the graph where: (1) a triangle edge is highly likely to exist, and (2) any edge not part of a triangle would necessarily form a $C_5$. The goal is to generalize this approach to work for general patterns, but it is unclear what other patterns can be embedded. Hence, we will provide a general framework for this embedding approach, which will guide us in characterizing embeddable patterns. As mentioned earlier, we identify a broad class of patterns amenable to this embedding approach: those that admit a nice coloring, whose definition is restated below. 
\nice*

Suppose that a graph $G$ contains a triangle $xyz$, and suppose for simplicity that it is the only triangle in the graph (in \cref{sec:lowerbounds} we will see that we can assume so without loss of generality).
An embedding algorithm takes a pattern $H$ and embeds its vertices one at a time in $G$. At stage $i$, for $i = 1,2,\ldots,k$, a vertex $h_i \in V(H)$ is mapped to a vertex $v_i \in V(G)$. 
The basic requirements for an embedding algorithm are as follows:
\begin{itemize}
    \item For each $i \in [k]$, there exists some $t_i \in \set{x,y,z}$ such that $v_i \in N(t_i)$. This ensures that the area surrounding the embedded pattern contains the triangle. 
    \item For each $i \in [k]$, $v_i$ lies in the common neighborhood of all the previously embedded neighbors of $h_i$. Formally, 
    \[
        \forall i \in [k]: \quad v_i \in \bigcap\limits_{\substack{j<i \text{ such that}\\ h_i h_j \in E(H)}} N(v_j).
    \]
    This ensures that the embedding respects the edges of $H$. 
\end{itemize}
An important observation is that for every $i \in [k]$, $t_i$ must be unique, i.e., $v_i$ can only be adjacent to a single vertex in $\set{x,y,z}$. That is because otherwise, there would be another triangle in $G$ containing $v_i$ and two vertices in $\set{x,y,z}$. Moreover, the following requirement from the algorithm follows: For every neighbor $h_j$ of $h_i$, we must have $t_i \neq t_j$. That is because $v_j$ and $v_i$ are adjacent (by the second requirement), so if $t_i = t_j$, there would be another triangle $v_i v_j t_i$ in $G$. This implies that the patterns that the algorithm can embed must be $3$-colorable: let the color of $h_i$ be $t_i$ for every $i \in [k]$.

Note that there is no guarantee that the intersection set from the second bullet above is non-empty. If it becomes empty at some stage, then the embedding algorithm gets stuck. 
To have this guarantee, we add another requirement from the embedding algorithm. The requirement is that for each $i \in [k]$ and every two neighbors of $h_i$, $h_j$ and $h_\ell$, where $j,\ell < i$, the triangle neighbors of $v_j$ and $v_\ell$ will be the same. Formally:
\[
    \forall i \in [k]: \quad \forall \substack{j,\ell < i \\ h_j, h_\ell \in N_H(h_i)}: \quad t_j = t_\ell.
\]
This ensures that the set $N(v_j) \cap N(v_\ell)$ is non-empty, as it contains a triangle vertex $t_j = t_\ell$. In particular, the intersection set from the second bullet is non-empty because it contains a unique triangle vertex. Note that the $3$-coloring of $H$ we now get is a nice coloring: Starting from the last vertex in the embedding, every vertex has a monochromatic neighborhood in $H$. 

\paragraph{An algorithm for nicely colored patterns}
We are now ready to present the embedding algorithm from \cref{thm:H-nice} for patterns that are nicely colored. 
Suppose that $H$ is nicely colored and that $G$ is $3$-colored (in \cref{sec:lowerbounds} we will see that this assumption is without loss of generality), so in particular, $G$ is free from nicely colored copies of $H$. The algorithm maintains the invariant of a high min-degree to every color class. Namely, suppose that some vertex (e.g., a red vertex) contains fewer than $\Delta$ neighbors in some other color class (e.g., the blue vertices). In that case, it can be verified in $\Order(n \Delta)$ time that the vertex does not participate in a triangle, resulting in a total of $\Order(n^2 \Delta)$ time for a low-degree cleanup. In each stage, the vertex of $H$ whose neighborhood is monochromatic is embedded by taking a random sample of $\tOrder(n/\Delta)$ vertices from its color class in $G$. Then, the color class of $G$ corresponding to its monochromatic neighborhood is restricted to the neighborhoods of the sampled vertices (creating $\Order(n/\Delta)$ different subgraphs of $G$). Because of high minimum degrees, at least one sampled vertex is guaranteed to hit the neighborhood of a triangle vertex, so the resulting subgraph still contains the triangle. As the process continues, the color classes of $G$ are restricted to neighborhoods of embedded vertices. The triangle is found when it becomes incident to a low-degree vertex, which will surely happen once a base case is reached. 
We remark that after each stage, a new low-degree cleanup needs to be applied to every subgraph. Applying this cleanup naively to each of the $\tOrder(n/\Delta)$ new subgraphs would take cubic $\tOrder( \frac{n}{\Delta} \cdot n^2 \Delta)$ time. Instead, a decreasing sequence of thresholds $\Delta_k > \Delta_{k-1} > \cdots > \Delta_1$ is introduced, and a different threshold is used in each recursive level. The final running time is obtained by optimizing them.  

\paragraph{The case of an attached triangle.} Let us now explain the algorithm from \cref{thm:attached}. The idea is to first embed the vertices with monochromatic neighborhoods, as in the case of nicely colored patterns. Then, we reach the hard-core of the pattern: a nicely colored subpattern with an attached triangle on one of its edges. For this pattern, the key idea is to sparsify the graph by reducing the number of copies of the subpattern $H'$. Namely, every edge that participates in many colored copies of $H'$ is removed. Once the number of copies of $H'$ becomes polynomially smaller than $n^{|V(H')|}$, we apply an $H'$-sensitive algorithm as a subroutine to the remaining graphs.

\subsection{The approach for $H$-sensitive algorithms.}
It is possible to generalize the embedding algorithm to be $H$-sensitive by employing the following idea: In a sample of $\tOrder(n/\Delta)$ vertices in a graph with high-minimum degree and at most $t$ copies of $H$, one of the sampled vertices is an \emph{$H$-light} neighbor of the triangle with high probability. That is, a neighbor of the triangle that participates in at most $\Order(t/\Delta)$ copies of $H$. This ensures that the number of copies of the forbidden pattern reduces by a factor of $\Delta$ at each step. When a base case is reached, the number of copies of a very basic pattern (e.g., a single vertex) is small enough for the base case to be solved efficiently. For instance, if the number of vertices of a certain color is small, triangles incident to them can be found quickly.

\paragraph{A specialized $C_5$-sensitive algorithm.}
We now explain how to improve the running time of the previously discussed $\tOrder(n^{7/3})$-time algorithm for $C_5$-free graphs to $\tOrder(n^2)$ by changing a single component: Instead of cleaning up low-degree vertices and then focusing on the remaining graph that has high min-degree, we guess the maximum degree of a vertex in a triangle, and then ignore vertices with higher degrees in $G$. The guess can be implemented by bucketing, i.e., trying a sequence of exponentially increasing guesses. We then follow the same steps as in the $\Order(n^{7/3})$ time algorithm: Sample $\tOrder(n/\Delta)$ vertices and search for triangles in the $2$-hop ball around each sampled vertex. However, instead of constructing a $2$-hop ball around each vertex, we only process vertices of degree at most $\Delta$. The time to process each ball becomes $\Order(n \Delta)$, for a total of $\tOrder( \frac{n}{\Delta} \cdot n \Delta ) = \tOrder(n^2)$ time. To obtain a $C_5$-sensitive algorithm, we relate the search time for a triangle inside a random $2$-hop ball to the number of $5$-cycles in that ball. Using the fact that the expected number of $5$-cycles incident to a random vertex is $\Order(t/n)$, we get the bound of $\tOrder(n^2 + n^{4/3} t^{1/3})$.

\subsection{The approach for $C_{2k+1}$-free graphs.}
Let us now explain the high-level overview of the algorithm for $C_{2k+1}$-free graphs. Unlike the previous algorithms, this one is deterministic and employs a significantly different approach. The idea is to process the layers of a $k$-hop ball around a vertex as long as the layers are expanding. The algorithm searches for triangles in the first $k-1$ layers using a generalized Chiba-Nishizeki algorithm. If a triangle is not found, then most of the ball is deleted (except for the last two layers, which may still contain a triangle), ensuring a bounded number of phases before we run out of vertices. This approach exploits the structure of $C_{2k+1}$-free graphs using three main components:
\begin{itemize}
    \item A lemma stating that the subgraph induced by any single layer in the ball is $O(k)$-degenerate.
    \item A generalization of the Chiba-Nishizeki algorithm that can find triangles with at least one edge in a specified $O(k)$-degenerate subgraph. This algorithm is useful for finding a triangle with an edge that lies within a layer.
    \item An efficient algorithm that identifies a large fraction of edges between the last two layers of the ball that participate in a $2k$-cycle, and for each such edge, it finds a $2k$-cycle containing it.
\end{itemize}
The last component is crucial: An edge that participates in a $2k$-cycle cannot form a triangle with a vertex outside that cycle without creating a $(2k+1)$-cycle. Thus, to check such an edge for triangles, one only needs to inspect the adjacencies to the other $2k-2$ vertices of a known $2k$-cycle containing it, which takes $\Order(k) = \Order(1)$ time.

    \section{Related Works}
        \label{sec:related}
        \subsection{Complexity dichotomies in $H$-free graphs}
A large body of research is devoted to proving \emph{dichotomy theorems} for $H$-free graphs: showing a separation between patterns for which a problem is ``hard'' in $H$-free graphs and patterns for which the problem becomes ``easy''. The precise definitions of hard and easy may vary, but most of this research uses coarse notions of hardness, such as NP-completeness versus polynomial-time. For example, such theorems have been proven for Max-Cut, Maximum Independent Set, and List-Coloring~\cite{kaminski2012max,alekseev1992complexity,golovach2012coloring}. An almost complete dichotomy also exists for Subgraph Isomorphism~\cite{bodlaender2020subgraph}.

Interestingly, a comprehensive work by Johnson et al.~\cite{johnson2025complexity} provided a meta-theorem that captures many problems for which such a dichotomy exists, using a framework applicable in different complexity regimes. Roughly speaking, the meta-theorem states that if a problem (e.g., Maximum Independent Set) is easy on bounded-treewidth graphs, hard on subcubic graphs (graphs of degree at most $3$), and hard on edge subdivisions of subcubic graphs, then a complete dichotomy exists between patterns for which the problem is easy and patterns for which it is hard in $H$-free graphs. Their framework is applied to various problems, primarily in the coarse complexity regimes. However, the paper also demonstrates its generality by applying it to problems in $P$. Namely, they consider the Diameter and Radius problems, showing a separation between patterns for which the problem is solvable in near-linear time and patterns for which the problem does not admit a strongly subquadratic algorithm.

Our work uses the hard and easy notions of cubic and strongly subcubic time, respectively. Under these notions, Triangle Detection does not fall into their framework because the problem is solvable in linear time on subcubic graphs, which makes it easy. One may wonder about the status of the NP-complete and parameterized generalizations of Triangle Detection, i.e., Maximum Clique and $k$-Clique. For Maximum Clique, the problem in $H$-free graphs is trivially solvable in polynomial time for any fixed $H$, as the largest clique in an $H$-free graph has size less than $|H|$. For $k$-Clique, one could seek an FPT/W[1] dichotomy between patterns. This problem is interesting only when $|H| > k$, but it appears not to have been considered in the literature yet.

\paragraph{The induced case.} A similar and equally important question, which we do not address in this work, concerns the complexity of Triangle Detection in \emph{induced} $H$-free graphs. Every $H$-free graph is also an induced $H$-free graph, but the opposite does not necessarily hold. This implies that dichotomies in the two settings differ; for example, a pattern $H$ could exist for which the problem is hard on induced $H$-free graphs but easy in $H$-free graphs. The induced setting also behaves more nicely with respect to graph inversion because a graph $G$ is induced $H$-free if and only if its complement $\bar{G}$ is $\bar{H}$-free (a similar behavior does not hold in the $H$-free setting). Inverting the graph is useful when considering the generalizations of Triangle Detection, like $k$-Clique and Maximum Clique, because of the duality between these problems and their Independent Set counterparts. The Maximum Independent Set problem on induced $H$-free graphs is very well-studied, and a recent work established a dichotomy for Maximum Independent Set between patterns for which the problem is NP-complete and those for which it is in quasipolynomial time~\cite{gartland2024maximum}. Whether this extends to a P/NP dichotomy remains a major open question. In the parameterized setting, one paper has made progress toward an FPT/W[1] dichotomy for the $k$-Independent-Set problem~\cite{bonnet2020parameterized}.

\subsection{Distance oracles and short cycle removal}
\paragraph{The approximate distance oracle approach.} An alternative approach for graphs that are $C_{2\ell+1}$-free for every $\ell \in [2, k]$ is to use approximate distance oracles: a data structure that upon a query $(u,v)$ returns an approximate distance $\mathrm{dist}(u,v) \leq \tilde{d} \leq t \cdot \mathrm{dist}(u,v)$ for some \emph{stretch} $t \geq 1$. Suppose that $G$ is a tripartite graph with parts $A, B, C$, and we delete all edges in $E(B,C)$. A distance query for a deleted edge $uv \in E(B,C)$ can determine whether $uv$ participates in a triangle in $G$: $\mathrm{dist}(u,v) = 2$ if and only if $uv$ is part of a triangle. If $G$ is $C_{2\ell+1}$-free for all $\ell \in [2, k]$, then for non-triangles, $\mathrm{dist}(u,v)$ is an odd number strictly greater than $2k+1$. Hence, an approximate distance oracle with an appropriate stretch can distinguish between the YES and NO cases. For example, the Thorup-Zwick oracle~\cite{thorup2005approximate} with a stretch $t$ given by:
\[
t =
\begin{cases}
k/2 & k \text{ even} \\
(k+1)/2 & k \text{ odd}
\end{cases}
\]
distinguishes between these cases and runs in time $\Order(m n^{1/t})$. This improves upon our algorithm from \cref{thm:odd} only for odd $k$ and for graphs with $m = \Order(n^{1+\frac{2}{k(k+1)}})$ edges. Moreover, this technique does not address graphs that are only $C_{2k+1}$-free, whereas our approach in \cref{thm:odd} does.

\paragraph{Short cycle removal.} A paper by Abboud et al.~\cite{ABKZ22} shows that for $\Theta(\sqrt{n})$-regular graphs, results opposite to ours hold. Roughly speaking, they show a self-reduction that reduces the number of $k$-cycles from a maximum of $n^{k/2+1/2}$ to $\Order(n^{k/2+\gamma})$ for some $\gamma < 1/2$, without removing triangles. Furthermore, they show a self-reduction that removes all copies of $C_4$ from a graph with maximum degree $\Order(\sqrt{n})$ without removing any of its triangles. This powerful technique provides lower bounds for approximate distance oracles, $4$-Cycle Detection, and Triangle Detection in $C_4$-free graphs. Subsequent works~\cite{abboud2023stronger,jin2023removing} improved these results via $3$-SUM reductions that output worst-case graphs with fewer than the maximum number of $k$-cycles. This reveals a surprising dichotomy: While our algorithm from \cref{thm:H-sensitive} gains a significant speed-up on dense graphs with fewer than the maximum number of copies of a pattern $H$, in the sparse regime, worst-case instances do not necessarily have the maximum number of copies of the pattern. In fact, the result for $4$-cycles shows that worst-case instances of Triangle Detection can even have fewer than the \emph{average} number of $C_4$'s in a random $\sqrt{n}$-regular graph.

The current best lower bound for the running time of approximate distance oracles is from~\cite{abboud2023stronger}, but the stretch they obtain is still a factor of $2$ shy of the stretch from the Thorup-Zwick oracle for the same running time. One might try to improve upon the self-reduction technique of~\cite{ABKZ22}, for example, by picking different density regimes or by attempting to remove only short odd cycles instead of all short cycles. The algorithm from \cref{thm:odd} places a limit on what can be proven with such techniques, as it runs faster than the Thorup-Zwick oracle for almost every $m$ and $k$.

\subsection{Additional works}
\paragraph{The paired pattern problem} 
Dalirrooyfard et al.~\cite{dalirrooyfard2022induced} considered a paired pattern problem that asks to detect whether a graph contains at least one of two patterns. For example, the $\set{K_3,H}$-Detection problem asks to detect whether a graph contains a triangle or a copy of $H$. An algorithm for $\set{K_3,H}$-Detection clearly solves the Triangle Detection problem in $H$-free graphs. However, the opposite does not necessarily hold. Still, one may ask whether our embedding approach solves the paired pattern problem as well. The answer is negative, because our algorithm relies on deleting low-degree vertices after it verifies that they do not participate in a triangle. However, those vertices may participate in copies of $H$, and the algorithm will fail to detect them. 

\paragraph{Listing large cliques from a few small cliques.} Another result by Dalirrooyfard et al.~\cite{dalirrooyfard2024towards} for $k$-Clique Listing is an algorithm whose running time is sensitive to the number of $\ell$-cliques for $\ell < k$. This result resembles our $H$-sensitive algorithms but in the opposite direction: the pattern with few copies is a subpattern of the target pattern. In contrast, our work considers patterns that are larger than a triangle and do not necessarily contain one.

\paragraph{Triangle Detection in dense graph families.} Several papers have considered Triangle Detection in various graph families, resulting in very efficient combinatorial algorithms. The only such families that are also dense that we are aware of are geometric intersection graphs. A work by Kaplan et al.~\cite{kaplan2019triangles} shows a near-linear time algorithm for disk graphs, and a recent work by Chan~\cite{chan2023finding} shows strongly subcubic algorithms for intersection graphs of more objects: boxes, line segments, fat objects, and translates. While these algorithms use fast matrix multiplication, they remain subcubic even when using naive matrix multiplication.

    \section{Preliminaries}
        \label{sec:preliminaries}
        We begin by establishing the notation and definitions used throughout the paper. For integers $a \leq b$, we let $[a,b] = \set{a,a+1,\ldots,b}$. For $k \in \Nat$, we let $[k] = [1,k]$.

\paragraph{Graph notation.}
Throughout, let $G=(V,E)$ be a graph with $n=|V(G)|$ vertices and $m=e(G)=|E(G)|$ edges. 
For a vertex $v \in V$, its neighborhood is denoted by $N(v)$, and the set of vertices at distance $2$ from $v$ is denoted by $N_2(v)$. An edge between vertices $u$ and $v$ is denoted by $uv$, and a triangle on vertices $x,y,z$ is denoted by $xyz$. For a subset of vertices $S \subseteq V$, $G[S]$ denotes the subgraph induced by $S$.  For a set $S \subseteq V(G)$ we define $N_G(S) = \bigcup_{v \in S} N_G(v)$. We omit the subscript when it is clear from the context. Given two vertex sets $X, Y \subseteq V(G)$, we denote the set of edges with one endpoint in $X$ and one in $Y$ by $E(X,Y)$, and its cardinality by $e(X,Y)$. For a rooted tree $T$ and a vertex $v \in V(T)$, we define $T_v$ to be the subtree rooted at $v$.

A graph is said to be \emph{properly $3$-colored} if each vertex is assigned one of the three colors in $\set{\texttt{RED},\texttt{BLUE},\texttt{GREEN}}$, such that no two adjacent vertices share the same color. For brevity, we will simply refer to such graphs as being $3$-colored.

\paragraph{Subgraph copies.} We now formally define what it means for a graph $G$ to contain a smaller pattern graph $H$.

\begin{definition}[Copy of $H$ in $G$]
    \label{def:copies}
    A \emph{copy of $H$ in $G$} is an injective homomorphism $\phi \colon V(H) \to V(G)$; that is, $\phi$ is one-to-one and for every edge $xy \in E(H)$, we have $\phi(x)\phi(y) \in E(G)$. The number of copies of $H$ in $G$ is the number of such injective homomorphisms. We say that $G$ is \emph{$H$-free} if it contains no copies of $H$.
\end{definition}

\begin{definition}[Colored copy of $H$ in $G$]
    \label{def:colored-copies}
    Given a $3$-colored graph $G$ with coloring $C_G \colon V(G) \to \{\texttt{RED}, \texttt{BLUE}, \texttt{GREEN}\}$, and a $3$-colored pattern $H$ with coloring $C_H \colon V(H) \to \{\texttt{RED}, \texttt{BLUE}, \texttt{GREEN}\}$, a \emph{colored copy of $H$} is an injective homomorphism $\phi \colon V(H) \to V(G)$ that preserves colors, i.e., $C_G(\phi(v)) = C_H(v)$ for every $v \in V(H)$. The terms ``number of colored copies'' and ``colored $H$-free'' are defined analogously.
\end{definition}

Note that a colored $H$-free graph may contain many copies of $H$, but none that respect the specified coloring of $H$.

\paragraph{Graph cores.} A graph is \emph{$\Delta$-degenerate} if every one of its induced subgraphs has a vertex of degree less than $\Delta$. Related to degeneracy is the notion of a graph core.

\begin{definition}[$\Delta$-core]
    Given a graph $G$, the \emph{$\Delta$-core of $G$} is the maximal induced subgraph of $G$ with a minimum degree of at least $\Delta$.
\end{definition}

\begin{claim}
    \label{claim:core}
    For any integer $\Delta \geq 1$, the $\Delta$-core of a graph can be computed in linear time.
\end{claim}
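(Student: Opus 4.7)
The plan is to describe the standard peeling algorithm and argue correctness and linear-time implementation. First I would establish that the $\Delta$-core is well-defined by verifying that the family of induced subgraphs of $G$ with minimum degree at least $\Delta$ is closed under union: if $G[S_1]$ and $G[S_2]$ both have minimum degree at least $\Delta$, then $G[S_1 \cup S_2]$ does too, since every vertex $v \in S_i$ retains all of its neighbors from $S_i$ in $S_1 \cup S_2$. Hence the union of all such induced subgraphs is the unique maximal one, and this is what we are required to compute.

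For the algorithm, initialize an array storing $d(v) := |N_G(v)|$ for every $v \in V$, and initialize a queue $Q$ containing every vertex with $d(v) < \Delta$ (marking them as enqueued to avoid duplicates). While $Q$ is nonempty, extract a vertex $v$, mark it as deleted, and for each neighbor $u$ of $v$ that has not yet been deleted, decrement $d(u)$ by one; if the decrement causes $d(u) < \Delta$ and $u$ is not already enqueued, enqueue it. When $Q$ empties, return the subgraph of $G$ induced by the vertices that were never deleted.

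For correctness, I would argue by induction on the order of deletion that no deleted vertex can belong to any induced subgraph of $G$ with minimum degree at least $\Delta$. At the moment $v$ is deleted, the value $d(v)$ equals the number of neighbors of $v$ among the currently surviving vertices, and this value is less than $\Delta$. By the inductive hypothesis, all previously deleted vertices lie outside every candidate induced subgraph; therefore $v$ has fewer than $\Delta$ neighbors inside any such candidate subgraph containing $v$, contradicting the required minimum degree. On termination, every surviving vertex $v$ satisfies $d(v) \geq \Delta$ within the surviving subgraph by construction, so the returned graph indeed has minimum degree at least $\Delta$. Combined with the uniqueness argument, the output is exactly the $\Delta$-core.

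For the running time, each vertex enters $Q$ at most once (enforced by the enqueue flag) and each edge $uv$ contributes $O(1)$ work: namely one degree decrement performed when the first of $u,v$ is deleted (if at all). The initialization costs $O(n+m)$, the bookkeeping is $O(1)$ per queue operation and per edge traversal, and collecting the surviving subgraph costs $O(n+m)$. The total is $O(n+m)$, which is linear. The only subtle point, and the one I would be most careful about, is the flag preventing a vertex from being enqueued twice; without it one could wrongly charge multiple decrements against a single edge. This establishes the claim.
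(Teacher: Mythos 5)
Your proof is correct and follows essentially the same peeling approach as the paper: repeatedly delete vertices of degree below $\Delta$ while maintaining degree counters, for a total of $\Order(n+m)$ time (the paper uses degree-bucket lists where you use a queue with an enqueue flag, an immaterial difference). You additionally spell out the maximality/uniqueness of the core and the inductive correctness of deletion, which the paper leaves implicit.
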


\begin{proof}
The $\Delta$-core can be found by repeatedly deleting vertices of degree less than $\Delta$ until no such vertices remain. To do this in linear time, we can partition the vertices into $n$ lists, $D_0, D_1, \ldots, D_{n-1}$, according to their degrees. As long as any list $D_i$ with $i < \Delta$ is non-empty, we remove an arbitrary vertex $v$ from it, delete $v$ from the graph, and update the degrees of its neighbors, relocating them to the lists corresponding to their new degrees. The total time for this procedure is $\Order(n+m)$.
\end{proof}

We use an analogous notion for 3-colored graphs.

\begin{definition}[Colored $\Delta$-core]
    The \emph{colored $\Delta$-core} of a $3$-colored graph $G$ is the maximal induced subgraph where every vertex has at least $\Delta$ neighbors in each of the two color classes other than its own.
\end{definition}

The colored $\Delta$-core can be computed in linear time by a similar iterative removal process.

\paragraph{Sum of exponents.} Finally, we will make use of the following standard identity for the sum of a geometric series.
\begin{equation}
    \label{eq:exponents}
    \forall j \leq k :\quad \sum_{i=j}^k \frac{1}{2^{i}} = \frac{1}{2^{j-1}} - \frac{1}{2^k}.
\end{equation}

    \section{Lower Bounds}
        \label{sec:lowerbounds}
        In this section, we prove \cref{thm:lowerbounds}. 
\lowerbounds*
The proof utilizes two well-known self-reduction techniques, whose short proofs are included for completeness. The first part of the theorem is a direct consequence of the color-coding technique~\cite{alon1995color}.

\begin{lemma}[Color-coding]
\label{lem:color-coding}
There is a near-linear time reduction that, given a graph $G$ with $n$ vertices, produces $\ell = \Order(\log n)$ subgraphs $G_{1},G_2,\ldots,G_{\ell} \subseteq G$, and $\ell$ proper $3$-colorings $c_1,\ldots,c_\ell$, $c_i \colon V(G_i) \mapsto \set{\texttt{RED}, \texttt{GREEN}, \texttt{BLUE}}$, such that if $G$ contains a triangle then with probability $1-\Order(1/n)$, at least one subgraph in the list contains a triangle. 
\end{lemma}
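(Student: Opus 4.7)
The plan is the standard color-coding argument. For each $i \in [\ell]$, independently sample a function $c_i \colon V(G) \to \{\texttt{RED}, \texttt{BLUE}, \texttt{GREEN}\}$ uniformly at random, and let $G_i$ be the spanning subgraph of $G$ consisting of exactly those edges $uv$ with $c_i(u) \neq c_i(v)$. By construction the restriction of $c_i$ to $V(G_i)$ is automatically a proper $3$-coloring of $G_i$, so the ``proper coloring'' half of the conclusion requires no further work. Repeating this $\ell$ times yields the list of pairs $(G_i, c_i)$ required by the statement.

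For a fixed triangle $xyz \in G$, the probability that $c_i$ assigns three distinct colors to $\{x,y,z\}$ equals $3!/3^{3} = 2/9$, and on that event all three edges $xy, yz, xz$ are retained in $G_i$, so $G_i$ contains $xyz$ as a properly $3$-colored triangle. Consequently, the probability that \emph{no} $G_i$ contains a triangle is at most $(1-2/9)^{\ell} = (7/9)^{\ell}$. Choosing $\ell = \Theta(\log n)$ with a sufficiently large hidden constant drives this failure probability down to $\Order(1/n)$, as required.

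For the running time: sampling $c_i$ takes $\Order(n)$ time, and building $G_i$ amounts to a single scan over $E(G)$, discarding monochromatic edges, in $\Order(n + m)$ time. Over $\ell = \Order(\log n)$ repetitions the total cost is $\Order((n+m)\log n)$, which is near-linear. There is essentially no technical obstacle; the only point worth emphasizing is that $c_i$ must be proper on the \emph{output} subgraph $G_i$ (not on $G$ itself), which is precisely the reason monochromatic edges are deleted rather than the coloring being adjusted after the fact.
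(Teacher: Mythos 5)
Your proposal is correct and follows essentially the same argument as the paper: randomly $3$-color the vertices, delete monochromatic edges to get a properly colored subgraph, note a fixed triangle survives with probability $2/9$, and repeat $\Theta(\log n)$ times to drive the failure probability to $\Order(1/n)$. No substantive differences.
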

\begin{proof}
Take a random coloring $c \colon V(G) \mapsto \set{\texttt{RED}, \texttt{GREEN}, \texttt{BLUE}}$, where the color of each vertex is chosen uniformly and independently. Then remove from $G$ every monochromatic edge, resulting in a properly $3$-colored subgraph $G'$. If $G$ contains a triangle, then the probability that a triangle survives in $G'$ is at least $2/9$. By repeating this process $\ell = \Theta(\log n )$ times, we get $\ell$ different $3$-colored subgraphs. The probability that at least one of them contains a triangle is at least: 
\[
1- (7/9)^{\ell} = 1-n^{-\Theta(1)}.
\]
By choosing an appropriate constant factor in $\ell$, we get the bound $1-\Order(1/n)$. 
\end{proof}

The second part of the theorem is a consequence of the so-called sieving technique (see e.g.,~\cite{dell2022approximately,goldreich2025coarse}). The idea is to produce a subgraph of $G$ that contains at most one triangle.

\begin{lemma}[Sieving]
\label{lem:sieving}
There is a near-linear time reduction that, given a $3$-colored graph $G$ with at most $n$ vertices in each part, outputs $\Order(\log^3(n))$ subgraphs of $G$, such that if $G$ contains a triangle, then with probability at least $1-\Order(1/n)$, one of the subgraphs contains a unique triangle.
\end{lemma}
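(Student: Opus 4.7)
The plan is to apply multi-scale random subsampling to the three color classes. For every triple $(i,j,k) \in \{0, 1, \ldots, \lceil \log n \rceil\}^3$, I would independently construct a subgraph $G_{i,j,k}$ by keeping each red vertex with probability $2^{-i}$, each green vertex with probability $2^{-j}$, and each blue vertex with probability $2^{-k}$, using independent coin flips across both vertices and triples. This yields the required list of $O(\log^3 n)$ subgraphs, each constructible in near-linear time, so the overall reduction is near-linear.

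For correctness, the key point is that some triple $(i^*, j^*, k^*)$ in the enumeration is ``well-calibrated'' for $G$: letting $\tau$ denote the total number of triangles in $G$, the expected number of surviving triangles in $G_{i^*,j^*,k^*}$ is $\tau \cdot 2^{-(i^*+j^*+k^*)} = \Theta(1)$. I would then lower-bound the probability that $G_{i^*,j^*,k^*}$ contains exactly one triangle via a second-moment estimate on $\mathbb{E}[X(X-1)] = \sum_{T \ne T'} \Pr[T, T' \text{ both survive}]$, after partitioning pairs of triangles by how many of their vertices coincide. The specific scale $(i^*, j^*, k^*)$ would be chosen within our enumeration to balance the per-vertex triangle counts $d_R, d_G, d_B$ at a target triangle $T^* = (r^*, g^*, b^*)$ so that the pair contributions remain $O(1)$. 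Amplification to $1 - O(1/n)$ success would come from observing that an entire neighborhood of near-optimal scales inside the cube produces many independent well-calibrated subgraphs, whose joint failure probability decays polynomially in $n$.

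The main obstacle will be the variance control in the second-moment step. Pairs of triangles sharing one or two vertices with a fixed reference triangle are highly correlated, and the corresponding contributions to $\mathbb{E}[X(X-1)]$ involve quantities such as $\sum_r d_r^2$ that can dominate $\tau^2$ when the triangle distribution is concentrated on a few vertices. This is precisely why the enumeration must span the full three-dimensional cube rather than a single one-dimensional grid: different local density profiles demand different coordinated scalings of $(i, j, k)$, and the product structure of the enumeration is what guarantees that every $G$ with at least one triangle admits some triple in our list that isolates a triangle with non-negligible probability. A preliminary cleanup removing vertices that participate in an unusually large number of triangles (in each color class) may be helpful to keep these pair sums manageable before invoking the second moment argument.
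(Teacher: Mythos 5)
The step you leave open is exactly where the proof lives, and it is a genuine gap rather than a technicality. Under independent vertex subsampling of the three color classes, triangles that share a vertex are strongly positively correlated, so at your calibrated scale the pair term $\Ex[X(X-1)]$ can be of the same order as, or much larger than, $\Ex[X]$ (e.g.\ when every triangle goes through one red vertex whose blue--green ``triangle edges'' are concentrated on a single heavy blue vertex); then the Bonferroni-type bound $\Pr[X=1]\ge \Ex[X]-\Ex[X(X-1)]$ gives nothing, and you never show how the three-dimensional grid of scales rescues it. Your two patches do not close the gap: the cleanup of vertices lying in many triangles cannot be performed by a near-linear-time reduction (computing per-vertex triangle counts is essentially the problem being reduced), and it can even delete \emph{all} triangles when every triangle passes through one heavy vertex, breaking correctness; and the amplification claim that a whole ``neighborhood of near-optimal scales'' yields many independently successful subgraphs is unsupported --- having expectation $\Theta(1)$ at nearby triples does not imply the exactly-one event has constant probability there, which is the same unproven variance statement again. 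As written, even granting the calibrated triple works, you would only reach constant success probability, not $1-\Order(1/n)$, within the $\Order(\log^3 n)$ budget.

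For contrast, the paper avoids second moments entirely by making the isolation hierarchical and one-dimensional at each stage. It first subsamples only the red class at dyadic rates: if $r$ red vertices lie in triangles, then at rate $p\approx 1/r$ the probability that \emph{exactly one} of them survives is $rp(1-p)^{r-1}\ge 1/(2\sqrt{e})$, a clean binomial computation over at most $n$ items. Conditioned on a unique surviving triangle-red vertex $v$, it then subsamples the blue--green \emph{edges} (not the blue and green vertices) at dyadic rates, so that among the $x$ blue--green edges forming a triangle with $v$, exactly one survives with probability at least $1/(2\sqrt{e})$; sampling edges rather than vertex pairs is what keeps this a binomial exactly-one event with no correlation issues. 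This gives constant success over $\Order(\log^2 n)$ subgraphs, and the $1-\Order(1/n)$ guarantee comes from $\Theta(\log n)$ independent repetitions, which is where the $\Order(\log^3 n)$ count comes from. To salvage your route you would either have to prove the isolation property for pure three-sided vertex subsampling (a nontrivial statement you have not established), or switch the blue/green stage to edge subsampling and condition on a unique red survivor --- at which point you have essentially reconstructed the paper's argument.
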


\begin{proof}
The reduction is as follows. For every $p \in \set{1, 1/2, 1/4, \dots, 2^{-\lceil \log n \rceil}}$, subsample the red vertices with probability $p$ (i.e., each red vertex survives independently with probability $p$) and denote the resulting subgraph by $G_p$. Now, for each such subgraph $G_p$ and for every $q \in \set{1, 1/2, 1/4, \dots, 2^{-\lceil \log(n^2) \rceil}}$, subsample the edges between green and blue vertices with probability $q$, denoting the resulting subgraph by $G_{p,q}$. The total number of subgraphs is $\Order(\log^2 n)$. We now claim that if $G$ contains a triangle, then with constant probability, one of these subgraphs contains precisely one triangle.

Suppose there are $1 \leq r \leq n$ red vertices that participate in a triangle. Let $p = 2^{-\lceil \log r \rceil}$. The probability that precisely one red vertex that participates in a triangle survives in $G_p$ is $r p (1-p)^{r-1}$. By analyzing the derivative of the function $f(x) = rx(1-x)^{r-1}$ over the interval $[1/(2r), 1/r]$, we see that it is non-decreasing, so its minimum is obtained at $x=1/(2r)$. Thus, since $1/(2r) \leq p$, we have $r p (1-p)^{r-1} \geq r \cdot \frac{1}{2r} (1-\frac{1}{2r})^{r-1} = \frac{1}{2} (1-\frac{1}{2r})^{r-1} \geq 1/(2\sqrt{e})$.

Now, suppose $G_p$ is fixed and contains a single red vertex, say $v$, that participates in a triangle (possibly more than one). Let $1 \leq x \leq n^2$ be the number of blue-green edges that form a triangle with $v$. By the same argument, for $q = 2^{-\lceil \log x \rceil}$, the probability that exactly one such edge survives in $G_{p,q}$ is at least $1/(2 \sqrt{e})$. 
Hence, if $G$ contains a triangle, then with probability at least $1/(4e)$, there exists some subgraph $G_{p,q}$ that contains a single triangle. 
To boost the success probability, repeat this entire process $\Theta(\log n)$ times. This yields a total of $\Order(\log^3 n)$ subgraphs, and the failure probability becomes polynomially small, e.g., $\Order(1/n)$.
\end{proof}

\begin{proof}[Proof of \cref{thm:lowerbounds}]
Suppose first that $H$ is not $3$-colorable and that there is a combinatorial Triangle Detection algorithm for $H$-free graphs running in $n^{3-\eps}$ time for some $\eps > 0$, and succeeding with probability $1-\Order(1/n^2)$. Given a Triangle Detection instance $G$, we apply the reduction from \cref{lem:color-coding} to obtain a list of $\ell = \Order(\log n)$ $3$-colored subgraphs of $G$. Every subgraph is clearly $H$-free because $H$ is not $3$-colorable. Applying the supposed algorithm to every subgraph, we get $\ell$ answers, and with probability at least $1-\Order(\ell/n^2) = 1-\Order(1/n)$, they are all correct. If $G$ contains a triangle, then by the lemma, one of those subgraphs contains a triangle with probability $1-\Order(1/n)$, so in that case, our algorithm detects a triangle with probability $1-\Order(1/n)$. If $G$ does not contain a triangle, then none of the subgraphs contains a triangle, and therefore the algorithm does not report a triangle. We showed a combinatorial Triangle Detection algorithm running in $\Order(n^{3-\eps} \cdot \log n)$ time, thereby refuting \cref{conj:bmm}.

Now, suppose that $H$ contains more than one triangle and there is a combinatorial algorithm for Triangle Detection on $H$-free graphs running in $n^{3-\eps}$ time, for some $\eps > 0$, and succeeding with probability $1-\Order(1/n^2)$. Given a Triangle Detection instance $G$, first apply the reduction from \cref{lem:color-coding} to get a list of $\ell = \Order(\log n )$ $3$-colored subgraphs. Then, apply the reduction from \cref{lem:sieving} on every subgraph to get a total of $\Order(\log n \cdot \log^3(n)) = \Order( \log^4(n) )$ subgraphs. We execute the supposed algorithm on each of these subgraphs, and report a triangle if and only if the supposed algorithm reports a triangle on at least one of its inputs. If $G$ has a triangle, then by \cref{lem:sieving}, with probability $1-\Order(1/n)$ at least one of these subgraphs contains a single triangle, and therefore it is $H$-free. In that case, the supposed algorithm reports a triangle in that subgraph with probability $1-\Order(1/n)$, and a triangle is correctly reported by the algorithm for $G$. 
If $G$ is triangle-free, then none of the subgraphs contain a triangle and therefore they are all $H$-free. In particular, the supposed algorithm correctly answers NO 
on all those subgraphs with probability $1-\tOrder(\ell/n^2) = 1-\Order(1/n)$, so a triangle is not reported in $G$. We showed a combinatorial Triangle Detection algorithm running in $\Order(n^{3-\eps} \cdot \log^4(n))$ time, thereby refuting \cref{conj:bmm}.
\end{proof}

    \section{An Algorithm for Nicely Colored Patterns}
        \label{sec:H-nice}
        In this section, we prove \cref{thm:H-nice} and its generalization to listing, \cref{thm:listing-nice}.
\algnice*

Our algorithm will solve a slight variant of the problem: Triangle Detection in \emph{colored} $H$-free graphs, where $H$ is nicely colored (recall \cref{def:colored-copies}). To solve the original uncolored problem, we apply the following preprocessing: given an $H$-free graph $G$, compute in linear time a $3$-colored subgraph $G' \subseteq G$, using \cref{lem:color-coding}. 
Then compute a nice coloring of $H$ in $\Order(3^k) = \Order(1)$ time. Since $G$ is $H$-free, $G'$ is colored $H$-free. We henceforth assume that $G$ is $3$-colored and $H$ is nicely colored. 

The following claim will be useful throughout the paper. 
\begin{claim}[Low-degree cleanup]
    \label{claim:lowdeg}
    There is an algorithm that, given a $3$-colored graph $G$ and a parameter $1 \leq \Delta \leq n$, decides whether $G$ contains a triangle that is incident to a vertex outside the colored $\Delta$-core of $G$. If no such triangle is found, it returns the colored $\Delta$-core of $G$. The algorithm runs in $\Order(n^2 \Delta)$ time.
\end{claim}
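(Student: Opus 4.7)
The plan is to realize the peeling process that defines the colored $\Delta$-core and to interleave it with a local triangle test so that any triangle incident to a vertex that ends up outside the core is detected before that vertex disappears. Concretely, I would maintain, for every vertex $v$, the counts of its surviving neighbors in each of the two color classes different from its own, together with an $n\times n$ adjacency matrix of $G$ for constant-time edge queries.

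While some vertex $v$ has fewer than $\Delta$ neighbors in some color class distinct from its own — say $v$ is \texttt{RED} and has fewer than $\Delta$ \texttt{BLUE} neighbors — I would run a triangle check centered at $v$: enumerate the fewer than $\Delta$ blue neighbors $b \in N(v)$, and for each such $b$ scan the at most $n$ green neighbors $g \in N(v)$, testing whether $bg \in E(G)$ in constant time via the adjacency matrix. If any triangle $vbg$ is found, report that $G$ has a triangle incident to a non-core vertex. Otherwise, delete $v$, decrement the color-degree counters of its neighbors, and continue. When no vertex has a low color-degree, the surviving induced subgraph is by definition the colored $\Delta$-core, and I would return it.

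For correctness, the standard peeling invariant does the job: if a deleted vertex $u$ participates in a triangle $uvw$ in $G$, then at the moment $u$ is processed either both $v$ and $w$ are still present, in which case the local scan at $u$ discovers the triangle, or at least one of $v,w$ was deleted earlier and its own local scan already discovered the same triangle. Hence if no triangle is reported, every non-core vertex is certified to lie outside every triangle of $G$, and the returned graph is precisely the colored $\Delta$-core, as required.

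The running time is straightforward to account for: the peeling bookkeeping (updating color-degree counters and selecting the next low-degree vertex) costs $\Order(n+m)=\Order(n^2)$ in the style of \cref{claim:core}, and each of the at most $n$ removed vertices triggers a local scan of cost $\Order(\Delta \cdot n)$, yielding the claimed $\Order(n^2 \Delta)$ bound. I do not anticipate a real obstacle; the only mild subtlety is making sure that at each removal I iterate over the color class in which $v$ is deficient rather than the other, so that the inner loop is bounded by $\Delta$ — and this is always well-defined because that deficiency is precisely the reason $v$ was selected for removal.
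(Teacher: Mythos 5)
Your proposal is correct and matches the paper's proof: both peel vertices with fewer than $\Delta$ neighbors in some other color class, check the bichromatic pairs of neighbors of each such vertex before deleting it (at $\Order(n\Delta)$ per vertex, $\Order(n^2\Delta)$ total), and return the surviving graph as the colored $\Delta$-core. Your explicit first-deleted-vertex invariant and adjacency-matrix lookup are just spelled-out versions of details the paper leaves implicit.
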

\begin{proof}
To find a triangle, we process every vertex that has at most $\Delta$ neighbors in some color class other than its own. For each such vertex, we examine every bichromatic pair of its neighbors to search for an edge. If an edge is found, we report a triangle containing that vertex and edge. Otherwise, we delete the vertex from the graph. If no triangles are found, the remaining graph is the colored $\Delta$-core of $G$.

For the analysis, there is a cost of $\Order(n \Delta)$ per vertex to check for triangles, so a total of $\Order(n^2 \Delta)$ time for all those checks. It is also possible to implement a data structure that processes the vertices and supports deletions similarly to what is done in \cref{claim:core}. Hence, the total running time is $\Order(n^2 \Delta)$. 
\end{proof}

The algorithm will be recursive, and in each recursive step, the pattern is reduced by one vertex until a base case is reached. We define the base case to be a colored pattern of size $3$ if $H$ is triangle-free, or a colored pattern of size $4$ if $H$ contains a triangle. Let us now show that the base cases can be solved in $\Order(n^2)$ time. 
\begin{claim}[Base cases]
    \label{claim:base-case}
    Let $H$ be a nicely-colored pattern on three vertices, or a nicely-colored pattern on four vertices that contains a triangle. 
    Then, there is an algorithm that solves Triangle Detection on colored $H$-free graphs in $\Order(n^2)$ time. 
\end{claim}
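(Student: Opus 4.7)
The plan is a case analysis driven by the structure of $H$. Since $H$ is nicely colored with $|V(H)| \in \{3,4\}$, repeatedly removing monochromatic-neighborhood vertices leaves either a $K_3$ (when $H = K_3$, or when $|V(H)| = 4$ and $H$ contains a triangle—forcing the unique non-triangle vertex to be the only reducible one) or the empty graph (for the remaining $3$-vertex $H$). Throughout, I exploit the fact that $G$ is properly $3$-colored, so every triangle in $G$ uses exactly one vertex of each color.

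I would first dispatch the cases where the reduction leaves $K_3$. If $H = K_3$, colored-$H$-freeness of $G$ means $G$ has no properly $3$-colored triangle, hence no triangle at all; answer NO in $\Order(1)$ time. If $|V(H)| = 4$, the non-triangle vertex $h^*$ has a monochromatic neighborhood inside the triangle $\{a,b,c\}$, whose vertices have distinct colors, so $h^*$ has $0$ or $1$ neighbor there—giving $H = K_3 \sqcup K_1$ or $H$ being the paw. In the first case, WLOG $h^*$ is \texttt{RED}; colored-$H$-freeness forces $G$ to have at most one \texttt{RED} vertex whenever it contains a colored triangle, so we check that single candidate for a triangle in $\Order(n^2)$ time. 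In the paw case, WLOG $h^*$ is \texttt{BLUE} attached to the \texttt{RED} triangle vertex; then every \texttt{RED} vertex in a colored triangle has exactly one \texttt{BLUE} neighbor. We enumerate \texttt{RED} vertices, skip any with $\geq 2$ \texttt{BLUE} neighbors, and for the rest look up the unique \texttt{BLUE} neighbor and scan for a common \texttt{GREEN} neighbor in $\Order(n)$ time per \texttt{RED} vertex, for $\Order(n^2)$ total.

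For the remaining case, $|V(H)| = 3$ with $H \neq K_3$, so $H$ has at most $2$ edges. If $H$ has at most one edge, colored-$H$-freeness either eliminates some color-pair from $E(G)$ (ruling out colored triangles) or bounds some color class by $\Order(1)$ vertices (check those candidates in $\Order(n^2)$). If $H = P_3$ with three distinct colors, every colored triangle in $G$ contains a copy of $H$, so $G$ is triangle-free. If $H = P_3$ with same-colored endpoints, say \texttt{RED}-\texttt{BLUE}-\texttt{RED}, colored-$H$-freeness forces every \texttt{BLUE} vertex to have at most one \texttt{RED} neighbor, so we enumerate \texttt{BLUE} vertices, look up the unique \texttt{RED} neighbor (if any), and scan \texttt{GREEN} candidates in $\Order(n)$ time each, for $\Order(n^2)$ total.

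The main obstacle is only the bookkeeping of the finite case enumeration; each individual case is mechanical. The unifying principle is that the specific coloring of $H$ combined with the proper $3$-coloring of $G$ pins down the color-role of each triangle vertex, so colored-$H$-freeness leaves either no possible triangle or $\Order(n)$ candidate anchor vertices/edges, each extendable to a triangle in $\Order(n)$ time.
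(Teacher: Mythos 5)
Your proof is correct and follows essentially the same route as the paper's: a case analysis over the possible colorings of $H$ showing that colored $H$-freeness forces any triangle vertex of a suitable color to have at most one neighbor in some other color class (or forces a color class to have $\Order(1)$ vertices, or rules out triangles outright), followed by an $\Order(n^2)$ scan over the constrained candidates. The only cosmetic differences are that you inline these scans and treat the two four-vertex patterns (paw and $K_3 \sqcup K_1$) explicitly, whereas the paper delegates the scans to its low-degree cleanup subroutine (\cref{claim:lowdeg} with $\Delta \in \{1,2\}$) and handles the non-paw four-vertex patterns by reducing them to the paw via a subgraph argument.
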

\begin{proof}
    Let us begin with patterns on three vertices. Suppose that $G$ is colored $H$-free. 
    If the three vertices of $H$ have distinct colors, then $G$ must be triangle-free, because a colored triangle contains $H$ as a colored subgraph. This case can thus be solved trivially in constant time by answering NO. 
    Now suppose that $H$ is colored by two colors, so in particular it has at most two edges. 
    Every such pattern is a subgraph of a path $v_1 - v_2 - v_3$, where $v_1$ and $v_3$ have the same color, so it suffices to focus on this case only. Without loss of generality, suppose that $v_1$ and $v_3$ are colored red and $v_2$ is colored blue. Then, in $G$, every blue vertex has at most one red neighbor to avoid forming a colored copy of $H$. We invoke \cref{claim:lowdeg} with $\Delta = 1$ to solve this case in $\Order(n^2)$ time. 
    Another case is when $H$ is colored only by one color. This implies that $H$ is empty, and therefore, there is a color class in $G$ that has at most two vertices. We invoke \cref{claim:lowdeg} with $\Delta = 2$ to solve this case in $\Order(n^2)$ time. 

    Now consider the case where $H$ has four vertices and contains a triangle. We may assume without loss of generality that $H$ is a colored triangle with an attached edge, because the other patterns are either a subgraph of this or contain more than one triangle (thus not nicely colored). Suppose that the attached edge is attached to the blue vertex of the triangle, and its other endpoint is red. This implies that in $G$, the blue vertex of the triangle (if it exists) can have exactly one red neighbor (its neighbor in the triangle). We invoke \cref{claim:lowdeg} with $\Delta = 1$ to solve this case in $\Order(n^2)$ time.
 \end{proof}

The input to the recursive algorithm is a quadruple $(G, C_G, H, C_H)$, where $C_G$, $C_H$ are the colorings of $G$ and $H$ respectively. 
The algorithm uses a sequence of parameters $\Delta_k > \Delta_{k-1} > \cdots > \Delta_4 > \Delta_3 $, where $\Delta_i$ corresponds to a pattern of size $i$. We will set the values of these parameters at the end of the proof, except for the ones corresponding to the base cases, which are set to $1$ (e.g. $\Delta_4=1$ if $H$ contains a triangle, and otherwise $\Delta_3 = 1$). Throughout the recursion, for a given input $(G,C_G,H,C_H)$, we denote by $k$ the number of vertices in the input pattern $H$ and by $n$ the number of vertices in the input graph $G$. We use $V_c(G)$ to denote the set of vertices of color $c \in \set{\texttt{RED},\texttt{BLUE},\texttt{GREEN}}$ in $G$. 

\begin{proof}[Proof of \cref{thm:H-nice}]
At each recursive level, the following procedure is executed: 
\begin{itemize}
    \item \textbf{Base-case:} 
    If $k=3$ or if $k=4$ and $H$ contains a triangle, invoke the algorithm from \cref{claim:base-case} and return its output. 

    \item \textbf{Low-degree cleanup:} 
    Invoke the algorithm from \cref{claim:lowdeg} with $G$ and $\Delta_k$. If a triangle is found, return it. Otherwise, continue with the colored $\Delta_k$-core of $G$, denoted $G'$. 

    \item \textbf{Random embedding of $h$ in $G$:}
    Let $h \in V(H)$ be a vertex whose neighborhood is monochromatic (guaranteed by the assumption on $C_H$), and denote the color of its neighbors by $c^*$. Sample uniformly at random $\ell := \Theta( \frac{n}{\Delta_k} \cdot \log n)$ vertices $s_1, \dots, s_\ell$ from $V_{C_H(h)}(G')$. For each $i \in [\ell]$, create a subgraph $G_i \subseteq G'$ by deleting every vertex of color $c^*$ that is not a neighbor of $s_i$. Let $C_{G_i}$ be the restriction of $C_G$ to $G_i$.

    \item \textbf{Recursive calls:}  
    Let $H' = H[V(H) \setminus \{h\}]$ and let $C_{H'}$ be the restriction of $C_H$ to $H'$. Since the property of being a nice coloring is closed under vertex deletion, $C_{H'}$ is also nice. Make a recursive call on $(G_i, C_{G_i}, H', C_{H'})$ for each $i \in [\ell]$. If any of those calls returns a triangle, then return it. Otherwise, report that $G$ is triangle-free.
\end{itemize}

    \paragraph{Correctness.} Let us prove that if $G$ contains a triangle, then the algorithm finds one with probability $1 - \Order(1/n)$. 
    It is immediate that if $G$ is triangle-free, then the algorithm does not find a triangle in $G$.
    Now suppose $G$ contains a triangle $xyz$. If the triangle is not found in the low-degree cleanup phase, then each vertex of the triangle has at least $\Delta_k$ neighbors of color $C_H(h)$ (the vertex $h \in V(H)$ chosen for embedding). Suppose that the triangle vertex whose color is $c^*$ is $x$. For the triangle to survive to the next recursive level, one of the neighbors of $x$ whose color is $C_H(h)$ must be selected in the random sample of $\ell = \Theta\left(\frac{n}{\Delta_k} \cdot \log n\right)$ vertices. Since the number of such neighbors is at least $\Delta_k$, and the total number of vertices of color $C_H(h)$ is at most $n$, the probability that none are selected is at most:
    \[
        \left(1 - \frac{\Delta_k}{n} \right)^\ell \leq \exp\left( -\frac{\Delta_k}{n} \cdot \ell \right) = \frac{1}{n^{\Theta(1)}}
    \]
    So by choosing the constant factor of $\ell$ appropriately, the probability that the triangle survives is $1- \Order(1/n)$. 
    Now, we claim that the graphs $G_1,\ldots,G_\ell$ for the next recursive level are all $H'$-free for $H' = H[V(H) \setminus \set{h}]$, so the output of the recursive calls is correct. Indeed, observe that every colored copy of $H'$ in $G_i$ is completed to a colored copy of $H$ with $s_i$ acting as $h$, because $s_i$ is complete to every vertex of color $c^*$ in $G_i$. 
    Finally, note that one of the base cases must be reached as the number of vertices is reduced by one at each recursive level. 
    By union bound, the probability that a triangle survives all the way to the base case is $1 - \Order(k/n) = 1-\Order(1/n)$. Hence, the triangle is found with probability $1-\Order(1/n)$. 

    \paragraph{Analysis.} 
    Let us now prove that the running time of the algorithm, as a function of $n, \Delta_k, \ldots, \Delta_4, \Delta_3$, is:
    \begin{equation*}
        \begin{cases}
            \tOrder\left( \sum_{i=0}^{k-3} \frac{n^{2+i} \Delta_{k-i}}{\prod_{j=0}^{i-1} \Delta_{k-j}} \right) & \text{if $H$ is triangle-free,} \\ 
            \tOrder\left( \sum_{i=0}^{k-4} \frac{n^{2+i} \Delta_{k-i}}{\prod_{j=0}^{i-1} \Delta_{k-j}} \right) & \text{if $H$ contains a triangle.}
        \end{cases}
    \end{equation*}
    If the product in the denominator is empty, then it is defined to be $1$. 

    We analyze the running time of the recursive algorithm by tracking the cost at each recursive level. At the level corresponding to a pattern of size $k'$ which is not a base case, the cost is $\Order(n^2 \Delta_{k'})$ for low-degree cleanup. In addition, there is a cost of $\Order( \ell \cdot n^2)$ for constructing the instances for the next recursive level. 
    Letting $T(k')$ be the total running time for a call with a pattern of size $k'$. For $k'$, which is equal to the size of the base case (i.e. $3$ if $H$ is triangle-free, $4$ otherwise), we have that $T(3) = T(4) = \Order(n^2)$. For larger $k'$, we get:
    \[
        T(k') \leq \Order\left(n^2 \Delta_{k'} + \ell\left(n^2   + T(k'-1)\right) \right) = \tOrder\left(n^2 \Delta_{k'} + \frac{n}{\Delta_{k'}}\left(n^2   + T(k'-1)\right) \right).
    \]
    Note that $T(k'-1)$ dominates $n^2$ for every $k'$, and therefore, we write: 
    \[
    T(k') = \tOrder\left(n^2 \Delta_{k'} + \frac{n}{\Delta_{k'}} T(k'-1)\right).
    \]

    We now unroll this recurrence starting from $k$, the size of the original pattern:
    \begin{equation*}
        \begin{split}
            T(k) &= \tOrder\left(n^2 \Delta_k + \frac{n}{\Delta_k}T(k-1)\right) \\
                 &= \tOrder\left(n^2 \Delta_k + \frac{n}{\Delta_k} \left( n^2 \Delta_{k-1} + \frac{n}{\Delta_{k-1}}T(k-2)\right) \right) \\
                 &= \tOrder\left(n^2 \Delta_k + \frac{n^3 \Delta_{k-1}}{\Delta_k} + \frac{n^2}{\Delta_k \Delta_{k-1}} T(k-2)\right).
         \end{split}
    \end{equation*}
    Continuing this pattern, we get the general term:
    \begin{equation*}
        \begin{cases}
            \tOrder\left( \sum_{i=0}^{k-3} \frac{n^{2+i} \Delta_{k-i}}{\prod_{j=0}^{i-1} \Delta_{k-j}}  \right) & \text{ if $H$ is triangle-free,} \\
            \tOrder\left( \sum_{i=0}^{k-4} \frac{n^{2+i} \Delta_{k-i}}{\prod_{j=0}^{i-1} \Delta_{k-j}}  \right) & \text{ if $H$ contains a triangle.}
        \end{cases}
    \end{equation*}
    Note that we conveniently defined the degree thresholds corresponding to the base cases to be $1$, so that the final term, $T(3)$ (or $T(4)$, if $H$ contains a triangle), is accounted for by the final term of the sum, which becomes $n^2 \Delta_3$ (or $n^2 \Delta_4$, if $H$ contains a triangle). 

    \paragraph{Optimizing the parameters.}
To obtain the running time declared by the theorem, we select the degree thresholds in a manner that minimizes the maximum term in the sum. Let us provide some intuition for this optimization problem before providing the optimal assignment. We will focus on the case that $H$ is triangle-free. 
It is helpful to think of $\Delta_j$ as being equal to $n^{1-x_j}$ for every $j \in [3,k]$, for some sequence $0<x_k < x_{k-1} < \cdots < x_3 = 1$. Then, the exponents of $n$ in the terms of the above sum become equal to:
\begin{equation*}
3-x_k\,, \, 3+x_k-x_{k-1}\,,\,3+x_k + x_{k-1} - x_{k-2}\,,\,\ldots\,,\,3+(\sum_{i=4}^k x_i) - 1
\end{equation*}
This explains the exponential dependence in our running times: for $x_{k-i}$ to be larger than the sum of its predecessors, it must increase exponentially.
We can find an optimal assignment either by trial and error, or by solving the linear program: minimize $y$, subject to $3+(\sum_{i=0}^{j-1} x_{k-i}) - x_{k-j}\leq y$ for every $j \in [0,k-3]$. The optimum is achieved when we set $x_j = 1/2^{j-3}$. 

Let us now continue with the formal proof. We use the following assignment, for every $j \in [3,k]$:
\begin{equation*}
    \begin{cases}
        \Delta_j := n^{1-\frac{1}{2^{j-3}}} & \text{if $H$ is triangle-free,} \\
        \Delta_j := n^{1-\frac{1}{2^{j-4}}} & \text{if $H$ contains a triangle.}
    \end{cases}
\end{equation*}

We prove that it achieves the declared running time only for the case that $H$ is triangle-free, as the other case is nearly identical. Let us show that every term corresponding to $i \in [0,k-3]$ within the above sum, i.e., $\frac{n^{2+i} \cdot \Delta_{k-i}}{\prod_{j=0}^{i-1} \Delta_{k-j}}$, is bounded by $\tOrder(n^{3-\frac{1}{2^{k-3}}})$. For the denominator, we have: 
\[
    \prod_{j=0}^{i-1} \Delta_{k-j} = \begin{cases}
                                        1 & i=0 \\
                                        \prod_{j=0}^{i-1} n^{1-\frac{1}{2^{k-j-3}}} & i > 0
                                        \end{cases}.
\]
So for $i=0$ we have that the term corresponding to $i$ is indeed $\frac{n^{2+0} \cdot n^{1-\frac{1}{2^{k-3}}}}{1} = n^{3-\frac{1}{2^{k-3}}}$. 
For $i > 0$, we have that the exponent of $n$ in the denominator is 
\[
    \sum_{j=0}^{i-1} \left(1 - \frac{1}{2^{k-j-3}}\right) = i - \left(\frac{1}{2^{k-i-3}} - \frac{1}{2^{k-3}}\right).
\]

Thus, the term corresponding to $i$ is bounded by 
\begin{equation*}
    \begin{split}
        &\tOrder\left(\frac{n^{2+i} \Delta_{k-i}}{\prod_{j=0}^{i-1} \Delta_{k-j}}\right) = 
        \tOrder\left(\frac{n^{2+i} n^{1-\frac{1}{2^{k-i-3}}}}{n^{i-(\frac{1}{2^{k-i-3}} - \frac{1}{2^{k-3}})} }\right)= 
        \tOrder\left(n^{3-\frac{1}{2^{k-3}}}\right).
    \end{split}
\end{equation*}
Since there are $k$ terms in total and $k$ is constant, we get the declared running time. 
\end{proof}

We now proceed with adapting this algorithm to list triangles. 
\begin{proof}[Proof of \cref{thm:listing-nice}]
    We modify the algorithm from \cref{claim:lowdeg} to list all the triangles that are not incident to the colored $\Delta$-core, before returning the $\Delta$-core of the graph. This modification is straightforward. Then, the algorithm for $H$-free graphs remains the same. 
    To see that every triangle in the graph is listed, recall that in the sampling phase, a sample of $\ell = \Order( \frac{n}{\Delta_k} \log n )$ misses the neighborhood of a single triangle with probability at most $1/n^{\Theta(1)}$. By taking a sufficiently large constant factor in the definition of $\ell$, we get that the failure probability is at most $1/n^4$. Thus, by union bound, the probability that there exists a triangle whose neighborhood is missed by the sample is at most $ \binom{n}{3} / n^4 = \Order(1/n)$. Hence, we get that every triangle in the graph survives in a subgraph for the recursive level with probability $1-\Order(1/n)$. The base cases are also covered by the modification to the algorithm from \cref{claim:lowdeg}. 
\end{proof}

    \section{$H$-Sensitive Algorithms}
        \label{sec:sensitive}
        In this section, we present $H$-sensitive algorithms, beginning with an adaptation of the algorithm for nicely colored patterns, followed by its adaptation to listing, and concluding with an efficient $C_5$-sensitive algorithm. 

\subsection{An $H$-sensitive algorithm for nicely colored patterns} 
    \label{sub:H-sensitive}
    Recall \cref{thm:H-sensitive}. 
    \Hsensitive*

    The algorithm is an adaptation of the one for $H$-free graphs from the previous section. In particular, the algorithm is also recursive, and it solves the colored variant of the problem. A first difference between the two algorithms is in their base cases.

    \begin{claim}[Base cases]
    \label{claim:sensitive-base-case}
    Let $H$ be a nicely-colored pattern, and suppose that $|V(H)| = 1$, or $|V(H)| = 4$ and $H$ contains a triangle. Then, there is an algorithm that, given a $3$-colored graph $G$ and an upper bound $1 \leq t$ on the number of colored copies of $H$ in $G$, solves the Triangle Detection problem in $\Order(n^2t)$ time. If $t =0$, then the running times are as follows. If $|V(H)| = 1$, the algorithm runs in $\Order(1)$ time, and if $|V(H)|=4$ and $H$ contains a triangle, the algorithm runs in $\Order(n^2)$ time. 
    Moreover, these running times remain the same even if the parameter $t$ provided to the algorithm is not a correct upper bound; however, the algorithm will exhibit one-sided error: it may fail to find a triangle if one exists, but it will never report a triangle otherwise.
    \end{claim}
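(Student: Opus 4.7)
The plan is to handle the two regimes ($|V(H)|=1$ and $|V(H)|=4$ with a triangle) separately, in each case using the bound on colored copies of $H$ to bound a local degree in $G$ and then brute-forcing over the restricted neighborhood. Throughout, my algorithm only outputs a triangle after directly verifying its three edges, so it never reports a false triangle regardless of whether $t$ is a valid upper bound; the only one-sided error is missing a genuine triangle when $t$ is too small, and the running-time bounds are explicit functions of $n$ and the input $t$, so they hold unconditionally.

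For $|V(H)|=1$, $H$ is a single vertex of some color $c$, so the number of colored copies of $H$ in $G$ equals the number of vertices of color $c$ in $G$. If $t=0$ that color class must be empty and no colored triangle can exist, so the algorithm returns NO in $\Order(1)$. Otherwise, it iterates over (at most) $t$ vertices $v$ of color $c$ and, for each, tests every bichromatic pair of neighbors for an edge, spending $\Order(n^2)$ per $v$ and $\Order(n^2 t)$ in total.

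For $|V(H)|=4$ containing a triangle, the nicely-colored condition (as in the proof of \cref{claim:base-case}) forces $H$ to be, up to color relabelling, a colored triangle with a fourth vertex $d$ attached by at most one edge: either isolated or via a single pendant; any other attachment would create a second triangle in $H$, contradicting nicely-coloredness. The isolated case reduces to a bound on the size of the color class of $d$ in $G$ and is handled identically to the $|V(H)|=1$ case, so I focus on the pendant case and assume without loss of generality that the triangle is colored $(\texttt{RED},\texttt{BLUE},\texttt{GREEN})$ and that $d$ is red, adjacent to the blue vertex. A colored copy of $H$ then consists of a colored triangle $xyz$ together with a red vertex $w\neq x$ adjacent to $y$, so the total number of copies of $H$ equals $\sum_{y \text{ blue}} \tau_y \cdot (|N_R(y)|-1)$, where $\tau_y$ counts the colored triangles through $y$.

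The key observation is that whenever $|N_R(y)| \geq t+2$, any colored triangle through $y$ already spawns at least $t+1$ copies of $H$, contradicting the hypothesis that $t$ upper-bounds the copy count; hence $\tau_y=0$ for every such $y$ whenever $t$ is a valid bound. The algorithm therefore iterates over blue $y$: skip if $|N_R(y)| \geq t+2$; otherwise enumerate the pairs $(x,z) \in N_R(y)\times N_G(y)$ and report the triangle $xyz$ whenever $xz \in E(G)$. The running time is bounded by
\[
    \sum_{\substack{y \text{ blue} \\ |N_R(y)|\leq t+1}} |N_R(y)|\cdot |N_G(y)| \;\leq\; (t+1)\sum_{y} |N_G(y)| \;\leq\; (t+1)\, n^2,
\]
which is $\Order(n^2 t)$ for $t\geq 1$ and $\Order(n^2)$ for $t=0$ (where only $y$ with $|N_R(y)|\leq 1$ are ever processed). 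The main subtlety I foresee is really just the structural classification in the four-vertex case and the careful translation of the copy-count bound into the correct degree threshold; once these are in place, the algorithms themselves are short brute-force searches.
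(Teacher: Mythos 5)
Your proof is correct and follows essentially the same route as the paper: both derive from the copy-count bound that either a color class has size $O(t)$ or a triangle's blue vertex has $O(t)$ red neighbors, and then brute-force over the restricted neighborhoods, with verified outputs giving one-sided error when $t$ is wrong. The only cosmetic difference is that the paper packages the brute force as an invocation of its low-degree cleanup routine (\cref{claim:lowdeg}) with $\Delta = t$ and dismisses the isolated-fourth-vertex case as a subgraph of the pendant case, whereas you spell both out directly.
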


    \begin{proof}
    If $|V(H)| = 1$, then one of the color classes in $G$ has at most $t$ vertices. We invoke the algorithm from \cref{claim:lowdeg} with $\Delta = t$ to solve the problem in $\Order(n^2 t)$ time. 
    If $|V(H)| = 4$, then $H$ must be a triangle with an attached edge (the other cases are either subgraphs of this pattern, or they are not nicely colored). Then, a triangle vertex in $G$ has at most $t$ neighbors in some color class. 
    We again invoke the algorithm from \cref{claim:lowdeg} with $\Delta = t$ to find such a triangle in $\Order(n^2 t)$ time. 

    Note that if $t$ is an incorrect upper bound, then the algorithm from \cref{claim:lowdeg} may fail to find a triangle if such a triangle is not incident to the $t$-core of the graph. 
    \end{proof}
    The following claim will also be useful during the analysis. 
    \begin{claim}
        \label{claim:H-heavy}
        Let $G$ be a graph with at most $t$ copies of a $k$-vertex pattern $H$. For every $\Delta \geq 1$, there are at most $\Delta/2$ vertices that participate in more than $2kt/\Delta$ copies of $H$. 
    \end{claim}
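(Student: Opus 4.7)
The plan is to prove this by a straightforward double-counting argument, which is the standard way to obtain Markov-type bounds of this form.

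First I would consider the bipartite incidence structure between vertices of $G$ and copies of $H$ in $G$. Specifically, I would count the pairs $(v,\phi)$ where $\phi$ is a copy of $H$ in $G$ (in the sense of \cref{def:copies}) and $v \in \phi(V(H))$. Since each copy of $H$ contains exactly $k$ distinct vertices, and by hypothesis there are at most $t$ copies of $H$ in $G$, the total number of such pairs is at most $kt$.

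On the other hand, the same quantity can be counted by summing over vertices $v \in V(G)$ the number of copies of $H$ that $v$ participates in. Let $S \subseteq V(G)$ be the set of vertices that participate in strictly more than $2kt/\Delta$ copies of $H$. Restricting the sum to $S$ already gives a lower bound of $|S| \cdot 2kt/\Delta$ on the number of incident pairs.

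Combining the two bounds yields $|S| \cdot 2kt/\Delta \leq kt$, and rearranging gives $|S| \leq \Delta/2$, as desired. The argument contains no real obstacle; the only care needed is to handle the edge cases $t=0$ (in which case $S$ is empty) and to note that the bound $2kt/\Delta$ being nontrivial only matters when $\Delta$ is not too large, but the claim holds trivially otherwise since $|V(G)|$ is finite and the inequality $|S| \leq \Delta/2$ becomes vacuous as $\Delta$ grows.
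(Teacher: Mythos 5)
Your proposal is correct and is essentially the same argument as the paper's: the paper phrases it as a contradiction (more than $\Delta/2$ such vertices would force more than $\frac{(\Delta/2)\cdot(2kt/\Delta)}{k}=t$ distinct copies), which is exactly your double count of vertex--copy incidences. No meaningful difference in approach.
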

    \begin{proof}
        Each $H$-copy in $G$ is incident to $k$ vertices. If more than $\Delta/2$ vertices in $G$ were participating in more than $\frac{2kt}{\Delta}$ copies of $H$, there would be more than $\frac{\frac{\Delta}{2} \cdot \frac{2kt}{\Delta}}{k} = t$ distinct copies of $H$, contradicting the upper bound of $t$. 
    \end{proof}

We define the input to the recursive algorithm to be a five-tuple $(G, C_G, H, C_H, t)$, where the first four elements are defined the same as before, but now it also contains an upper bound $t$ on the number of copies of $H$ in $G$. 

    \begin{proof}[Proof of \cref{thm:H-sensitive}]
        The recursive algorithm is nearly the same as before, with the differences being only in the first and last steps. Namely, in each recursive level, the following procedure is applied:
    \begin{itemize}
        \item \textbf{Base-case:} 
        If $H$ contains a triangle and $k=4$, or if $k=1$, invoke the algorithm from \cref{claim:sensitive-base-case} on $G$ and $t$ and return its output. 

        \item \textbf{Low-degree cleanup:} 
        This step remains unchanged. Either a triangle is found, or the algorithm proceeds with a colored $\Delta_k$-core of $G$, denoted $G'$.

        \item \textbf{Random embedding of $h$ in $G$:} 
        This step remains unchanged. We obtain $\ell := \Theta( \frac{n}{\Delta_k}  \cdot \log n)$ subgraphs of $G$, denoted $G_1,\ldots,G_\ell$. 
        
        \item \textbf{Recursive calls:}  
        For every $i \in [\ell]$, make a recursive call on 
        $(G_i, C_{G_i}, H', C_{H'}, t')$, where $t'= 2kt/\Delta_k$ and $H'$ is the remaining pattern after removing $h$. 
    \end{itemize}
    \paragraph{Correctness.} Let us prove that if $G$ contains a triangle, the algorithm finds one with probability $1 - \Order(1/n)$. The algorithm never reports a triangle otherwise because both its subroutines, \cref{claim:sensitive-base-case} and \cref{claim:lowdeg} do not do so (even if a base case is reached with an incorrect value of $t$). 

    The only difference from the correctness proof in the $H$-free setting is that it needs to be shown that with probability $1-\Order(1/n)$, there exists a subgraph $G_i$ that contains a triangle, and also contains at most $t'$ colored copies of $H'$. This claim implies that if a triangle survives through all recursive levels, it must be found in the base case. 
    Suppose that a triangle does not contain vertices outside the $\Delta_k$-core, so it is not found in the low-degree cleanup. By \cref{claim:H-heavy}, there are at most $\Delta_k/2$ vertices in the entire graph that are ``$H$-heavy'' (i.e., participate in more than $2kt/\Delta_k$ copies of $H$). A triangle that survives a cleanup step must have a vertex with at least $\Delta_k$ neighbors. At most $\Delta_k/2$ of these neighbors can be $H$-heavy. Therefore, the triangle has at least $\Delta_k - \Delta_k/2 = \Delta_k/2$ ``$H$-light'' neighbors. Hence, the probability that $h$ is embedded into one of the vertices that is incident to less than $2kt/\Delta_k$ copies of $H$, is at least 
    \[
        1-\left(1-\frac{\Delta/2}{n}\right)^{\ell} \geq 1-\exp\left(- \frac{\Delta}{2n}\cdot \ell\right) = 1-n^{- \Theta(1)}.
    \]
    By choosing an appropriate constant factor in $\ell$, we get that this happens with probability $1-\Order(1/n)$. Suppose now that $s_i$ is an $H$-light neighbor of the triangle in the sample, i.e., $s_i$ participates in at most $t' = 2kt/\Delta_k$ colored copies of $H$. So $G_i$ contains the triangle, and the number of colored copies of $H'$ in $G_i$ is at most $t'$. That is because every colored copy of $H'$ in $G_i$ is completed, with $s_i$, to a colored copy of $H$ in $G$.

    \paragraph{Analysis.}
    Let us focus on the case where $H$ is triangle-free, and we will consider the other case later. 
    Our goal is to balance the cost of the recursive calls with the cost of the base case, which depends on $t$. We define $\Delta_1 := \frac{t}{\prod_{j=2}^k \Delta_j}$ to represent the cost of this final base case. Let us prove that the running time of the algorithm is:
    \begin{equation*}
            \tOrder\left( \frac{n^{k+1}}{\prod_{j=2}^k \Delta_{j}}  + \sum_{i=0}^{k-1} \frac{n^{2+i} \Delta_{k-i}}{\prod_{j=0}^{i-1} \Delta_{k-j}} \right).
    \end{equation*}
    Let $T(k',t')$ be the total running time for a call on a pattern of size $k'$ and an upper bound $t'\geq 0$ on the number of colored copies of that pattern. It follows from \cref{claim:sensitive-base-case} that $T(1,t') = \Order(n^2 t')$. For a larger $k'$, we get:
    \[
        T(k',t') = 
        \tOrder\left(n^2 \Delta_{k'} + \frac{n}{\Delta_{k'}}\left(n^2   + T\left(k'-1,\frac{2t'k'}{\Delta_{k'}}\right)\right) \right).
    \]
    Note that unlike in the analysis of the $H$-free setting, here the $n^2$ term is not necessarily dominated by $T(k'-1,\frac{2t'k'}{\Delta_{k'}})$, because for $k' = 2$ it could be that $\frac{t2k'}{\Delta_{k'}} < 1$ (which is treated as $0$), and thus $T(1,\frac{t2k'}{\Delta_{k'}}) = \Order(1)$. Hence, when we unroll this recurrence starting from $k$ and $t$, we get:
\begin{equation*}
    \begin{split}
        T(k,t) &= \tOrder\left(n^2 \Delta_k + \frac{n}{\Delta_k}\left(n^2 + T\left(k-1,\frac{t2k}{\Delta_k}\right) \right) \right) \\
             &= \tOrder\left(n^2 \Delta_k + \frac{n}{\Delta_k} \left( n^2 + n^2\Delta_{k-1} + \frac{n}{\Delta_{k-1}}\left(n^2 + T\left(k-2,\frac{t2k\cdot2(k-1)}{\Delta_k\Delta_{k-1}}\right)\right)\right) \right)  \\
             &=\tOrder\left( n^2 \Delta_k + \frac{n^3}{\Delta_k} + \frac{ n^3 \Delta_{k-1}}{\Delta_k} + \frac{n^4}{\Delta_k \Delta_{k-1}} + \frac{n^4 \Delta_{k-2}}{\Delta_k \Delta_{k-1}}  + \cdots + \frac{n^{k+1}}{\prod_{i=2}^{k} \Delta_i} + \frac{n^{k-1} \cdot T\left(1,\frac{t \prod_{i=0}^{k-2} (2 (k-i)) }{\prod_{i=2}^k\Delta_i}\right)}{\prod_{i=2}^{k} \Delta_i} \right) \\
             &=\tOrder\left( n^2 \Delta_k + \frac{n^3}{\Delta_k} + \frac{ n^3 \Delta_{k-1}}{\Delta_k} + \frac{n^4}{\Delta_k \Delta_{k-1}} + \frac{n^4 \Delta_{k-2}}{\Delta_k \Delta_{k-1}}  + \cdots + \frac{n^{k+1}}{\prod_{i=2}^{k} \Delta_i} + \frac{n^{k+1} \Delta_1 }{\prod_{i=2}^{k} \Delta_i} \right) \\
             &= \tOrder\left( \frac{n^{k+1}}{\prod_{i=2}^k \Delta_{i}}  + \sum_{i=0}^{k-1} \frac{n^{2+i} \Delta_{k-i}}{\prod_{j=0}^{i-1} \Delta_{k-j}} \right)
     \end{split}
\end{equation*}
    Here, we conveniently used the definition of $\Delta_1 = \frac{t}{\prod_{i=2}^k \Delta_i}$ because the running time for the base case, starting with a pattern of size $k$ and an upper bound $t$, is:
    \[
T\left(1,\frac{t \prod_{i=0}^{k-2} (2 (k-i)) }{\prod_{i=2}^k\Delta_i}\right) = 
         \Order\left( n^2 \Delta_1 \right).
    \]

    Now consider the case where $H$ contains a triangle. Let us prove that the running time of the algorithm is:
    \[
        \tOrder \left( \frac{n^{k-2}}{\prod_{i=5}^k \Delta_{i}} + \sum_{i=0}^{k-4} \frac{n^{2+i} \Delta_{k-i}}{\prod_{j=0}^{i-1} \Delta_{k-j}} \right), 
    \]
    where we define the degree threshold corresponding to the base case as $\Delta_4 = \frac{t}{\prod_{i=5}^k \Delta_i}$. 
    Here, because the running time for the base case (\cref{claim:sensitive-base-case}) is $T(4,t') = \Order(n^2(1+t'))$ for every $t' \geq 0$, the $n^2$ terms are dominated by $T(k'-1,\frac{2t'k'}{\Delta_{k'}})$. Hence, similarly to the analysis in the $H$-free setting, we get: 
\begin{equation*}
    \begin{split}
             &\tOrder\left( n^2 \Delta_k +  \frac{ n^3 \Delta_{k-1}}{\Delta_k}  + \frac{n^4 \Delta_{k-2}}{\Delta_k \Delta_{k-1}}  + \cdots +  \frac{n^{k-4} \cdot T\left(4,\frac{t \prod_{i=0}^{k-5} (2 (k-i)) }{\prod_{i=5}^k\Delta_i}\right)}{\prod_{i=5}^{k} \Delta_i} \right) \\
             &=\tOrder\left( n^2 \Delta_k +  \frac{ n^3 \Delta_{k-1}}{\Delta_k}  + \frac{n^4 \Delta_{k-2}}{\Delta_k \Delta_{k-1}}  + \cdots +  \frac{n^{k-4} \cdot (n^2 + n^2\Delta_4) }{\prod_{i=5}^{k} \Delta_i} \right) \\ 
             &= \tOrder \left( \frac{n^{k-2}}{\prod_{i=5}^k \Delta_{i}} + \sum_{i=0}^{k-4} \frac{n^{2+i} \Delta_{k-i}}{\prod_{j=0}^{i-1} \Delta_{k-j}} \right).
    \end{split}
\end{equation*}

    \paragraph{Optimizing the parameters.}
    Let us start with the case that $H$ is triangle-free. Recall the running time:
    \begin{equation*}
            \tOrder\left( \frac{n^{k+1}}{\prod_{j=2}^k \Delta_{j}}  + \sum_{i=0}^{k-1} \frac{n^{2+i} \Delta_{k-i}}{\prod_{j=0}^{i-1} \Delta_{k-j}} \right).
    \end{equation*}
    As we did in the previous section, it is helpful to think of $\Delta_j$ as $n^{1-x_j}$ for every $j \in [2,k]$. 
    The exponent in the leftmost term is  
    \[
        (k+1)-\left(\sum_{i=2}^k(1-x_i)\right)= 2+\sum_{i=2}^k x_i. 
    \]
    For the rightmost term, i.e., the one corresponding to $\Delta_1$, let us write $t = n^{z}$ for some $0 \leq z \leq k$. \footnote{We may assume that $t \geq 1$, and hence $z \geq 0$, as otherwise one can simply invoke the algorithm for $H$-free graphs from the previous section.}
    So the exponent of $\Delta_1 = \frac{t}{\prod_{j=2}^k \Delta_j}$ is $z - \sum_{i=2}^k (1-x_i) = z -(k-1) + \sum_{i=2}^k x_i$. Together, the exponents of all the terms in the running time are divided into three groups: 
    \begin{equation}
        \label{eq:exp1}
        2+\sum_{i=2}^k x_i,
    \end{equation}
    \begin{equation}
        \label{eq:exp2}
        3-x_k\,, \, 3+x_k-x_{k-1}\,,\,\ldots\,,\,3+\left(\sum_{i=3}^k x_i\right) -x_2, 
    \end{equation}
    \begin{equation}
        \label{eq:exp3}
        3 +z-k+ 2\left(\sum_{i=2}^k x_i\right).
    \end{equation}

    Observe that for sufficiently small $z$, \cref{eq:exp3} is dominated by \cref{eq:exp1} and \cref{eq:exp2}. 
    In this case, \cref{eq:exp1} and \cref{eq:exp2} are optimized to $3-\frac{1}{2^{k-1}}$ when we set $x_j = \frac{1}{2^{j-1}}$ for every $j \in [2,k]$. For larger $z$, \cref{eq:exp3} dominates \cref{eq:exp1}. In this case, \cref{eq:exp2}  and \cref{eq:exp3} optimize to $3-\frac{k-z}{2^k-1}$ when we set $x_j = \frac{C(k-z)}{2^{j-1}}$, where $C = \frac{2^{k-1}}{2^k-1}$. The cut-off occurs at $z = k-2+\frac{1}{2^{k-1}}$. Thus, let us state the final assignments. For every $j \in [2,k]$, we set: 
    \begin{equation*}
        \begin{cases}
            \Delta_j = n^{1-\frac{1}{2^{j-1}}} & t \leq n^{k-2+\frac{1}{2^{k-1}}} \\
            \Delta_j = n^{1-\frac{kC}{2^{j-1}}} \cdot t^{\frac{C}{2^{j-1}}} & t > n^{k-2+\frac{1}{2^{k-1}}}.
        \end{cases}
    \end{equation*}

            Let us now prove that these assignments indeed give the stated upper bound. Throughout the proof, we will use \cref{eq:exponents} without stating it each time. \\
            \textbf{Case 1:} $t \leq n^{k-2+\frac{1}{2^{k-1}}}$. \cref{eq:exp1} is:
            \begin{equation*}
                \begin{split}
                    2+\sum_{i=2}^k x_i = 2 + \sum_{i=2}^k \frac{1}{2^{i-1}} = 2 + \sum_{i=1}^{k-1} \frac{1}{2^i} = 2 + (1 - \frac{1}{2^{k-1}}) = 3 - \frac{1}{2^{k-1}}.
                \end{split}
            \end{equation*}
            The terms in \cref{eq:exp2}, for every $j \in [1,k-2]$, are:
            \begin{equation*}
                \begin{split}
                3 + \left(\sum_{i=0}^{j-1} x_{k-i}\right) - x_{k-j} &= 3 + \left(\sum_{i=0}^{j-1} \frac{1}{2^{k-i-1}}\right) - \frac{1}{2^{k-j-1}}\\ 
                &= 3 + \left(\sum_{i=k-j}^{k-1} \frac{1}{2^{i}}\right) - \frac{1}{2^{k-j-1}} \\
                &=3 + (\frac{1}{2^{k-j-1}} - \frac{1}{2^{k-1}}) - \frac{1}{2^{k-j-1}} \\
                &= 3 - \frac{1}{2^{k-1}}
                \end{split}
            \end{equation*}
            Moreover, for \cref{eq:exp3}, we have:
            \begin{equation*}
                \begin{split}
                3 +z-k+ 2\left(\sum_{i=2}^k x_i\right) &\leq 
                3 + \left(k-2 + \frac{1}{2^{k-1}}\right) - k + 2\left(\sum_{i=2}^k \frac{1}{2^{i-1}}\right) \\ 
                &= 1 + \frac{1}{2^{k-1}} + 2\left(1- \frac{1}{2^{k-1}}\right) = 3- \frac{1}{2^{k-1}}.
                \end{split}
            \end{equation*}

            \textbf{Case 2:} $t > n^{k-2+\frac{1}{2^{k-1}}}$. Again, we compute each term. 
\cref{eq:exp1} is:
            \begin{equation*}
                \begin{split}
                    2+\sum_{i=2}^k x_i &= 2 + \sum_{i=2}^k \frac{C(k-z)}{2^{i-1}} = 2 + C(k-z)(1 - \frac{1}{2^{k-1}}) \\
                    &=2 +  (k-z)\frac{2^{k-1}-1}{2^k-1} \\
                    &= 2 + (k-z) \frac{2^{k-1}}{2^k-1} - \frac{k-z}{2^k-1} \\
                    &\leq 2 + \left(k-\left(k-2+\frac{1}{2^{k-1}}\right)\right)\frac{2^{k-1}}{2^k-1} - \frac{k-z}{2^k-1} \\
                    &= 3 - \frac{k-z}{2^k-1}.
                \end{split}
            \end{equation*}
            The terms in \cref{eq:exp2}, for every $j \in [1,k-2]$, are:
            \begin{equation*}
                \begin{split}
                3 + \left(\sum_{i=0}^{j-1} x_{k-i}\right) - x_{k-j} &= 3 + \left(\sum_{i=0}^{j-1} \frac{C(k-z)}{2^{k-i-1}}\right) - \frac{C(k-z)}{2^{k-j-1}}
                = 3 - \frac{C(k-z)}{2^{k-1}} 
                = 3-\frac{k-z}{2^k-1}
                \end{split}
            \end{equation*}
            Moreover, for \cref{eq:exp3}, we have:
            \begin{equation*}
                \begin{split}
                3 +z-k+ 2\left(\sum_{i=2}^k x_i\right) &= 
                3 + z - k + 2\left(\sum_{i=2}^k \frac{C(k-z)}{2^{i-1}}\right) \\ 
                &= 3+ z-k + 2C(k-z) \left(1-\frac{1}{2^{k-1}}\right)  \\
                &= 3 + z - k + (k-z) \left(1- \frac{1}{2^k-1} \right) \\
                &= 3- \frac{k-z}{2^k-1}.
                \end{split}
            \end{equation*}

            We have shown that if $t \leq n^{k-2+\frac{1}{2^{k-1}}}$, the exponents of all the terms are at most $3-\frac{1}{2^{k-1}}$, and if $t$ is larger, then the exponents are at most $3-\frac{k-z}{2^{k}-1}$. This concludes the proof of the first part of the theorem: if $H$ is triangle-free, then there is a Triangle Detection algorithm for $H$-free graphs, running in time: 
            \[
                \tOrder\left(n^{3-\frac{1}{2^{k-1}}} + n^{3-\frac{k-z}{2^k-1}}\right) =
\tOrder\left( n^{3-\frac{1}{2^{k-1}}} + n^{3 - \frac{k}{2^k - 1}} \cdot t^{\frac{1}{2^k - 1}}\right) 
            \]

        Let us now continue with the case that $H$ contains a triangle. Recall that the running time of the algorithm is:
    \[
        \tOrder \left( \frac{n^{k-2}}{\prod_{i=5}^k \Delta_{i}} + \sum_{i=0}^{k-4} \frac{n^{2+i} \Delta_{k-i}}{\prod_{j=0}^{i-1} \Delta_{k-j}} \right). 
    \]
        In this case, the exponents are the following, writing $t = n^z$ for some $0\leq z \leq k-3$:
        \begin{equation*}
        \begin{split}
             &2 + \sum_{i=5}^k x_i, \\ 
             & \forall j \in [0,k-5]:\; 3 + \left(\sum_{i=0}^{j-1} x_{k-i}\right) - x_{k-j},\\
             &6 + z-k + 2\left(\sum_{i=5}^k x_i\right)
        \end{split}
        \end{equation*}
        Solving the optimization problem similarly to before yields the following optimal assignments.
        If $z \leq k-5 + \frac{1}{2^{k-4}}$, we set $x_i = \frac{1}{2^{i-4}}$ for every $i \in [5,k]$. 
        Otherwise, we set $x_j = \frac{C(k-3-z)}{2^{j-4}}$, where $C = \frac{2^{k-4}}{2^{k-3}-1}$.           

        \textbf{Case 1:} $t < n^{k-5+\frac{1}{2^{k-4}}}$. 
        We show, using \cref{eq:exponents}, that every exponent above is bounded by $3-\frac{1}{2^{k-4}}$. For the first one, we get: 
        \[
            2 + \sum_{i=5}^k x_i  = 2 + \sum_{i=5}^k \frac{1}{2^{i-4}} = 2 + \left(1-\frac{1}{2^{k-4}}\right) = 3- \frac{1}{2^{k-4}}.
        \]
        For the middle exponents, we get
        \begin{equation*}
        \begin{split}
            &3 + \left(\sum_{i=0}^{j-1} x_{k-i}\right) - x_{k-j} = 3 + \left(\sum_{i=0}^{j-1} \frac{1}{2^{k-i-4}}\right) - \frac{1}{2^{k-j-4}} \\
            &= 3 + \left(\frac{1}{2^{k-j-4}} - \frac{1}{2^{k-4}}\right) - \frac{1}{2^{k-j-4}} = 3 - \frac{1}{2^{k-4}}.
        \end{split}
        \end{equation*}
        For the last exponent, we get:
        \begin{equation*}
        \begin{split}
            &6 + z - k + 2\left(\sum_{i=5}^k x_i\right) = 
            6 + z - k + 2\left(1-\frac{1}{2^{k-4}} \right) \\
            &\leq 6 + \left(k-5+\frac{1}{2^{k-4}}\right) -k + 2\left(1-\frac{1}{2^{k-4}} \right) \\
            &= 3 - \frac{1}{2^{k-4}}
        \end{split}
        \end{equation*}

        \textbf{Case 2:} $t \geq n^{k-5+\frac{1}{2^{k-4}}}$. Let us show that every exponent is at most $3- \frac{(k-3-z)}{2^{k-3}-1}$. For the first term, we have:
        \begin{equation*}
        \begin{split}
            &2 + \sum_{i=5}^k x_i = 2 + \sum_{i=5}^k \frac{C(k-3-z)}{2^{i-4}} = 2 + C(k-3-z)\left(1-\frac{1}{2^{k-4}}\right) \\
            &= 2 + (k-3-z)\frac{2^{k-4}-1}{2^{k-3}-1} 
            = 2 + (k-3-z) \frac{2^{k-4}}{2^{k-3}-1} - \frac{k-3-z}{2^{k-3}-1} \\
            &\leq 2+\left(k-3-\left(k-5+\frac{1}{2^{k-4}}\right)\right)\frac{2^{k-4}}{2^{k-3}-1} - \frac{k-3-z}{2^{k-3}-1}  \\
            &= 3- \frac{k-3-z}{2^{k-3}-1}.
        \end{split}
        \end{equation*}

        For the middle terms, we have:
        \begin{equation*}
            \begin{split}
            &3 + (\sum_{i=0}^{j-1} x_{k-i}) - x_{k-j} = 3 + \left(\sum_{i=0}^{j-1} \frac{C(k-3-z)}{2^{k-i-4}}\right)  - \frac{C(k-3-z)}{2^{k-j-4}} \\
            &=3 - C(k-3-z) \frac{1}{2^{k-4}} \\
            &= 3- \frac{k-3-z}{2^{k-3}-1}.
            \end{split}
        \end{equation*}
        Moreover, the last exponent is:
        \begin{equation*}
        \begin{split}
            &6 + z - k + 2(\sum_{i=5}^k x_i) = 6 + z - k + 2\left(\sum_{i=5}^k \frac{C(k-3-z)}{2^{i-4}}\right) \\
            &=6+ z - k + (k-3-z)C\cdot 2\left(1- \frac{1}{2^{k-4}}\right) \\
            &= 6+ z - k + (k-3-z) \frac{2^{k-3}-2}{2^{k-3}-1} \\
            &= 6+z-k +(k-3-z) - \frac{k-3-z}{2^{k-3}-1} = 3 -  \frac{k-3-z}{2^{k-3}-1}. 
        \end{split}
        \end{equation*}
        We have shown that the running time of the algorithm, when $H$ contains a triangle, is 
        \[
            \tOrder\left(n^{3-\frac{1}{2^{k-4}}} + n^{3-\frac{k-3-z}{2^{k-3}-1}}\right) = \tOrder\left(n^{3-\frac{1}{2^{k-4}}} + n^{3-\frac{k-3}{2^{k-3}-1}} \cdot t^{\frac{1}{2^{k-3}-1}}\right),
        \]
        which concludes the proof.
    \end{proof}

    The algorithm is also adapted to list the triangles. 
    \begin{proof}[Proof of \cref{thm:listing-sensitive}]
        Adapting the above algorithm also to list all triangles is done similarly to what was done for listing in $H$-free graphs. The base cases are clearly covered because adapting the algorithm from $\cref{claim:lowdeg}$ to list triangles is easy. The only difference is that we need to argue that for every triangle in the graph, an $H$-light neighbor of that triangle is hit by the sample with high probability. Since the number of $H$-light neighbors of that triangle is $\Omega(\Delta)$, where $\Delta$ is the degree threshold used in a recursive level, the claim holds. 
    \end{proof}

\subsection{A $C_5$-sensitive algorithm}
    In this subsection, we prove \cref{thm:C5-sensitive}.
    \Cfivesensitive*

    Let us start with some definitions. Let $1 \leq \Delta \leq n$ be a parameter. 
    For every vertex $v \in V(G)$, let $L_1(v) := \set{u \in N(v) | \deg(u) \leq \Delta}$, and $L_2(v) := \set{ u \in N(L_1(v)) \setminus \set{v} | \deg(u) \leq \Delta} $. For every $u \in L_2(v)$, denote by $p_v(u)$ the number of neighbors that $u$ has in $L_1(v)$. Denote by $t_v$ the number of $5$-cycles containing $v$.

    \begin{claim}
    \label{claim:pxpy}
        Let $v \in V(G)$, and suppose that there are no triangles in $ L_1(v) \cup L_2(v)$ that consist of one vertex in $L_1(v)$ and two vertices in $L_2(v)$. Then:
        \[
            \sum_{uw \in E(L_2(v),L_2(v))} p_v(u)p_v(w) \leq t_v.
        \]
    \end{claim}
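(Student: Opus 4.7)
The plan is a double-counting argument in which each contribution to the sum is charged to a distinct $5$-cycle through $v$, with the hypothesis supplying the one non-trivial distinctness.

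Expand the sum by a direct count:
\[
\sum_{uw \in E(L_2(v), L_2(v))} p_v(u)\,p_v(w) \;=\; |\mathcal{T}|,
\]
where $\mathcal{T}$ is the collection of triples $(uw, a, b)$ such that $uw$ is an (arbitrarily-but-fixedly oriented) edge of $E(L_2(v), L_2(v))$, $a \in N(u) \cap L_1(v)$, and $b \in N(w) \cap L_1(v)$. To each such triple associate the closed walk $W(uw, a, b) = v \to a \to u \to w \to b \to v$ of length $5$, which is well-defined since $va$, $au$, $uw$, $wb$, $bv$ are all edges by construction.

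The hypothesis is invoked to force $a \neq b$: if $a = b$, then $a$ would be a common neighbor of $u$ and $w$ lying in $L_1(v)$, so $\{a, u, w\}$ would span a triangle with one vertex in $L_1(v)$ and two in $L_2(v)$, contradicting the absence of such triangles. Combined with the trivial distinctness of $v$ from each of $a, b, u, w$ (since $v \notin L_1(v) \cup L_2(v)$), the non-degeneracy of edges ($a \neq u$, $b \neq w$, $u \neq w$), and the disjointness of $L_1(v)$ and $L_2(v)$ as the layers at distance $1$ and $2$ from $v$ (which rules out $a = w$ and $b = u$), the walk $W$ visits five distinct vertices and is therefore a genuine $5$-cycle through $v$.

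Finally, injectivity of $(uw, a, b) \mapsto W$ comes from inverting the map: in any $5$-cycle $v - x_1 - x_2 - x_3 - x_4 - v$, the edge $x_2 x_3$ is the unique one not incident to $v$, and the ordered parents $(x_1, x_4) \in L_1(v) \times L_1(v)$ are determined by the cycle together with the chosen orientation of $x_2 x_3$. Hence $|\mathcal{T}|$ is at most the number of $5$-cycles through $v$, i.e., $t_v$, proving the claim. The main obstacle is verifying that $W$ is a simple $5$-cycle; apart from $a \neq b$ (handled by the hypothesis), the subtle part is ensuring $\{a, b\} \cap \{u, w\} = \emptyset$, which relies on treating $L_1(v)$ and $L_2(v)$ as genuinely disjoint layers around $v$.
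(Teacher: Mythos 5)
Your proposal is the same double-counting argument as the paper's: each pair of ``parents'' of the two endpoints of an $L_2$--$L_2$ edge is charged to a closed walk through $v$, the hypothesis is used exactly where the paper uses it (to kill the $a=b$ case, which would be a triangle with one vertex in $L_1(v)$ and two in $L_2(v)$), and the remaining walks are injectively identified with $5$-cycles through $v$ via the unique edge of the cycle not incident to $v$. You are in fact more explicit than the paper, which simply asserts that the walk ``completes a $5$-cycle'' once $s\neq s'$; your simplicity and injectivity checks are the right things to verify.

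One caveat on the step you yourself flag as subtle: you rule out $a=w$ and $b=u$ by asserting that $L_1(v)$ and $L_2(v)$ are disjoint distance layers, but that is not what the paper's definition says. $L_2(v)$ is defined as the low-degree vertices of $N(L_1(v))\setminus\{v\}$, which excludes only $v$; if $L_1(v)$ spans an edge, that edge's endpoints lie in $L_1(v)\cap L_2(v)$, and then a parent can coincide with the opposite endpoint, degenerating the walk. Neither the claim's stated hypothesis nor the definitions alone forbid this (a small example with $V=\{v,a,u,w\}$ and edges $va,vu,au,uw$ has sum $2$ but $t_v=0$), so the disjointness really comes from the context in which the claim is applied: step (i) of the $C_5$-sensitive algorithm halts upon any edge inside $N(v)$, so when the claim is invoked $L_1(v)$ is edge-free, hence $L_1(v)\cap L_2(v)=\emptyset$ and your distinctness argument goes through. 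The paper's own proof tacitly relies on the same fact, so this is a shared implicit assumption rather than a gap specific to your write-up; just be aware that ``disjoint by definition'' is not the correct justification.
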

    \begin{proof}
        The sum $\sum_{uw \in E(L_2(v),L_2(v))} p_v(u)p_v(w)$ counts the number of $4$-vertex walks of the form $s - u - w - s'$ such that $s, s' \in L_1(v)$ and $uw$ is an edge in $L_2(v)$. If $s = s'$, the walk is a triangle with a vertex in $L_1(v)$ and two vertices in $L_2(v)$. By assumption, triangles of this type do not exist. If $s \neq s'$, the walk completes a $5$-cycle $v - s - u - w - s' - v$.  Hence, every walk counted in the sum corresponds to a unique $5$-cycle containing $v$. Thus, the sum is at most the number of $5$-cycles containing $v$. 
    \end{proof}

    The following lemma will be crucial for the analysis of the algorithm. The reader can skip its proof and return to it later. 
    
    \begin{lemma}
    \label{lem:min}
        Let $v \in V(G)$, and suppose that there are no triangles in $ L_1(v) \cup L_2(v)$ that consist of one vertex in $L_1(v)$ and two vertices in $L_2(v)$. Then:
        \[
            \sum_{uw \in E(L_2(v),L_2(v))} \min( p_v(u), p_v(w) ) = \Order( n^{2/3} t_v^{1/3} \Delta ).
        \]
    \end{lemma}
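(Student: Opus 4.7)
The plan is to bound the sum by splitting the edges of $E(L_2(v),L_2(v))$ into two classes based on a threshold $\tau$ (to be optimized at the end), and bounding each class separately using \cref{claim:pxpy} together with the degree cap on $L_2(v)$.

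First, I would observe two basic facts. Every $u \in L_2(v)$ has degree at most $\Delta$ by definition, so $p_v(u) \leq \Delta$ and the number of edges inside $L_2(v)$ is at most $|L_2(v)|\Delta/2 = O(n\Delta)$. Also, \cref{claim:pxpy} directly gives $\sum_{uw\in E(L_2(v),L_2(v))} p_v(u)p_v(w) \leq t_v$.

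Next, fix a parameter $\tau \geq 1$ and split the sum. For edges with $\min(p_v(u),p_v(w)) \leq \tau$, the per-edge contribution is at most $\tau$, and there are at most $O(n\Delta)$ such edges, contributing $O(n\Delta\tau)$ in total. For edges with $\min(p_v(u),p_v(w)) > \tau$, we have $p_v(u)p_v(w) > \tau^2$, so by \cref{claim:pxpy} the number of such edges is at most $t_v/\tau^2$. Each contributes at most $\Delta$ (since both $p_v$-values are $\leq \Delta$), giving a total contribution of $O(\Delta t_v/\tau^2)$. Adding the two gives
\begin{equation*}
\sum_{uw \in E(L_2(v),L_2(v))} \min(p_v(u),p_v(w)) = O\!\left(n\Delta\tau + \frac{\Delta t_v}{\tau^2}\right).
\end{equation*}

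Finally, I would balance the two terms by setting $\tau = (t_v/n)^{1/3}$, which yields $O(n^{2/3}t_v^{1/3}\Delta)$ as claimed. The only mild subtlety is the edge case where $\tau < 1$ (i.e.\ $t_v < n$); there one can simply use the bound $\min(p_v(u),p_v(w)) \leq 1$ on every edge and get $O(n\Delta)$, which is absorbed into the stated bound. I do not foresee any serious obstacle — the heart of the argument is simply a threshold trick combining \cref{claim:pxpy} with the degree cap on $L_2(v)$.
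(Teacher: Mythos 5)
Your argument is correct and is essentially the paper's proof: the paper thresholds on the product $p_v(u)p_v(w)\le\tau$ (light edges contribute at most $\sqrt{\tau}$ each, heavy edges number at most $t_v/\tau$ by \cref{claim:pxpy}, each contributing at most $\Delta$), which is exactly your split after the reparametrization $\tau\mapsto\tau^2$, and both optimizations give $O(n^{2/3}t_v^{1/3}\Delta)$. One small caveat: your edge-case remark for $\tau<1$ is both unnecessary and incorrect as stated — $\min(p_v(u),p_v(w))\le 1$ need not hold when $0<t_v<n$, and $O(n\Delta)$ is \emph{larger} than the claimed bound in that regime, so it is not absorbed by it; no fix is needed anyway, since for $t_v\ge 1$ your main computation goes through verbatim (with $\tau<1$ there are simply no light edges, and the heavy-edge count $t_v/\tau^2$ still applies), while for $t_v=0$ \cref{claim:pxpy} forces $E(L_2(v),L_2(v))=\emptyset$ because every product $p_v(u)p_v(w)$ is at least $1$.
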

    \begin{proof}
        Let $\tau$ be a parameter to be set later. Define $light(v) := \set{ uw \in E(L_2(v),L_2(v)) \mid p_v(u)p_v(w) \leq \tau }$ and $heavy(v) := \set{ uw \in E(L_2(v),L_2(v)) \mid p_v(u)p_v(w) > \tau }$.
        For heavy edges, i.e., $uw \in heavy(v)$, we get from \cref{claim:pxpy} that $\sum_{uw \in heavy(v)} p_v(u)p_v(w) \leq t_v$, and therefore the number of heavy edges is bounded by $|heavy(v)| \leq t_v/\tau$. For those edges we use the bound $\min( p_v(u), p_v(w) ) \leq \min(\deg(u), \deg(w)) \leq \Delta$. 
        For light edges, i.e., $uw \in light(v)$, we have $\min( p_v(u),p_v(w) ) \leq \sqrt{\tau}$. We bound their number by $|E(L_2(v),L_2(v))| \leq n \Delta$. Thus, we get:
        \begin{equation*}
            \begin{split}
                &\sum_{uw \in E(L_2(v),L_2(v))} \min(p_v(u),p_v(w)) = 
                \sum_{uw \in light(v)} \min( p_v(u), p_v(w) ) + \sum_{uw \in heavy(v)} \min(p_v(u),p_v(w)) \\ 
                &\leq |E(L_2(v),L_2(v))|\sqrt{\tau} + \frac{t_v}{\tau} \Delta 
                = \Order\left(n \Delta \sqrt{\tau} + \frac{t_v}{\tau} \Delta \right).
            \end{split}
        \end{equation*}
        This sum is minimized to $\Order(n^{2/3} t_v^{1/3} \Delta )$ by setting $\tau = (t_v /n)^{2/3}$.
    \end{proof}

\begin{proof}[Proof of \cref{thm:C5-sensitive}]
    Suppose that we know the maximum degree $\Delta$ of a vertex in a triangle. We will later replace this assumption with exponential search. The algorithm is as follows:
    \begin{enumerate}[label=(\roman*)]
        \item 
            Sample uniformly at random $\ell = \Theta( \frac{n}{\Delta} \log n)$ vertices from $G$. For each sampled vertex $v$, process its adjacency list and find the vertices in $L_1(v)$. For every $u \in L_1(v)$, process its adjacency list and find its neighbors in $L_2(v)$. If a triangle containing $v$ is found in this step (i.e., an edge within $N(v)$), report the triangle and halt. Otherwise, find all the edges with both endpoints in $L_2(v)$ by processing the adjacency lists of the vertices in $L_2(v)$. 
        \item 
            For every sampled vertex $v$, and for every edge $uw \in E(L_2(v),L_2(v))$, check if $uw$ participates in a triangle with some vertex in $L_1(v)$. This step is implemented as follows. For the endpoint with fewer neighbors in $L_1(v)$ (i.e., the endpoint that minimizes $p_v(\cdot)$), check if any of its neighbors in $L_1(v)$ are also adjacent to the other endpoint. If so, report a triangle and halt. 
    \end{enumerate}
    If no triangle is found throughout the execution, report that $G$ is triangle-free.
    \paragraph{Correctness.} Let us now prove that if $G$ contains a triangle, the algorithm finds it with probability $1-\Order(1/n)$. It is clear that the algorithm never reports a triangle otherwise. Suppose that $G$ contains a triangle such that the maximum degree of its vertices is $\Delta$. Hence, the probability that a sample of $\ell$ vertices hits the neighborhood of the maximum degree vertex in the triangle is at least:
    \[
        1-\left(1-\frac{\Delta}{n}\right)^{\ell} \geq 1- \exp \left(- \frac{\Delta}{n} \ell\right) = 1- \exp \left( \Theta\left(\frac{\Delta}{n} \cdot \frac{n}{\Delta} \log n \right)\right) = 1-n^{-\Theta(1)}.
    \]
    By choosing an appropriate constant factor in $\ell$, this event occurs with probability $ 1-\Order(1/n)$. Now suppose that $v$ is a sampled vertex that is also a neighbor of a triangle vertex. Suppose that $v$ is not in a triangle, as that case is handled in step (i). Since the maximum degree among the triangle vertices is $\Delta$, and $v$ is not in a triangle, then $L_1(v)$ contains one vertex of the triangle and $L_2(v)$ contains two. Therefore, the triangle must be found in step (ii).

    \paragraph{Analysis} Let us prove that the expected running time is $\tOrder(n^2 + n^{4/3} t^{1/3})$. We later show how to achieve worst-case running time at the cost of an extra $\log n$-factor. 
    The cost of sampling $\ell$ vertices and computing $L_1(v)$ and $L_2(v)$ for every sampled vertex $v$ is: 
    \[
    \tOrder\left(\ell n\Delta) \right) = \tOrder(n^2).
    \]
    For step (ii), searching for a triangle in $L_1(v) \cup L_2(v)$, the cost is bounded by \cref{lem:min}:
    \[
        \Order\left(\sum_{uw \in E(L_2(v),L_2(v))} \min(p_v(u),p_v(w))\right)=\Order(n^{2/3} t_v^{1/3} \Delta ).
    \]
    Hence, the total expected time for step (ii), where the probability is taken over the choice of $\ell$ i.i.d. vertices $v_1,\ldots,v_\ell$, is:
    \[
        \Ex\left[\sum_{j=1}^\ell n^{2/3} t_{v_j}^{1/3} \Delta\right] = \ell n^{2/3} \Delta \Ex\left[  t_{v_1}^{1/3} \right]. 
    \]
    By Jensen's inequality, we have that:
    \[
        \Ex\left[ t_{v_1}^{1/3} \right] \leq \Ex\left[ t_{v_1} \right]^{1/3} = \Order\left( \left(\frac{t}{n} \right)^{1/3} \right).
    \]
    Hence, the total expected time is:
    \[
        \tOrder\left( n^2 + \ell n^{2/3} \Delta \left( \frac{t}{n} \right)^{1/3} \right) = \tOrder(n^2 + n^{4/3} t^{1/3} ). 
    \]
    To remove the assumption on the knowledge of $\Delta$, we execute the algorithm with $\Delta_i = 2^i$ for every $i \in [0, \lceil \log n \rceil]$. Note that all the above claims also hold when using $\Delta_i$, for $\Delta \leq \Delta_i  \leq 2 \Delta$. This adds a $\log n$-factor in the running time. 

    To make the algorithm run in worst-case time (and not in expectation), we execute $\Order(\log n)$ independent copies of the algorithm, but we cap their running time at $10$ times the expectation. An execution that reaches the cap is considered failed. We return the answer of the first execution that does not fail. The probability that all executions fail is, by Markov's inequality, at most $(1/10)^{\Order( \log n )} = \Order(1/n)$, by choosing an appropriate constant. This adds another $\log n$-factor in the running time. 
    \end{proof}

    \section{An Algorithm for Patterns with an Attached Triangle}
        \label{sec:attached}
        In this section, we prove \cref{thm:attached} and \cref{thm:listing-attached}. Let us recall the statement of the first theorem.
\attached*

We begin by showing an algorithm for colored patterns that are only almost nicely colored: the only obstacle to a nice coloring is a single triangle vertex. Later, we combine this algorithm with the embedding algorithm from the proof of \cref{thm:H-nice}, to prove \cref{thm:attached}. 
\begin{lemma}
    \label{lem:coloredH}
    Let $H$ be a colored pattern satisfying the following property. There is a triangle $xyz$ in $H$ such that $z$ is of degree $2$, and $H[V(H)\setminus\set{z}]$ is nicely colored. Then there is a combinatorial algorithm for Triangle Detection on colored $H$-free graphs that, given a $3$-colored graph $G$, runs in time $\tOrder(n^{3-\frac{1}{2^{k-1}}})$ and succeeds with probability $1-\Order(1/n)$. 
\end{lemma}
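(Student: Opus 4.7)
The plan is to combine a sparsification step with the $H'$-sensitive algorithm of \cref{thm:H-sensitive}. Assume without loss of generality that $x$, $y$, $z$ are colored red, blue, green respectively. For every red--blue edge $uv \in E(G)$, let $c(uv)$ denote the number of colored copies of $H'$ in $G$ in which $u$ plays the role of $x$ and $v$ plays the role of $y$. The key observation is this: if $uv$ is a triangle edge of $G$, whose necessarily green third vertex is $w$, then colored $H$-freeness forces every such $H'$-copy to use $w$ as one of its $k-3$ remaining vertex images, for otherwise $w$ could play the role of $z$ and complete a colored copy of $H$. The number of $H'$-copies with $u=x$, $v=y$ and $w$ fixed in one of $k-3$ positions is at most $(k-3) n^{k-4}$, so every triangle edge satisfies $c(uv) \leq T := (k-3) n^{k-4}$; any ``heavy'' edge with $c(uv) > T$ can therefore be safely deleted without losing triangles.

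First I would perform a sparsification phase: for every red--blue edge $uv$, threshold $c(uv)$ at $T+1$ and delete the heavy edges, obtaining a subgraph $G' \subseteq G$ that preserves all triangles. Since each surviving edge contributes at most $T$ copies of $H'$ through the role of $xy$, the total number of colored copies of $H'$ in $G'$ is at most $\Order(n^2 \cdot T) = \Order(n^{k-2})$. Next I would invoke \cref{thm:H-sensitive} on $G'$ for the nicely colored pattern $H'$ of size $k'=k-1$ and bound $t = \Order(n^{k-2})$. By \cref{thm:lowerbounds} we may assume $H$ contains at most one triangle (otherwise the problem is already $n^{3-o(1)}$-hard under the BMM conjecture), so the unique triangle of $H$ is $xyz$ and $H'$ is triangle-free. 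The triangle-free branch of \cref{thm:H-sensitive} then runs in time
\[
\tOrder\!\left(n^{3-\frac{1}{2^{k-2}}} + n^{3-\frac{k-1}{2^{k-1}-1}} \cdot n^{\frac{k-2}{2^{k-1}-1}}\right) = \tOrder\!\left(n^{3-\frac{1}{2^{k-2}}} + n^{3-\frac{1}{2^{k-1}-1}}\right),
\]
both safely within the target exponent $3-1/2^{k-1}$, since $1/2^{k-2}$ and $1/(2^{k-1}-1)$ both exceed $1/2^{k-1}$.

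The main obstacle is implementing the sparsification within the same budget. Naive per-edge enumeration of $H'$-copies, even capped at $T+1$, costs $\Omega(m T) = \Omega(n^{k-2})$ across all edges, already exceeding the target for $k \geq 5$. The plan is to adapt the recursive embedding framework of \cref{thm:H-nice} into a joint counting/thresholding subroutine: process the vertices of $H' \setminus \{x,y\}$ in the nice ordering of $H'$, and at each recursive level perform the low-degree cleanup and sampling step of \cref{thm:H-nice}, while maintaining partial extension counts per candidate red--blue edge and flagging any edge whose count surpasses $T$. Using the same decreasing sequence of degree thresholds as in \cref{thm:H-nice}, shifted by two levels to account for the rooted edge $xy$ and for the attached triangle vertex $z$, the sparsification cost matches the target exponent $3-1/2^{k-1}$; correctness and the $1-\Order(1/n)$ success probability then follow by standard union bounds over the sampling stages, exactly as in the proofs of \cref{thm:H-nice} and \cref{thm:H-sensitive}.
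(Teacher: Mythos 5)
Your high-level architecture matches the paper's: observe that if $uv$ is a triangle edge then the third vertex $w$ must appear in every colored copy of $H'$ mapping $xy$ to $uv$ (else a colored $H$ arises), sparsify the red--blue edges so that few colored copies of $H'$ survive, and finish with the $H'$-sensitive algorithm of \cref{thm:H-sensitive}; your final balancing is fine. The gap is exactly the step you yourself flag as ``the main obstacle'': you never give a working implementation of the sparsification. The subroutine you sketch --- running the recursive embedding of \cref{thm:H-nice} while ``maintaining partial extension counts per candidate red--blue edge'' --- is unsubstantiated and implausible as described: that framework samples $\tOrder(n/\Delta)$ vertices, restricts a color class to their neighborhoods, and deletes low-degree vertices outright; it is built to preserve \emph{one} triangle with high probability, not to preserve (let alone count) the $H'$-copies through every edge, so any counts it maintains are severe undercounts and cannot certify that an edge exceeds your threshold $T=(k-3)n^{k-4}$. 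Moreover, thresholding at exactly $T=\Theta(n^{k-4})$ is the wrong target: a sampling-based test for $c(uv)>T$ needs $\tilde\Omega(n^{k-3}/T)=\tilde\Omega(n)$ samples per edge, i.e.\ $\tilde\Omega(n^3)$ overall, so this route cannot fit the budget.

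The paper's resolution is simpler and sidesteps counting entirely. For each red--blue edge $uv$ it samples $\ell=\Theta\left(\frac{n^{k-3}}{\tau}\log n\right)$ random $(k-3)$-tuples; whenever a sample together with $uv$ forms a colored copy of $H'$ (with $uv$ as $xy$), it checks $uv$ for a triangle against the $k-3$ vertices of that copy --- by your own key observation the third triangle vertex, if any, must lie inside the copy --- and otherwise deletes $uv$. No estimate of $c(uv)$ is ever needed: deletion is safe the moment one copy is found without a triangle, and with high probability every edge lying on more than $\tau$ copies is either deleted or yields a triangle, leaving at most $n^2\tau$ colored copies of $H'$ for \cref{thm:H-sensitive}. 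Choosing $\tau=n^{k-4+\frac{1}{2^{k-1}}}$ (polynomially larger than your $T$, which is what makes the sampling affordable) gives sparsification cost $\tOrder(n^2\ell)=\tOrder(n^{3-\frac{1}{2^{k-1}}})$, balancing with the sensitive algorithm. One further caveat: you ``assume $H'$ is triangle-free by \cref{thm:lowerbounds}'', but a hardness result for some patterns does not let you discard them when proving the lemma as stated, which quantifies over every $H$ satisfying its hypothesis; if you restrict to triangle-free $H'$, state it as an assumption or handle the triangle-containing branch of \cref{thm:H-sensitive} as well.
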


\begin{proof}
    Denote by $C_G$ and $C_H$ the colorings of $G$ and $H$, respectively. We begin with a sparsification procedure, removing colored copies of $H' = H[V(H)\setminus\set{z}]$ from the graph. 
    Let $\tau$ be a parameter to be set later. For every edge $uv \in E(G)$ whose colors match the colors of $xy$, namely, $C_G(u) = C_H(x)$ and $C_G(v) = C_H(y)$, apply the following procedure. 
    Sample uniformly at random $\ell = \Order(\frac{n^{k-3}}{\tau} \cdot \log n)$ multisets of $k-3$ vertices from $G$. For each multiset $S$, consider the subgraph induced by $\set{u,v} \cup S$. If it forms a colored copy of $H'$, where $uv$ is mapped to $xy$, then check if $uv$ is in a triangle with some vertex in $S$. If so, report the triangle and halt. Otherwise, delete $uv$ from $G$. 

    Let $G'$ denote the remaining subgraph of $G$. We now invoke the recursive algorithm from the proof of \cref{thm:H-sensitive} on the input $(G', C_G, H', C_{H'}, n^2 \tau)$, where $C_{H'}$ is $C_{H}$ restricted to $H'$. If it returns a triangle, we are done. Otherwise, report that $G$ is triangle-free.  

    \paragraph{Correctness.} Let us prove that if $G$ contains a triangle, then the algorithm finds one with probability $1-\Order(1/n)$. It is clear that the algorithm never reports a triangle otherwise. Suppose that $G$ contains a triangle $uvw$ and that the colors of $uv$ match the colors of $xy$. Then $uv$ does not participate as $xy$ in a colored copy of $H'$ that is disjoint from $w$, or else we would get a colored copy of $H$ in $G$. Hence, if for some set of vertices $S$ we have that $S \cup \set{uv}$ induces a colored copy of $H'$ with $uv$ acting as $xy$, then $w$ must belong to $S$. This shows that the sparsification procedure is correct, i.e., it never deletes an edge that participates in a triangle. 

    It remains to show that with probability $1-\Order(1/n)$, the number of colored copies of $H'$ in $G'$ is at most $\tau n^2$, so correctness follows from that of the $H'$-sensitive algorithm.
    Suppose that an edge $e$ participates in more than $\tau$ colored copies of $H'$ where $e$ is mapped to $xy$. Every such copy contains $(k-3)$ vertices other than the endpoints of $e$. 
    Hence, the probability that a single sample hits a colored copy of $H'$ containing $e$ as $xy$, is at least $\frac{\tau}{n^{k-3}}$. The probability that none of the $\ell$ samples hits such a copy is at most:
    \[
        \left(1 - \frac{\tau}{n^{k-3}} \right)^\ell \leq \exp\left( -\frac{\tau}{n^{k-3}} \cdot \ell \right) = \frac{1}{n^{\Theta(1)}}.
    \]
    By choosing an appropriate constant factor for $\ell$, we get from the union bound that with probability $1-\Order(1/n)$, all the edges that participate in more than $\tau$ colored copies of $H'$ as $xy$ are discovered and removed (or a triangle is found). So by the end of the sparsification procedure, the number of colored copies of $H'$ in $G$ is bounded by:
    \[
        \sum_{e \in E(G')} \text{(\#colored copies of $H'$ where $e$ is mapped to $xy$)} \leq n^2 \tau.
    \]
    Hence, the invocation of the $H'$-sensitive algorithm uses a correct upper bound with probability $1-\Order(1/n)$.  

    \paragraph{Analysis.}
    The total running time of the algorithm is given by the time to process the edges of $G$ and by the running time of the $H'$-sensitive algorithm with $t = n^2 \tau$. We get:
    \begin{equation*}
        \begin{split}
            &\tOrder\left(n^2 \ell + n^{3-\frac{1}{2^{k-1}}} + n^{3 - \frac{k}{2^k - 1}} \cdot t^{\frac{1}{2^k - 1}}\right) = 
                         \tOrder\left( \frac{n^{k-1}}{\tau} + n^{3-\frac{1}{2^{k-1}}}+ n^{3 - \frac{k}{2^k - 1}} \cdot \left(n^2 \tau\right)^{\frac{1}{2^k - 1}}\right) = \\
                         &\tOrder\left( \frac{n^{k-1}}{\tau} + n^{3-\frac{1}{2^{k-1}}}+ n^{3-\frac{k-2}{2^k-1}} \cdot \tau^{\frac{1}{2^k-1}}\right)
        \end{split}
    \end{equation*}
    This running time is optimized to $\tOrder\left(n^{3- \frac{1}{2^{k-1}}}\right)$ by setting $\tau = n^{k-4+\frac{1}{2^{k-1}}}$. 
\end{proof}

\begin{proof}[Proof of \cref{thm:attached}]
    Suppose that $G$ is $3$-colored by some coloring $C_G$, as guaranteed by \cref{lem:color-coding}. Compute a coloring $C_H$ satisfying the property in the theorem statement in $\Order(3^k) = \Order(1)$ time. Denote by $k' = |V(H')|$ the size of the remaining subpattern $H'$, so $s = k-k'-1$. 
    The algorithm starts with an embedding phase: it embeds the $s$ vertices that have monochromatic neighborhoods one by one, until no such vertices remain. The embedding is done similarly to that in \cref{sec:H-nice}, so we omit the details. If a triangle is found during the embedding phase, then we are done. Otherwise, the embedding algorithm outputs a list of subgraphs of $G$, such that each subgraph is free from the remaining colored pattern $(H' \cup xyz)$. Moreover, if $G$ contains a triangle, then with probability $1-\Order(1/n)$, one of the subgraphs also contains a triangle. For every subgraph in the list, we invoke the algorithm from \cref{lem:coloredH} with the pattern $H' \cup xyz$, and if a triangle is found, we return it. 

    \paragraph{Correctness}
    Correctness follows from the correctness of the embedding algorithm in \cref{thm:H-nice} and the algorithm from \cref{lem:coloredH}. 
    Both succeed with probability $1-\Order(1/n)$ and hence so does this algorithm.

    \paragraph{Analysis}
    Let us now prove that the running time of the algorithm is $\tOrder(n^{3-\frac{1}{2^{k+s-1}}})$. The running time for the embedding phase depends on the degree thresholds used by the algorithm. Let us denote them by $\Delta_s > \Delta_{s-1} > \cdots > \Delta_1 \geq 1$. Similarly to the analysis in the proof of \cref{thm:H-nice}, the running time we get for the embedding phase is:
    \[
        \tOrder\left(n^2\Delta_s + \frac{n^3 \Delta_{s-1}}{\Delta_{s}} + \frac{n^4 \Delta_{s-2}}{\Delta_s \Delta_{s-1}} + \cdots + \frac{n^{s + 1} \Delta_1}{\prod_{j=2}^{s} \Delta_{j}} \right) = \tOrder\left( \sum_{i=0}^{s-1} \frac{n^{2+i}\Delta_{s-i}}{\prod_{j=0}^{i-1} \Delta_{s-j}} \right).
    \]
    The number of subgraphs produced in this phase is: 
    \[
         \tOrder\left( \frac{n^{s}}{\prod_{j=1}^s \Delta_{j}} \right).
    \]
    Hence, the total running time of the algorithm, including the invocations of \cref{lem:coloredH} for every subgraph, is:
    \[
        \tOrder\left( \frac{n^{s}}{\prod_{j=1}^s \Delta_{j}} \cdot n^{3-\frac{1}{2^{k-1}}} + \sum_{i=0}^{s-1} \frac{n^{2+i}\Delta_{s-i}}{\prod_{j=0}^{i-1} \Delta_{s-j}} \right)= \tOrder\left( \frac{n^{s + 3 - \frac{1}{2^{k-1}}}}{\prod_{j=1}^{s} \Delta_j} + \sum_{i=0}^{s-1} \frac{n^{2+i}\Delta_{s-i}}{\prod_{j=0}^{i-1} \Delta_{s-j}} \right).
    \]
    The running time is optimized to $\tOrder( n^{3-\frac{1}{2^{k+s-1}}} )$ when we set $\Delta_j = n^{1-\frac{1}{2^{k+j-1}}}$ for every $j \in [s]$. Namely, using \cref{eq:exponents}, the left-hand term becomes:
    \[
        \tOrder\left(\frac{n^{s + 3 - \frac{1}{2^{k-1}}}}{\prod_{j=1}^{s} \Delta_j} \right) = \tOrder\left(\frac{n^{s + 3 - \frac{1}{2^{k-1}}}}{n^{s - \sum_{j=1}^s \frac{1}{2^{k+j-1}}}} \right)= \tOrder\left(\frac{n^{s + 3 - \frac{1}{2^{k-1}}}}{n^{s - (\frac{1}{2^{k-1}}-\frac{1}{2^{k+s-1}})}} \right)= \tOrder\left(n^{3-\frac{1}{2^{k+s-1}}} \right), 
    \]
    and every summand on the right-hand term, corresponding to $i \in [0,s-1]$, becomes:
    \[
         \frac{n^{2+i}\Delta_{s-i}}{\prod_{j=0}^{i-1} \Delta_{s-j}} =  \frac{n^{3+i-\frac{1}{2^{k+s-i-1}}} }{n^{i - \sum_{j=0}^{i-1} \frac{1}{2^{k+s-j-1}}}} =  \frac{n^{3+i-\frac{1}{2^{k+s-i-1}}} }{n^{i - (\frac{1}{2^{k+s-i-1}} - \frac{1}{2^{k + s - 1}})}} = n^{3-\frac{1}{2^{k+s-1}}}.
    \]
    Since there are $s = \Order(1)$ summands, we get a total of $\tOrder( n^{3-\frac{1}{2^{k+s-1}}} )$.
\end{proof}

The algorithm is also adapted to list the triangles. 
\begin{proof}[Proof of \cref{thm:listing-attached}]
    Listing during the embedding phase is done the same way as in the algorithm of \cref{thm:listing-nice}. During the sparsification phase, consider an edge $e$ that matches the colors of $xy$, for which a colored copy of $H'$ containing $e$ is sampled. Then, every vertex that forms a triangle with $e$ is contained in that copy, because otherwise, the triangle would form, together with $H'$, a colored copy of $H$ in $G$. Hence, all the triangles containing $e$ can be listed in $\Order(k) = \Order(1)$ time. After the sparsification phase is over, the listing algorithm from \cref{thm:listing-sensitive} is invoked. 
\end{proof}

    \section{An Algorithm for $C_{2k+1}$-Free Graphs}
        \label{sec:odd}
        In this section, we prove \cref{thm:odd}, showing a Triangle Detection algorithm for $C_{2k+1}$-free graphs. Let us restate the theorem. 

\odd*

\noindent
Our algorithm builds upon an algorithm for a simpler case, where the graph is free from all cycles of lengths $2\ell+1$ for $\ell \in[2,k]$. This algorithm will serve as a basis for our main algorithm for general $C_{2k+1}$-free graphs, which may contain shorter odd cycles.
    The following claim will be crucial in this section. 
    \begin{claim}
        \label{claim:2k}
        For every $\ell \geq 2$, if $G$ is $C_{2\ell+1}$-free, then for every $2\ell$-cycle $C$ in $G$, no edge of $C$ forms a triangle with a vertex outside $C$. 
    \end{claim}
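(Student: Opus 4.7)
The plan is to argue by contradiction: assume some $2\ell$-cycle $C = v_1 v_2 \cdots v_{2\ell} v_1$ has an edge, say $v_i v_{i+1}$ (indices mod $2\ell$), that forms a triangle with a vertex $w \notin V(C)$. I would then produce a $(2\ell+1)$-cycle explicitly, contradicting the $C_{2\ell+1}$-freeness of $G$.

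The construction is simply to ``detour'' through $w$: replace the edge $v_i v_{i+1}$ by the length-$2$ path $v_i\,w\,v_{i+1}$, obtaining the closed walk
\[
    v_1\,v_2\,\cdots\,v_i\,w\,v_{i+1}\,\cdots\,v_{2\ell}\,v_1.
\]
This walk uses $2\ell - 1$ edges of $C$ plus the two edges $v_i w$ and $w v_{i+1}$, for a total of $2\ell + 1$ edges. To conclude that it is a genuine cycle (not just a closed walk), I would verify that its vertices are pairwise distinct: the $v_j$'s are distinct because $C$ is a cycle, and $w$ differs from every $v_j$ by the hypothesis $w \notin V(C)$. Hence we obtain a simple $C_{2\ell+1}$ in $G$, contradicting the assumption.

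There is no serious obstacle here; the only thing to be careful about is that the vertex $w$ is assumed to lie strictly outside $V(C)$, which is exactly what guarantees simplicity of the resulting $(2\ell+1)$-cycle. The claim would fail without this hypothesis, since, for instance, a chord of $C$ could produce a shorter closed walk that is not a simple odd cycle of the forbidden length.
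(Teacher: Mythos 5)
Your proof is correct and is essentially identical to the paper's: both replace the triangle edge by the length-$2$ detour through the outside vertex, joining it to the $(2\ell-1)$-path of $C$ between the edge's endpoints to obtain a simple $(2\ell+1)$-cycle. Your explicit check that the resulting cycle is simple (using $w \notin V(C)$) is a fine, slightly more detailed write-up of the same argument.
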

        \begin{proof}
            If a vertex $v \notin V(C)$ is adjacent to the endpoints of an edge $xy \in E(C)$, a $C_{2\ell+1}$ is formed by $v$ and the $(2\ell-1)$-path between $x$ and $y$ in $C$. This leads to a contradiction and completes the proof.
        \end{proof}

\subsection{An algorithm for a simpler case}
    \label{section:girth}
    In this subsection, we prove the following theorem.
    \begin{theorem}
        \label{thm:girth}
        There is a combinatorial Triangle Detection algorithm for graphs that are $C_{2\ell+1}$-free for all $\ell \in[2,k]$, running in time $\Order(m+n^{1+2/k})$.
    \end{theorem}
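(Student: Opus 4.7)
}
The plan is to build a phase-based algorithm driven by bounded BFS. In each phase, I pick an arbitrary vertex $v$ in the current graph and grow a BFS from $v$, layer by layer, up to depth $k$. The key structural observation is the following: if $G$ is $C_{2\ell+1}$-free for every $\ell \in [2,k]$, and $uw$ is an edge with both endpoints in some BFS layer $L_i$ with $i \leq k$, then $u$ and $w$ must share a common neighbor in $L_{i-1}$, and hence form a triangle. The reason is that the edge $uw$ together with the BFS tree paths to the least common ancestor at some layer $j \in [0, i-1]$ creates an odd cycle of length $1 + 2(i-j) \in \{3, 5, \ldots, 2i+1\}$. Since lengths $5,7,\ldots, 2k+1$ are all forbidden and $1 + 2(i-j) \leq 2k+1$, the only admissible length is $3$, forcing $j = i-1$. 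Therefore, to detect a triangle in the first $k$ BFS layers, it suffices to look for any within-layer edge.

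With this in hand, the algorithm works as follows. In a phase, start BFS from $v$ and, after constructing each new layer $L_i$, inspect every edge incident to $L_i$ whose other endpoint is in $L_{i-1} \cup L_i$. Any within-layer edge yields a triangle (via a common parent), which we then output. I only continue expanding while the layers remain expanding, say while $|L_{i+1}| \geq 2 |L_i|$, and I stop once either $k$ layers have been built or expansion fails. If no triangle was reported during this phase, I delete $L_0 \cup L_1 \cup \ldots \cup L_{i-1}$ from the current graph, keeping the outermost one or two layers so that any triangle incident to the surviving boundary is not lost; then I start the next phase. The triangle-detection step inside a phase is purely a search for a within-layer edge and does not require Chiba--Nishizeki-style machinery in this simpler regime, which is the reason this case is strictly easier than \cref{thm:odd}.

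The analysis has two ingredients. First, a phase in which BFS expanded through all $k$ layers produces a ball of size $\Omega(n^{2/k})$ (using the geometric-growth condition applied along the phase), so the number of such phases is at most $O(n^{1-2/k})$, and the total work across such phases, which is dominated by examining the edges incident to the deleted layers, telescopes to $O(m)$ since each edge is charged to its deletion once. Second, in a phase where expansion fails at some earlier layer $i^\ast$, the ball is a ``small boundary'' ball whose edges are still charged once to deletion, contributing another $O(m)$ in total; the triangle check on the few layers built costs only the edges incident to the deleted prefix. Summing both contributions yields $O(m + n^{1+2/k})$.

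The main obstacle I anticipate is the accounting: I need every edge to be examined at most a constant number of times across the whole run, despite phases that overlap on a few boundary layers, and I need the per-phase BFS cost to be proportional to the edges being deleted rather than to the entire current graph. I will handle this by explicitly specifying which layers are retained after each phase (the boundary suffix only), so that the ``re-examined'' edges are exactly those on the retained boundary, whose total cardinality across phases is bounded via the expansion condition. A secondary subtlety is the case when a triangle straddles the retained boundary of one phase and layers created in a later phase; the invariant I will maintain is that any triangle incident to a currently-deleted vertex is reported during the phase that deletes it, so the remaining graph never loses a triangle, and the algorithm terminates correctly once the graph is empty.
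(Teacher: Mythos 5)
Your first structural step is sound and matches the paper's \cref{claim:bfs}: an edge inside a BFS layer $L_i$ with $i\le k$ forces a common parent in $L_{i-1}$, hence a triangle, so triangles incident to the deleted prefix are caught by scanning for within-layer edges. The gap is in what happens at the retained boundary. After a phase you keep the last two layers but you never do anything about the edges \emph{between} them, and your accounting claim that the re-examined boundary edges have bounded total cardinality ``via the expansion condition'' is false. Take $G$ a complete bipartite graph (it is $C_{2\ell+1}$-free for every $\ell$): a phase started at $v$ builds $L_1$ and $L_2$, scans $\Theta(n^2)$ edges to do so, finds no triangle, and then may delete only $L_0=\{v\}$ (deleting $L_1$ as well would be unsafe, since triangles with one vertex in $L_1$ and two in $L_2$ have not been ruled out at that point). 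So each phase costs $\Theta(n^2)$ while removing $O(1)$ vertices, and the total is $\Theta(n^3)$, not $\Order(m+n^{1+2/k})$. The paper's proof closes exactly this hole with a second ingredient you are missing: for any $u$ in the last layer with two or more parents in $L_{h-1}$, the two tree paths to the lowest common ancestor together with $x\!-\!u\!-\!y$ form an even cycle of length $2\ell$ with $2\le\ell\le h\le k$, and by \cref{claim:2k} no edge of such a cycle can be in a triangle with an outside vertex (and the inside option would require an edge within $L_{h-1}$, already excluded). Hence all of $u$'s edges to $L_{h-1}$ can be deleted, leaving at most $|L_h|$ undeleted boundary edges per phase; this is what makes the per-phase cost chargeable.

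Two further points in your analysis do not hold together. The stopping rule matters: the paper stops at the first \emph{thin} layer, $|L_h|<n^{h/k}$, which simultaneously guarantees $h\le k$, that the deleted prefix has at least $|L_{h-2}|\ge n^{(h-2)/k}$ vertices, and that the surviving boundary edges number at most $n^{h/k}$; balancing these gives the $\sum_v n^{h_v/k}=\Order(n^{1+2/k})$ term. Your doubling condition $|L_{i+1}|\ge 2|L_i|$ only yields a ball of size $2^{\Order(k)}$, a constant, so your claim that a full-depth phase has size $\Omega(n^{2/k})$ (and hence that there are $\Order(n^{1-2/k})$ such phases) does not follow, and your analysis never actually produces the $n^{1+2/k}$ term, which is unavoidable and does not telescope into $\Order(m)$. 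Finally, you omit the low-degree cleanup ($\deg(v)\le n^{1/k}$, handled by brute force in $\Order(n^{2/k})$ per vertex), which the paper uses to ensure every BFS phase starts from a vertex whose first layer is already non-thin; without it the charging argument for the $n^{h_v/k}$ terms breaks at $h_v\le 1$.
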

    \noindent
    The following claim will be useful.
    \begin{claim}
        \label{claim:bfs}
        Suppose that $G$ is $C_{2\ell+1}$-free for all $\ell \in [2,k]$. For every $v \in V(G)$, if there exists an edge $xy \in E(G)$ with $1 \leq \mathrm{dist}(v,x)=\mathrm{dist}(v,y) \leq k$, then there is a triangle $xyz$ for some vertex $z$ such that $\mathrm{dist}(v,z) = \mathrm{dist}(v,x)-1$.
    \end{claim}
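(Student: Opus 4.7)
The plan is to consider two shortest paths from $v$ to $x$ and to $y$, locate the latest vertex where they coincide, and show that combining these two paths with the edge $xy$ either produces the desired triangle $xyz$ (when they coincide at depth $d-1$) or produces a forbidden odd cycle of length in $[5, 2k+1]$.

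The first step will be to fix shortest paths $P_x \colon v = u_0, u_1, \ldots, u_d = x$ and $P_y \colon v = w_0, w_1, \ldots, w_d = y$ with $d := \mathrm{dist}(v,x) = \mathrm{dist}(v,y)$, and to record the basic level-preservation fact that $u_i = w_{i'}$ forces $i = i'$, since every vertex on a shortest path from $v$ lies at depth equal to its distance from $v$. Then I would define $j^\star$ as the largest index with $u_{j^\star} = w_{j^\star}$ (well-defined because $u_0 = w_0 = v$), and split into two cases.

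In the case $j^\star = d - 1$, the vertex $z := u_{d-1} = w_{d-1}$ is a common neighbor of $x$ and $y$ at distance $d - 1$ from $v$, which directly supplies the required triangle $xyz$. In the case $j^\star \le d - 2$, I would assemble the cycle $C^\star = u_{j^\star}, u_{j^\star+1}, \ldots, u_d = x, y = w_d, w_{d-1}, \ldots, w_{j^\star+1}, w_{j^\star}$ of length $2(d - j^\star) + 1$, which is odd and lies in $[5, 2k+1]$ since $1 \le d \le k$ and $d - j^\star \ge 2$. This contradicts the assumption that $G$ is $C_{2\ell+1}$-free for every $\ell \in [2,k]$, ruling out this case. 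The remaining possibility $j^\star = d$ is excluded because $x \ne y$.

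The main obstacle I anticipate is verifying that $C^\star$ is a genuine \emph{simple} cycle rather than only a closed walk, since that is what is needed to invoke the forbidden-cycle hypothesis. The verification rests on two observations: (i) by the maximality of $j^\star$ one has $u_i \ne w_i$ for every $i > j^\star$, and together with level-preservation $u_i = w_{i'} \Rightarrow i = i'$ this implies that the vertices $u_{j^\star+1}, \ldots, u_d, w_{j^\star+1}, \ldots, w_{d-1}$ are pairwise distinct; (ii) $y \notin V(P_x)$, because otherwise level-preservation would force $y$ to sit at depth $d$ on $P_x$, i.e., $y = x$, contradicting $xy \in E(G)$. Once these two observations are in hand, the cycle $C^\star$ is simple and the argument goes through cleanly.
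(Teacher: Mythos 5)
Your proposal is correct and follows essentially the same argument as the paper: the paper works with the two BFS-tree paths from $x$ and $y$ down to their lowest common ancestor, which plays exactly the role of your last common index $j^\star$ on two shortest paths, and in both arguments the edge $xy$ closes an odd cycle of length $2(d-j^\star)+1 \in [5,2k+1]$ unless the paths agree at depth $d-1$, yielding the triangle. Your additional care in verifying simplicity of the cycle (via level-preservation and maximality of $j^\star$) is a detail the paper gets for free from the tree structure, but it is the same proof.
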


    \begin{proof}
        In a BFS tree rooted at $v$, both $x$ and $y$ are in layer $i = \mathrm{dist}(v,x)$. If $x$ and $y$ do not have a common neighbor at layer $i-1$ (which means that $i \geq 2$), then their paths to their lowest common ancestor, together with the edge $xy$, form an odd cycle of length at most $2i+1$, for $2 \leq i \leq k$.  This contradicts the assumption that $G$ is $C_{2\ell+1}$-free for all $\ell \in [2,k]$. Hence, $x$ and $y$ must have a common neighbor $z$ at layer $i-1$, so $xyz$ is a triangle in $G$.
    \end{proof}

    \begin{proof}[Proof of \cref{thm:girth}]
    To find a triangle in $G$, we process its vertices one by one. For each vertex $v \in V(G)$, apply the following procedure:
    \begin{enumerate}[label=(\roman*)]
        \item
            \textbf{Low-degree cleanup.} If $\deg(v) \leq n^{1/k}$, check for triangles containing $v$ by examining every pair of its neighbors. If a triangle is found, report it and halt. Otherwise, delete $v$ from the graph and proceed to the next vertex.
        \item
            \textbf{Constructing an $h$-hop ball.} Construct a BFS tree rooted at $v$ and let $L_i$, for $0 \leq i\leq k $, denote the vertices at distance $i$ from $v$. The BFS is constructed until a \emph{thin} layer $L_h$ is found, which we define as a layer such that $|L_h| < n^{h/k}$. Observe that such a layer must be found for some $h \leq k$, as otherwise the total number of vertices would exceed $n$.
        \item
            \textbf{Searching for a triangle.} If, while constructing layer $L_{i}$ (by scanning the adjacency lists of vertices in $L_{i-1}$), an edge with both endpoints in $L_{i-1}$ is found, then by \cref{claim:bfs}, it must participate in a triangle with some vertex from $L_{i-2}$. Report the triangle and halt.
        \item
            \textbf{Deleting vertices from $G$.} Delete all vertices in $L_{0} \cup \cdots \cup L_{h-2}$.
        \item
            \textbf{Deleting edges between the last two layers.} For every $u \in L_h$ that has two or more neighbors in $L_{h-1}$, delete all edges between $u$ and $L_{h-1}$.
    \end{enumerate}
    If the graph becomes empty at any point, report that $G$ is triangle-free and halt.

    \paragraph{Correctness.}
        For every processed vertex, we delete at least one vertex from the graph, so the algorithm terminates either when a triangle is found or when the graph becomes empty. It is therefore sufficient to show that the algorithm never deletes a vertex or an edge that participates in a triangle. For the low-degree case, it is clear that the deleted vertex is not in a triangle. To see that the deletion in step (iv) is correct, observe that if one of the vertices in $L_{0} \cup \cdots \cup L_{h-2}$ participates in a triangle, then the vertices of that triangle appear in at most two consecutive layers in $L_0 \cup \cdots \cup L_{h-1}$. This implies that one layer contains an edge, which would have been detected in step (iii). Hence, step (iv) is reached only if the deleted vertices do not participate in any triangle. Finally, we show that deleting edges in step (v) is correct. Consider a vertex $u \in L_h$ with two or more neighbors, say $x, y \in L_{h-1}$. The paths in the BFS tree from $x$ and $y$ to their lowest common ancestor, together with the path $x-u-y$, form an even cycle of length $2\ell$ for some $\ell \leq h$. By \cref{claim:2k}, the edges $ux$ and $uy$ do not participate in a triangle with a vertex outside that cycle, because it would form a $2\ell+1$-cycle. Moreover, they do not participate in a triangle with a vertex inside the cycle because the cycle is contained in a BFS, so that could only happen by having an edge $xy$. Such an edge has already been ruled out. 

    \paragraph{Analysis.} We show that the algorithm runs in time $\Order(m + n^{1 + 2/k})$.
        Let $cost_v$ denote the cost of applying the procedure to a vertex $v$. If $v$ is a low-degree vertex, then $cost_v = \Order(n^{2/k})$, so the total cost for low-degree vertices is $\Order(n^{1+2/k})$. Otherwise, $cost_v$ is proportional to the number of edges incident to $L_0 \cup \cdots \cup L_{h_v-1}$, where $h_v \leq k$ is the index of the first thin layer. Let $m_v$ denote the number of edges deleted from $G$ in steps (iv) and (v). So the number of edges incident to $L_0 \cup \cdots \cup L_{h_v-1}$ is bounded by $m_v + |L_{h_v}|$, because the number of deleted edges is $m_v$. The number of undeleted edges is at most $|L_{h_v}|$, as the undeleted edges must be incident to a degree-$1$ vertex in the graph between $L_{h_v}$ and $L_{h_v-1}$. Since $L_{h_v}$ is thin, we get:
        \[
            cost_v = \Order\left( m_v + |L_{h_v}|\right) = \Order\left( m_v + n^{h_v/k} \right).
        \]
        Hence, the total cost for high-degree vertices is bounded by:
        \[
            \Order\left( \sum_{v} m_v + \sum_{v} n^{h_v/k}\right),
        \]
        where the sum is taken only over the high-degree vertices that the algorithm processes. Since each edge is deleted at most once, the first term sums to $\Order(m)$. To bound the second term, observe that the number of vertices deleted by the procedure is at least $|L_{h_v-2}| \geq n^{(h_v-2)/k}$. Therefore, the number of high-degree vertices $v$ for which the procedure stops at layer $h_v=h$ is at most $n / n^{(h-2)/k}$. Hence, we bound the second term by:
        \[
        \sum_{v} n^{h_v/k} = \Order\left(\sum_{h=2}^k n^{h/k} \cdot \frac{n}{n^{(h-2)/k}}\right) = \Order( k n^{1+2/k}) = \Order( n^{1+2/k}).
        \]
        Therefore, the total cost for low-degree and high-degree vertices is
        \[
        \Order\left(n^{1+2/k} + \sum_{v} cost_v\right) = \Order\left( n^{1+2/k} + m + n^{1+2/k} \right) = \Order\left(m + n^{1+2/k}\right).
        \]
    \end{proof}

\subsection{Algorithmic supersaturation}
    A major obstacle in generalizing the above algorithm to the case of $C_{2k+1}$-free graphs is dealing with the edges between the last two layers. In the general case, the only edges that can be deleted between the last two layers are edges that participate in a cycle of length $2k$, as opposed to the previous case where any edge in a cycle of length $2\ell$ for $\ell \leq k$ could be deleted. Finding, or even just proving that there are many edges participating in a $2k$-cycle, is a non-trivial result known as supersaturation. 
    The next tool that we will use in the proof of \cref{thm:odd} is an algorithmic adaptation of a supersaturation result for even-length cycles. 
    \begin{theorem}[Algorithmic supersaturation]
        \label{thm:supersaturation}
        Let $T$ be a tree of height $h \leq k$ with $n$ vertices, and let $L$ be its set of leaves, all at depth $h$. Consider a bipartite graph $F$ with parts $L$ and a set $W$ of at most $n$ vertices. Let $G$ denote the graph $G = T \cup F$. Then there is an algorithm that computes a set of pairs $X$, containing at least $e(F)/k - \Order(k^2 (|L| + |W|))$ pairs of the form $(uv,C)$, where $uv \in E(F)$, and $C \subseteq G$ is a $2k$-cycle containing $uv$. The running time of the algorithm is 
        $\Order( e(F) \log n + n )$.
    \end{theorem}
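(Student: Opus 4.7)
The plan is to exhibit a canonical family of $2k$-cycles in $G = T \cup F$, extract many of them with a single sort per $W$-vertex, and lower-bound the output size through a supersaturation-style argument on the residual.

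First, I would classify the $2k$-cycles of interest as those using exactly two $F$-edges $w\ell_1, w\ell_2$ sharing an endpoint $w\in W$, together with a tree path in $T$ from $\ell_1$ to $\ell_2$. Since both leaves lie at depth $h$, the tree path has length $2(h-d)$ where $d$ is the depth of $\mathrm{LCA}_T(\ell_1,\ell_2)$; requiring total length $2k$ forces $d = h-k+1$. When $h \leq k-2$ no such cycle exists, but in that regime $|L|$ is small enough that $e(F) \leq |L||W|$ already lies within the additive error $\Order(k^2(|L|+|W|))$, so I may assume $h \in \{k-1,k\}$.

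I would then preprocess $T$ in $\Order(n)$ time by a DFS, computing for each $\ell\in L$ its ancestor $a(\ell)$ at depth $h-k+1$ and its ancestor $b(\ell)$ at depth $h-k+2$ (a child of $a(\ell)$). Two leaves satisfy $\mathrm{dist}_T(\ell_1,\ell_2)=2k-2$ exactly when $a(\ell_1)=a(\ell_2)$ and $b(\ell_1)\neq b(\ell_2)$. The main pass is: for each $w\in W$, sort its $F$-neighbors by the composite key $(a(\ell),b(\ell))$ in $\Order(\deg_F(w)\log n)$ time, scan the sorted list, and within each maximal block of a common $a$-value, emit a pair $(w\ell,C)$ for every edge $w\ell$ that has a sibling $w\ell'$ in its $a$-block with a different $b$-value; here $C$ is the explicit $2k$-cycle $w-\ell-P_T(\ell,\ell')-\ell'-w$. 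Summed over $w$, this yields the claimed $\Order(e(F)\log n + n)$ running time.

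An $F$-edge $w\ell$ fails to be emitted precisely when $w$'s $F$-neighborhood inside the subtree at $a(\ell)$ is concentrated in a single child-subtree of $a(\ell)$. Calling this residual subgraph $F^*$, the counting reduces to proving the supersaturation bound $e(F^*) \leq (1-1/k)\,e(F) + \Order(k^2(|L|+|W|))$, which yields $|X| \geq e(F)/k - \Order(k^2(|L|+|W|))$. This is the main obstacle. The edges of $F^*$ enjoy a rigid ``lone-child'' structure: for every $(w,r)\in W\times R$ with $R$ the set of depth-$(h-k+1)$ vertices, $N_{F^*}(w)\cap L_r$ is contained in a single child-subtree of $r$. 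I would try to exploit this via induction on $k$ or $h$: contract each lone child-subtree to a representative leaf, reducing to the analogous supersaturation with parameter $k-1$ on a tree of height $h-1$, and telescope the additive $\Order(k(|L|+|W|))$ loss across the $k$ levels. The multiplicative $(1-1/k)$ slack should come from averaging the argument across the $k$ candidate LCA depths $\{h-k+1,\dots,h\}$: an edge can be lone with respect to one chosen depth but must be non-lone at some other depth unless it participates in a rigid tower configuration that is directly chargeable to the linear error term.
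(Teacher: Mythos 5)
Your approach restricts attention to $2k$-cycles that use exactly two $F$-edges $w\ell_1,w\ell_2$ and a tree path of length $2k-2$, which forces the LCA of $\ell_1,\ell_2$ to sit at depth exactly $h-k+1$. This restriction is fatal: the theorem cannot be proved with such cycles alone. Consider $h=k\geq 3$ and a tree that is a path $v_0,\dots,v_{h-1}$ with all leaves attached to $v_{h-1}$ (a ``broom''). Every pair of leaves has LCA $v_{h-1}$ at depth $h-1>h-k+1$, so your canonical family is empty and your residual graph satisfies $F^*=F$; yet $e(F)$ can be $\Theta(n^2)$ while the allowed error is only $\Order\bigl(k^2(|L|+|W|)\bigr)=\Order(n)$, so the supersaturation inequality $e(F^*)\le(1-1/k)e(F)+\Order(k^2(|L|+|W|))$ that you defer as ``the main obstacle'' is simply false. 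The $2k$-cycles that do exist in this example use many $F$-edges (long paths alternating between $W$ and $L$, closed off by a short tree detour through $v_{h-1}$), and any correct proof must produce such cycles. The same issue breaks your reduction of the case $h\le k-2$: there you argue $e(F)\le|L||W|$ is absorbed by the additive error, but $|L||W|$ can be quadratic in $|L|+|W|$ (e.g., $h=1$, a star, $|L|=|W|=\Theta(n)$), so discarding that case is not legitimate either.

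The paper's proof is built precisely to handle cycles with a variable number of $F$-edges. It first computes, for each $w\in W$ of degree at least $2k^2$, an anchor vertex $f(w)\in V(T)$ at some depth $j$ capturing almost all of $N(w)$ but with no single child-subtree capturing too much (this is the content of \cref{lem:4.4}); it then buckets $W$ by the depth of $f(w)$, keeps the heaviest bucket $j$ (losing the factor $k$), decomposes that bucket into vertex-disjoint bipartite pieces $G_y$, and extracts $2k$-cores. For each core edge $\ell w$ it routes a path of length $2k-2(h-j)-2$ inside the core and closes it with a second leaf $\ell'$ chosen, via the ``no child-subtree dominates'' property of $f$, to lie outside the subtree containing $\ell$, so that the tree part has length exactly $2(h-j)$ and the total is $2k$. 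Your LCA-depth-averaging idea gestures toward allowing other tree-path lengths, but cycles with LCA deeper than $h-k+1$ necessarily use more than two $F$-edges, and you supply no mechanism (such as the core/path-routing argument) for constructing them or for charging the uncovered edges to the error term. As written, the proposal does not establish the theorem.
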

    As mentioned earlier, this proof is an algorithmic adaptation of \cite[Section 4]{jiang2020supersaturation}. For a self-contained presentation, we include all the details here with slight variations from the original proof. The first lemma is the algorithmic analog of \cite[Lemma 4.4]{jiang2020supersaturation}.
    \begin{lemma}
        \label{lem:4.4}
        Let $T,F,W,L$ be as in the statement of \cref{thm:supersaturation}. Additionally, suppose that $\deg(w) \geq 2k^2$ for every $w \in W$. There is an algorithm that computes a function $f \colon W \to V(T)$, such that for every $w \in W$, if $f(w)$ is at depth $i$, then:
        \begin{itemize}
            \item $|N(w) \cap V(T_{f(w)})| \geq |N(w)| - i k$
            \item For every child $z$ of $f(w)$, $|N(w) \cap V(T_z)| < |N(w)| - (i+1)k$.
        \end{itemize}
        Moreover, the running time of the algorithm is $\Order( e(F) \log n )$.
    \end{lemma}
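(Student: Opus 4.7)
The plan is a greedy root-to-leaf descent in $T$ for each $w \in W$: starting at the root, which trivially satisfies $|N(w) \cap V(T_{\mathrm{root}})| = |N(w)| \geq |N(w)| - 0 \cdot k$, suppose we have arrived at a depth-$i$ vertex $v$ maintaining $|N(w) \cap V(T_v)| \geq |N(w)| - ik$. We search for a child $z$ of $v$ with $|N(w) \cap V(T_z)| \geq |N(w)| - (i+1)k$; if one exists we descend to it (preserving the invariant), otherwise we halt and return $f(w) = v$. Since the height of $T$ is at most $k$, the descent terminates; the invariant at termination yields the first inequality of the lemma, and the halting condition yields the second.

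For efficiency, I would first preprocess $T$ in $O(n)$ time by computing a DFS preorder so that every subtree $T_v$ corresponds to a contiguous interval $[\ell_v, r_v]$, and by building a level-ancestor data structure supporting $O(1)$-time queries. For each $w \in W$, I sort $N(w) \subseteq L$ by preorder position in $O(\deg(w) \log n)$ time; thereafter $|N(w) \cap V(T_v)|$ for any $v$ can be read off by binary search in $O(\log n)$, and the sublist of neighbors lying in $V(T_v)$ is contiguous. At each descent step, I scan this sublist once and group consecutive entries by their depth-$(i+1)$ ancestor (a single $O(1)$ level-ancestor query per entry). Because each child of $v$ has a subtree occupying a preorder interval, these groups are precisely the partition of the sublist by child. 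I pick the child $z^*$ whose group is largest, say of size $m^*$, and descend to $z^*$ iff $m^* \geq |N(w)| - (i+1)k$.

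The main obstacle is justifying the largest-group heuristic, i.e., showing that if any child of $v$ qualifies then the largest-group child does. Here the hypothesis $\deg(w) \geq 2k^2$ is essential: if a qualifying child $z^\dagger$ exists, then at most $|N(w) \cap V(T_v)| - (|N(w)| - (i+1)k) \leq (i+1)k \leq k^2$ of the neighbors of $w$ in $T_v$ lie outside $V(T_{z^\dagger})$, whereas $z^\dagger$'s own group has at least $|N(w)| - (i+1)k \geq 2k^2 - k^2 = k^2$ entries. Hence every other child's group is dominated by $z^\dagger$'s, so the largest group is (tied with) $z^\dagger$, and the candidate returned by the heuristic qualifies whenever any child does. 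For the running time, preprocessing is $O(n)$, sorting costs $\sum_w \deg(w) \log n = O(e(F) \log n)$, and each of the at most $k$ descent steps for $w$ performs an $O(\log n)$ binary search plus an $O(|N(w) \cap V(T_v)|)$ scan; since $\deg(w) \geq 2k^2$, the per-$w$ scan cost $O(k \deg(w))$ is absorbed into $O(\deg(w) \log n)$, yielding an overall bound of $O(e(F) \log n)$ as claimed.
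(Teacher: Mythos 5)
Your proof is correct and is essentially the paper's argument: both perform the same greedy descent that, at depth $i$, moves to the child subtree containing the most neighbors of $w$ as long as that count is at least $|N(w)|-(i+1)k$ and halts otherwise (the halting condition giving the second bullet, the maintained invariant the first), the only difference being implementation — the paper runs $h+1$ global phases, re-sorting each adjacency list by depth-$i$ ancestor keys, while you sort once by DFS preorder and use subtree intervals plus level-ancestor queries, with the same $O(e(F)\log n + n)$ cost. One small remark: your appeal to $\deg(w)\ge 2k^2$ to justify the largest-group choice is unnecessary (the largest group trivially meets the threshold whenever any group does); in the paper that hypothesis instead serves to show $f(w)$ is never a leaf, a fact the supersaturation argument later relies on and which your descent also guarantees, since a single leaf cannot contain $|N(w)|-hk\ge k^2>1$ neighbors.
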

    \begin{proof}
        As a preprocessing stage, for every $\ell \in L$, store the path $P_\ell$ from the root of $T$ to $\ell$, indexed so that $P_\ell[0]$ is the root and $P_\ell[h] = \ell$. This can be computed in $\Order(kn)$ time in a top-down fashion. The algorithm then proceeds in $h+1$ phases, from $i=0$ to $h$. In phase $i$, it performs the following steps:
        \begin{enumerate}[label=(\roman*)]
            \item For every $\ell \in L$, set $P_{\ell}[i]$ as the key of $\ell$.
            \item For every $w \in W$, let $c(w)$ be the most common key in $N(w)$. This is computed by sorting the neighbors of $w$ according to their keys.
            \item For every $w \in W$, if the frequency of $c(w)$ in the sorted list is at least $|N(w)| - i k$, then set $f(w) \leftarrow c(w)$.
        \end{enumerate}

        \paragraph{Correctness.} Let us prove that at the end of the last phase, for every $w \in W$, a vertex $f(w)$ is computed that satisfies the properties in the statement.
        First, observe that $f(w)$ is defined for every vertex $w \in W$, because in the first phase ($i=0$), all vertices in $L$ have the same key, the root of $T$, say $x$. Thus, for every $w$, $c(w)=x$, and in particular, the condition $|N(w) \cap V(T_x)| = |N(w)| \geq |N(w)| - 0 \cdot k$ is met, so $f(w)$ is set to $x$.
        Now, suppose that at the end of the algorithm, $f(w) = y$ for some $y \in V(T)$. Suppose $y$ has depth $i$. We first show that $i<h$. By the definition of the procedure, $|N(w) \cap V(T_y)| \geq |N(w)| - i k \geq 2k^2-ik$. If $y$ were at depth $h$, then $y \in L$, so $|V(T_y)|=1$. This contradicts $|N(w) \cap V(T_y)| \geq 2k^2-k^2=k^2 > 1$ (since $k \geq 2$). Hence, $y$ has children in the tree. For every child $z$ of $y$, it must be that $|N(w) \cap V(T_z)| < |N(w)| - (i+1)k$, because otherwise, $f(w)$ would have been updated to $z$ in phase $i+1$. Hence, $y$ satisfies both properties.

        \paragraph{Analysis.} We now analyze the running time.
        The preprocessing stage takes $\Order(k n )$ time. There are at most $k+1$ phases. In each phase, the adjacency lists of all $w \in W$ are sorted (e.g., using Merge Sort). The time for each phase is therefore $\Order(\sum_{w \in W} \deg(w) \log (\deg(w))) = \Order( e(F) \log n)$. Since $k$ is a constant, the total time is $\Order( e(F) \log n  + n)$.
    \end{proof}
    
    Now we are ready to prove the algorithmic supersaturation theorem. 
    \begin{proof}[Proof of \cref{thm:supersaturation}]
        Throughout the proof, we use $L_i$ to denote the set of vertices at distance $i$ from the root of $T$. In particular, $L = L_h$.  
        The algorithm is as follows: 
        \begin{enumerate}[label=(\roman*)]
            \item
                \textbf{Cleanup.} Delete vertices in $W$ of degree less than $2k^2$, until no such vertices remain. Denote the remaining vertices by $W'$. Next, apply the algorithm from \cref{lem:4.4} to compute $f \colon W' \to V(T)$. For each $w \in W'$, delete all edges from $w$ to $N(w) \setminus T_{f(w)}$. Denote the resulting subgraph of $F$ by $F'$. For every $i \in [0,h-1]$, let $W_i = \set{w \in W' \mid f(w) \in L_i }$ and let $F_i \subseteq F'$ be the subgraph induced by $W_i \cup L$. Compute $j \in [0,h-1]$ to be such that $e(F_j)$ is maximized. 
            \item
                \textbf{Decomposing $\mathbf{F_j}$.} For every $y \in L_j$, let $Z_y = \{ w \in W_j \mid f(w) = y \}$ and let $S_y = N_{F_j}(Z_y)$. Let $G_y$ be the induced subgraph $F_j[Z_y \cup S_y]$. Note that for $y \neq y'$, $G_y$ and $G_{y'}$ are vertex-disjoint subgraphs of $F_j$.\footnote{$Z_y$ and $Z_{y'}$ are disjoint by definition. $S_y$ and $S_{y'}$ are disjoint because they belong to subtrees rooted at different vertices on the same level.} 
            \item 
                \textbf{Data structure for non-membership listing.} For every graph $G_y$, we construct a data structure to answer queries of the following form: Given $w \in Z_y$ and a child $z \in V(T_y)$ of $y$, list the neighbors of $w$ in $S_y$ that are \emph{not} in the subtree $T_z$. This is implemented as follows for every $G_y$: (1) Sort the adjacency list of every $w \in Z_y$, using as keys the children of $y$ that are ancestors of the neighbors of $w$. (2) Given a query, output the neighbors of $w$ as they appear in the sorted list, skipping the block of neighbors whose key is $z$.
            \item
                \textbf{Computing $O(k)$-cores.} For every $y \in L_j$, compute a $2k$-core of $G_y$ by invoking the algorithm from \cref{claim:core}. Denote the core by $G_y'$, and let $Z_y' \subseteq Z_y$ and $S_y' \subseteq S_y$ be the vertex parts of $G_y'$. Keep a copy of $G_y$ for later use.
            \item
                \textbf{Finding many $2k$-cycles.} For every non-empty core $G_y'$, and for every edge $\ell w \in E(G_y')$ where $\ell \in S_y'$ and $w \in Z_y'$, we compute a $2k$-cycle containing $\ell w$ as follows. Compute a path $Q$ of length $2k-2(h-j)-2$ in $G_y'$, starting from $w$ and avoiding $\ell$. This can be done in a $2k$-core by repeatedly picking an unvisited neighbor. Let $q \in Z_y'$ be the final vertex in that path and let $z$ be the child of $y$ that is the ancestor of $\ell$. We pick a neighbor $\ell' \in N_{G_y}(q)$ satisfying $\ell' \notin \{\ell\} \cup V(Q) \cup V(T_z)$. To see that such a vertex exists, recall that since $f(q) = y$, then: 
                \begin{itemize}
                    \item $q$ has at least $|N_G(q)| - jk$ neighbors in $V(T_y)$. Hence, $|N_{G_y}(q)| \geq |N_G(q)| - jk$. 
                    \item $q$ has fewer than $|N_G(q)| - (j+1)k$ neighbors in $V(T_z)$.
                \end{itemize}
                Let us count how many more forbidden neighbors $q$ has in $G_y$, beyond $V(T_z)$. 
                Note that the number of vertices of $Q$ in $S_y$ is exactly $k-(h-j)-1$. Thus, the number of forbidden neighbors $q$ has in $S_y$ is at most:
                \begin{equation*}
                    \begin{split}
                        |\{\ell\} \cup (V(Q)\cap S_y) \cup (N_{G_y}(q) \cap V(T_z))| &< 1 + (k-(h-j)-1) + (|N_G(q)| - (j+1)k) \\
                        &= |N_G(q)| - jk - k + k - (h-j) \\
                        &\leq |N_G(q)| - jk \\
                        &\leq |N_{G_y}(q)|.
                    \end{split}
                \end{equation*}
                Since the number of forbidden neighbors is strictly less than the number of available neighbors $q$ has in $G_y$, such an $\ell'$ exists. We find it by querying the data structure from (iii) for neighbors of $q$ not in $T_z$, until one not in $\{\ell\} \cup V(Q)$ is found.
                Let $C$ be the cycle formed by the path $\ell w \circ Q \circ q\ell'$ and the paths in $T$ from $\ell$ and $\ell'$ to their lowest common ancestor, which is $y$. The length of this cycle is $(1 + (2k-2(h-j)-2) + 1) + (h-j) + (h-j) = 2k$. Add the pair $(\ell w,C)$ to the output set $X$.
        \end{enumerate}
        It is clear that the cycle $C$ is a $2k$-cycle containing $\ell w$. It remains to prove the bound on $|X|$ and analyze the running time.
        \paragraph{Correctness.} We prove that $|X| \geq e(F)/k - \Order(k^2 (|L|+|W|))$.
            In the cleanup stage, we delete at most $\Order(|W|k^2)$ edges incident to vertices in $W$ with fewer than $2k^2$ neighbors in $L$. Then, by the definition of $f(w)$, the number of edges removed from each $w$ that are not incident to $T_{f(w)}$ is at most $ik \leq k^2$. In total, at most $\Order(|W| k^2 )$ edges are removed so far, hence $e(F') \geq e(F) - \Order(|W| k^2)$. Observe that $e(F') = \sum_{i = 0}^{h-1} e(F_i)$ because the $F_i$'s are edge disjoint. Hence, $e(F_j) \geq e(F)/k - \Order(|W| k^2)$. In the final stage, we find a cycle for every edge in the $2k$-cores. The number of edges not in the $2k$-cores is at most $\sum_y (2k-1)|V(G_y)| = \sum_y (2k-1)(|S_y| + |Z_y|) =  \Order(k (|L| + |W|)) = \Order(k^2(|L| + |W|))$. Thus, $|X| \geq e(F)/k -  \Order(k^2(|L| + |W|))$. 

        \begin{claim}[Analysis]
            \label{claim:analysis_supersaturation}
            The running time of the algorithm is $\Order(n + e(F) \log n)$.
        \end{claim}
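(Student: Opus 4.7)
My plan is to account for the work in each of the five steps of the algorithm separately, show that only step (iii) contributes a logarithmic factor, and combine. For step (i), deleting low-degree vertices in $W$ costs $O(e(F) + n)$ via a standard degree-bucket argument; invoking \cref{lem:4.4} contributes $O(e(F)\log n + n)$; and trimming every $w$ to $N(w) \cap V(T_{f(w)})$ takes $O(e(F) + n)$ provided I first run a DFS on $T$ and store preorder/postorder labels, so that each ancestor-descendant test runs in $O(1)$. Partitioning $V(F')$ into the $W_i$'s and selecting $j$ is a linear sweep. Step (ii) iterates once through the edges of $F_j$ to build the vertex-disjoint graphs $\{G_y\}_{y \in L_j}$, costing $O(|W| + |L| + e(F_j))$. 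Step (iv) invokes \cref{claim:core} on each $G_y$ and is linear per graph, totaling $O(|L| + |W| + e(F_j))$.

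For step (iii), I would precompute, in a single top-down pass of $T$ that costs $O(n)$ since $h \le k = O(1)$, the ancestor of every vertex at each depth. The key of any $\ell \in S_y$ (namely, the child of $y$ that is an ancestor of $\ell$) is then readable in $O(1)$, so sorting the adjacency list of each $w \in Z_y$ by these keys costs $O(\deg_{G_y}(w) \log n)$, summing to $O(e(F_j)\log n) = O(e(F)\log n)$ across all $y$. This is the only step that will contribute a logarithmic factor.

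The main obstacle is arguing that step (v) runs in $O(e(F))$ time overall, i.e.\ in amortized $O(k^2) = O(1)$ per edge of a core. The greedy walk that builds the path $Q$ performs $2k - 2(h-j) - 2 = O(k)$ steps, each of which picks an unvisited neighbor in $G_y'$; since $G_y'$ has minimum degree $\ge 2k$ while fewer than $2k$ vertices are currently forbidden, such a neighbor is always found by scanning the adjacency list in $O(k)$ time, for a total of $O(k^2)$ per edge. The delicate part is locating $\ell'$: the correctness analysis already bounds by $O(k)$ the number of forbidden neighbors of $q$ in $G_y$ that lie outside $V(T_z)$. Using the sorted adjacency list of $q$ from step (iii), I would skip the contiguous block of neighbors with key $z$ (located in $O(\log k) = O(1)$ time by binary search) and scan the remaining neighbors one by one until finding the first one outside $\{\ell\} \cup V(Q)$; at most $O(k)$ candidates are examined, each in $O(1)$ time. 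Reconstructing the $2k$-cycle $C$ then amounts to tracing the $O(k)$ tree edges from $\ell$ and $\ell'$ up to $y$ using precomputed parent pointers in $T$. Summing yields $O(k^2 \cdot e(F_j)) = O(e(F))$ for step (v).

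Adding the five contributions gives a total running time of $O(n + e(F)\log n)$, establishing \cref{claim:analysis_supersaturation}.
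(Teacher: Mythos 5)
Your proposal is correct and follows essentially the same route as the paper's analysis: a step-by-step accounting in which the call to \cref{lem:4.4} and the adjacency-list sorting contribute the $\Order(e(F)\log n)$ term, all other preprocessing is linear, and each edge of a core costs $\Order(k^2)=\Order(1)$ in step (v) via the greedy path construction and $\Order(k)$ data-structure queries. Your write-up merely supplies implementation details (preorder/postorder ancestor tests, marking $V(Q)$, binary search for the key-$z$ block) that the paper leaves implicit.
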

        \begin{proof}
            Step (i) takes $\Order(n+ e(F))$ plus $\Order(e(F) \log n + n )$ for the call to the algorithm of \cref{lem:4.4}. Steps (ii), (iii), and (iv) take $\Order(n+e(F)\log n)$ time, dominated by sorting. Finally, for every edge in a core, we find a cycle. This involves building a path of length $\Order(k)$, which takes $\Order(k^2)$ time, and querying the data structure $\Order(k)$ times. Since $k$ is a constant, the cost per edge is constant. The total time is dominated by the sorting steps, resulting in $\Order(n + e(F)\log n )$.
        \end{proof}
    \end{proof}

\subsection{Proof of \cref{thm:odd}}
We now adapt the algorithm from \cref{section:girth} to the setting of $C_{2k+1}$-free graphs. This case is more complicated because there can be edges within BFS layers, and edges between the final two layers cannot be handled in the same way as before. We will need the following generalization of the Chiba-Nishizeki algorithm.
        \begin{lemma}[Generalized Chiba-Nishizeki]
            \label{lem:generalized-chiba}
            Let $G$ be a graph and let $S \subset V(G)$ be a given set such that $G[S]$ is $d$-degenerate. There is an algorithm that detects whether there is a triangle in $G$ with at least two vertices in $S$ in time $\Order( d \cdot e(S,V\setminus S))$.
        \end{lemma}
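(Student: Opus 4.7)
My plan is to combine a degeneracy ordering of $G[S]$ with $O(1)$-time adjacency queries so that each edge crossing from $S$ to $V \setminus S$ incurs only $O(d)$ work. First, applying the standard minimum-degree peel from \cref{claim:core} to $G[S]$, I would compute a degeneracy ordering $\sigma$ in time $O(|S| + e(G[S]))$, and for each $u \in S$ record the set $L(u) \subseteq S$ of $u$'s $G[S]$-neighbors that appear after $u$ in $\sigma$, so $|L(u)| \le d$. In the same preprocessing pass, for every $v \in V \setminus S$ that has a neighbor in $S$ I would build a hash set storing $N(v) \cap S$, at cost $O(e(S, V \setminus S))$, which allows constant-time queries of the form ``is $vw \in E$?''.

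The main loop then iterates over each edge $uv \in E(S, V \setminus S)$ with $u \in S$ and $v \in V \setminus S$: for every $w \in L(u)$ it performs one adjacency query to test whether $vw \in E$, reporting the triangle $uvw$ if so. This spends $O(d)$ time per crossing edge, for a total of $O(d \cdot e(S, V \setminus S))$. Correctness for triangles with exactly two vertices in $S$ is immediate: if such a triangle has $S$-vertices $a, b$ with $a$ preceding $b$ in $\sigma$, then $b \in L(a)$, so when the crossing edge $av$ is processed the inner loop discovers $b$ and verifies $bv \in E$.

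For the remaining case --- triangles whose three vertices all lie in $S$ --- I would run the classical Chiba--Nishizeki algorithm on $G[S]$, which detects a triangle in the $d$-degenerate graph $G[S]$ in time $O(d \cdot e(G[S])) = O(d^2|S|)$. In the regime where this lemma is intended to be invoked (each vertex of $S$ has $\Omega(d)$ neighbors in $V \setminus S$, which is precisely the situation for a high-degree BFS layer of a $C_{2k+1}$-free graph), this additive term is absorbed into $O(d \cdot e(S, V \setminus S))$, matching the stated bound.

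The only step I expect to require care is the bookkeeping: computing $\sigma$ and the lists $L(u)$, and setting up the per-vertex hash sets, so that the preprocessing cost stays proportional to $e(S, V)$ rather than blowing up to $\Theta(m)$ or $\Theta(n)$. These are standard data-structural considerations rather than a conceptual obstacle.
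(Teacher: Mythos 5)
Your proposal is correct in substance and arrives at the same ``$O(d)$ work per crossing edge'' accounting as the paper, but it is organized differently. The paper's proof is a single peeling pass: repeatedly pick a minimum-degree vertex $v$ of $G[S]$ (so it has at most $d$ surviving $S$-neighbors), test every pair consisting of an $S$-neighbor and an arbitrary neighbor of $v$, and delete $v$ if no triangle is found; this treats triangles with exactly two vertices in $S$ and triangles lying wholly inside $S$ uniformly, with no orientation, no hashing, and no auxiliary subroutine. You instead fix a degeneracy ordering, orient $G[S]$ forward along it, run an edge-centric scan over $E(S,V\setminus S)$ with hashed adjacency queries, and dispatch the all-inside-$S$ case to classical Chiba--Nishizeki on $G[S]$. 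What your version buys is a very transparent charging argument (each crossing edge pays exactly $O(d)$) and an explicit mechanism for the constant-time adjacency tests, a detail the paper's proof leaves implicit; what it costs is the extra machinery (hash sets, a second algorithm) that the single peel avoids.

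Concerning the additive $O(d\cdot e(G[S]))=O(d^2|S|)$ term you flag for the all-inside-$S$ case: this is not a deficiency of your argument relative to the paper's. The stated bound cannot hold literally when the only triangles lie wholly inside $S$ (e.g., $e(S,V\setminus S)=0$ with a triangle in $G[S]$), and the paper's own analysis silently drops the same cost --- checking pairs of two $S$-neighbors of the peeled vertex contributes $\sum_{v\in S}\deg_S(v)^2 \le d\cdot e(G[S])$, which does not appear in the displayed sum of \cref{lem:generalized-chiba}. In the only place the lemma is invoked (consecutive BFS layers of a $C_{2k+1}$-free graph, where $d=O(k)=O(1)$ by \cref{lem:keevash} and $|S|$ is charged to edges deleted by the outer algorithm), this term is absorbed exactly as you observe, so your caveat matches how the bound is actually used rather than constituting a gap.
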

        \begin{proof}
            Process the vertices in $S$ by repeatedly picking the vertex of minimum degree in $G[S]$. For each such vertex $v \in S$, check if it participates in a triangle by examining every combination of a neighbor in $S$ and a neighbor in $V(G)$. If a triangle is not found, delete $v$ from the graph (and from $S$). Since $G[S]$ is $d$-degenerate, there is always a vertex with at most $d$ neighbors in $S$. Therefore, the total running time is bounded by 
            \[
                \Order\left(\sum_{v \in S} \deg_S(v) \deg_{V(G) \setminus S}(v)\right) = \Order\left(d \cdot  e(S,V(G) \setminus S)\right),
            \]
            where $\deg_S(v)$ and $\deg_{V(G) \setminus S}(v)$ are the number of neighbors $v$ has in $S$ and $V(G) \setminus S$, respectively. 
        \end{proof}
        
        We will also use a lemma by Keevash et al., rephrased for our purposes.
        \begin{lemma}[Lemma 3.8 in ~\cite{keevash2013conjecture}, rephrased]
            \label{lem:keevash}
            Let $L_i$ be the set of vertices at distance $i$ from some vertex $v$ in a $C_{2k+1}$-free graph $G$, where $k \geq 2$. If $i \leq k$, then the degeneracy of $G[L_i]$ is $\Order(k)$.
        \end{lemma}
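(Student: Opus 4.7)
The plan is to prove the contrapositive: fix an induced subgraph $H \subseteq G[L_i]$ and suppose every vertex of $H$ has $H$-degree at least $D$ for a sufficiently large $D = \Theta(k)$; we exhibit a copy of $C_{2k+1}$ in $G$, contradicting the hypothesis. Pick any $u \in H$ and grow a BFS \emph{inside} $H$ from $u$, producing layers $M_0 = \{u\}, M_1, M_2, \ldots$ Each shortest $H$-path emanating from $u$ lives entirely in $L_i$, and we will argue that for a large enough $D$ some layer $M_j$ is forced to collide with a structural class that would spell out the forbidden cycle.

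The central cycle-length identity is as follows. For $w \in M_j$, concatenate the shortest $H$-path $u \to w$ (of length $j$, lying inside $L_i$) with the two BFS-tree paths of $G$ from $u$ and from $w$ back to $v$ (each of length $i - d$, where $d$ is the depth of the lowest common ancestor of $u$ and $w$ in the BFS tree of $G$ rooted at $v$). The resulting walk is a \emph{simple} cycle of length $2(i-d)+j$, because the tree paths traverse only layers $L_{d}, L_{d+1}, \ldots, L_{i-1}$ and are therefore disjoint from the $H$-path sitting in $L_i$. Setting this length equal to $2k+1$ pins down, for each LCA depth $d \in [0, i-1]$, exactly one forbidden $H$-distance $j^{*}(d) := 2(k-i+d)+1$, which is an odd integer in the range $[2(k-i)+1, 2k-1]$.

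The remainder of the argument is a pigeonhole across these $\Order(k)$ forbidden pairs. Partition $L_i \setminus \{u\}$ into classes $A_0, \ldots, A_{i-1}$ according to LCA depth with $u$. The $C_{2k+1}$-freeness of $G$ forces $M_{j^{*}(d)} \cap A_d = \emptyset$ for every $d$. On the other hand, since $H$ has minimum degree $\geq D$, a Moore-type lower bound makes $|M_j|$ grow at rate $\Omega(D)$ per step until it saturates $V(H)$; combined with $|L_i| = \sum_d |A_d|$, this forces --- provided $D = \Theta(k)$ is large enough --- some $M_{j^{*}(d)}$ to meet its matched $A_d$, producing the forbidden $C_{2k+1}$ and the desired contradiction.

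The main obstacle will be the quantitative expansion step: the BFS in $H$ can have many collisions, and the classes $A_d$ may be highly uneven in size, so a crude pigeonhole is too weak. I would handle this via a peeling argument --- at each bad distance $j^{*}(d)$, discard from $H$ the thin set of vertices that would otherwise be forced outside $A_d$. Each peeling step loses only a vanishing fraction of $H$ once $D$ is large relative to $k$, so the residual subgraph retains minimum degree $\Omega(D)$ and the BFS expansion can continue; after all $\Order(k)$ constraints have been handled, the next expansion step must land in a matched class, closing the argument and yielding the $\Order(k)$ degeneracy bound claimed.
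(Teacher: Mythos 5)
Your setup is fine as far as it goes: the cycle-length identity is correct (a $u$--$w$ path of length $j$ inside $L_i$, closed up through the two BFS-tree paths to the lowest common ancestor at depth $d$, gives a simple cycle of length $j+2(i-d)$), so for a pair with LCA depth $d$ the single forbidden path length is $j^{*}(d)=2(k-i+d)+1$. Note the paper itself does not prove this lemma; it cites Lemma 3.8 of Keevash et al.\ and only remarks that their argument yields degeneracy rather than average degree, so your attempt is a from-scratch reproof rather than a variant of anything in the paper.

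The genuine gap is in the finishing step, and it is structural, not merely the ``quantitative expansion'' issue you flag. All the constraints you extract come from shortest-path distances to a single root $u$ inside $H$: for each $d$, the BFS layer $M_{j^{*}(d)}$ must miss $A_d$. But every forbidden distance $j^{*}(d)$ is odd and at least $2(k-i)+1$ (so at least $5$ whenever $i\le k-2$), and a graph of large minimum degree is exactly the kind of graph whose internal BFS is \emph{shallow}: the Moore-type growth you invoke makes the layers saturate after one or two steps, after which $M_3,M_5,\ldots$ are empty and every constraint holds vacuously. Concretely, a dense $H$ of diameter $2$ from $u$, with an arbitrary assignment of vertices to classes $A_d$, satisfies all of your constraints while having minimum degree far larger than $\Theta(k)$; since your argument uses nothing about the classes beyond their being a partition, no pigeonhole or peeling over these constraints can produce a contradiction. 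What is actually needed is to exhibit, for some pair of layer vertices (not necessarily involving a fixed root, and not via shortest paths), a path in $H$ of \emph{exactly} the prescribed odd length matched to that pair's LCA depth. Prescribing exact path lengths between a pair is the real crux and requires different machinery -- e.g.\ a maximal path whose endpoint has all its $\ge D$ neighbours on the path (yielding cycles/paths of many exact lengths through one vertex), theta-graph lemmas, or consecutive-cycle-length results -- which is what the cited proof relies on and what is missing here.
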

        We remark that the original statement only contains an upper bound of $\Order(k)$ on the average degree of $G[L_i]$, and not its degeneracy. However, their proof actually shows the stronger statement about the degeneracy. Let us now explain how to adapt the algorithm from \cref{thm:girth}.

    \begin{proof}[Proof of \cref{thm:odd}]
        The algorithm is as follows.
        \begin{enumerate}[label=(\roman*)]
            \item
                \textbf{Low-degree cleanup.} This stage remains the same as in the proof of \cref{thm:girth}.
            \item
                \textbf{Constructing an $h$-hop ball.} This stage remains the same as in the proof of \cref{thm:girth}.
            \item
                \textbf{Searching for a triangle.} For every $i \leq h-2$, check if $G[L_i \cup L_{i+1}]$ contains a triangle. By \cref{lem:keevash}, $G[L_i]$ and $G[L_{i+1}]$ have degeneracy $\Order(k)$. Apply the generalized Chiba-Nishizeki algorithm (\cref{lem:generalized-chiba}) on $G[L_i \cup L_{i+1}]$ twice: once with $S = L_i$ and once with $S = L_{i+1}$. If a triangle is found, report it and halt.
            \item
                \textbf{Deleting vertices from $G$.} This stage remains the same as in the proof of \cref{thm:girth}.
            \item
                \textbf{Deleting edges between the last two layers.} Apply the algorithm from \cref{thm:supersaturation} to the graph on $L_{h-1} \cup L_{h}$ (with the BFS tree providing the tree structure) to get a set of edge-cycle pairs $X$. For every $(uv,C) \in X$, check if $uv$ participates in a triangle with another vertex in $C$. If so, report it and halt. Otherwise, delete $uv$ from $G$.
        \end{enumerate}
        \paragraph{Correctness.} We prove that if $G$ contains a triangle, the algorithm finds it. It suffices to show that stage (iii) finds any triangle incident to $L_0 \cup \cdots \cup L_{h-2}$, and that in stage (v), every edge that gets deleted does not participate in a triangle.

            For stage (iii), recall that a triangle can only be incident to at most two consecutive layers. Hence, if a triangle is incident to $L_0 \cup \cdots \cup L_{h-2}$, it must lie fully within $G[L_i \cup L_{i+1}]$ for some $i \leq h-2$. Such a triangle must have at least two vertices in $L_i$ or at least two in $L_{i+1}$, so the generalized Chiba-Nishizeki algorithm is guaranteed to find it.

            For stage (v), consider a pair $(uv,C) \in X$, where $C$ is a $2k$-cycle containing $uv$. By \cref{claim:2k}, $uv$ may only participate in a triangle with another vertex in $C$, and such triangles are checked before deleting $uv$. 
            \paragraph{Analysis} We now prove that the running time of the algorithm is $\Order((m + n^{1+2/k})\log n)$.
            The analysis is similar to that in the proof of \cref{thm:girth}, with the difference being stages (iii) and (v). Suppose that $v$ is a vertex of degree greater than $n^{1/k}$ and let $h_v$ be the height of the BFS rooted at $v$. We see that the cost of $v$ is proportional to the number of edges incident to $L_0 \cup \cdots \cup L_{h_v-1}$, as the computation time for stages (iii) and (v) is near-linear in the number of those edges. To bound the total cost of all the vertices, we again split the cost into two:
            \begin{itemize}
                \item \textbf{Deleted edges:} every edge that is incident to $L_0 \cup \cdots \cup L_{h_v-2}$, or appears in $E(L_{h_v-2},L_{h_v-1})$, gets deleted in step (iv). Moreover, every edge that appears in $X$ in stage (v) gets deleted. 
                \item \textbf{Undeleted edges:} edges in $E(L_{h_v-1},L_{h_v})$ that did not appear in $X$. 
            \end{itemize}
            By \cref{thm:supersaturation}, the number of edges that appear in $X$ is at least $e(L_{h_v},L_{h_v-1})/k - \Order(k^2(|L_{h_v-1}| + |L_{h_v}|))$. Hence, denoting by $m_v$ the number of deleted edges, we have:
            \[
                e(L_{h_v},L_{h_v-1}) \leq  k m_v + \Order(k^3(|L_{h_v-1}| + |L_{h_v}|)) = \Order(m_v + |L_{h_v-1}| + |L_{h_v}|). 
            \]
            The number of undeleted edges is at most $e(L_{h_v},L_{h_v-1})$. We can therefore write:
            \[
                cost_v = \tOrder(m_v + e(L_{h_v},L_{h_v-1})) = \tOrder(m_v  + |L_{h_v}| + |L_{h_v-1}|).
            \]
            Note that the term $|L_{h_v-1}|$ is accounted for by $m_v$ because $m_v$ contains at least that many edges in $E(L_{h_v-2},L_{h_v-1})$. For the $|L_{h_v}|$ term, it is at most $n^{h_v/k}$ because the last layer is thin. Hence, we get: 
            \[
                cost_v = \tOrder\left( m_v  + n^{h_v/k} \right).
            \]
            By a similar calculation as in the proof of \cref{thm:girth}, we get the bound:
            \[
                \tOrder\left(m + n^{1+2/k}\right).
            \]
    This concludes the proof of \cref{thm:odd}.
\end{proof}

Let us now finish this subsection with a simple generalization to listing triangles.
\listingodd*
\begin{proof}[Proof of \cref{thm:listing-odd}]
    To list the triangles instead of reporting just one, we observe that the generalized Chiba-Nishizeki algorithm from \cref{lem:generalized-chiba} can be easily modified to list all the triangles with an edge in $S$ it encounters, in the same running time. So during a phase of the algorithm, all triangles in the $h$-hop ball incident to layers $L_0,\ldots,L_{h-2}$ can be listed. The edges that get deleted in the final step of the algorithm do not participate in a triangle, and hence the algorithm correctly lists all triangles in the graph. The analysis remains the same. 
\end{proof}

    \section{Proofs of Propositions}
        \label{sec:propositions}
        In this section, we prove \cref{prop:smallpatterns} and \cref{prop:3k}. 

\subsection{Proof of \cref{prop:3k}}
    Let us recall the proposition statement. 
    \divisible*
    \begin{proof}
        Consider any proper $3$-coloring of $H$ with colors from $\set{0,1,2}$. Suppose, for contradiction, that there is no vertex with a monochromatic neighborhood. Construct a maximal path $v_0, v_1 ,\ldots$ such that $v_0$ is some arbitrary vertex of color $0$ (which must exist by the hypothesis), and for $i > 0$, the color of $v_i$ is $i \mod 3$. Denote by $v_\ell$ the last vertex in the path. By the hypothesis, it has a neighbor $u$ of color $(\ell+1)\mod 3$. However, by maximality of the path, $u$ must be some vertex $v_j$ for $j < \ell$. Since the color of $u$ is $(\ell + 1) \mod 3$ and the color of $v_j$ is $j \mod 3$, we must have the congruence $j \equiv (\ell+1) \mod 3$. The length of the cycle consisting of the $v_j$-to-$v_\ell$ path and the edge $v_\ell v_j$ is $\ell-j+1$. Given the congruence, we have $\ell-j+1 \equiv 0 \mod 3$. 
    \end{proof}

\subsection{Proof of \cref{prop:smallpatterns}}
    Let us recall the proposition statement.
    \smallpatterns*
    \begin{proof}
        It suffices to prove the claim for patterns of size $8$ because any smaller pattern that does not fall into one of the categories can be extended to $8$ vertices by attaching some vertices. 
        It turns out that all patterns of size $8$, except for eight patterns, fall into the first three categories. The proof of this claim is computer-aided. Our computer program \cite{code} enumerates all patterns of size $8$ that do not fall into the lower bound category (\cref{thm:lowerbounds}), i.e., patterns that are $3$-colorable and contain at most one triangle. 
        For each of these patterns, the program checks whether it admits a nice coloring. The final output contains those eight patterns that are not in the lower bound category and do not admit a nice coloring. 

        Let us show that the remaining eight patterns fall into the fourth category. Among them, four are obtained by attaching a vertex, or adding an isolated vertex, to the $7$-vertex pattern in \cref{fig:7vertices}. Hence, it suffices to show that this pattern falls into the fourth category. The illustrated coloring shows this. 
        \begin{figure}[h]
            \centering
            \includegraphics[scale=0.8]{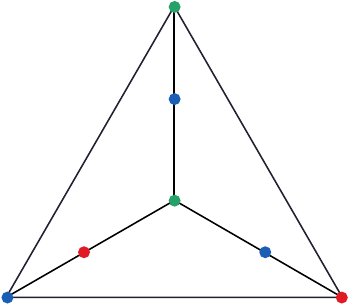}
            \caption{A $7$-vertex pattern which does not fall into the first three categories. The illustrated coloring admits a vertex with a monochromatic neighborhood. After removing it, the remaining pattern is a triangle attached to a nicely colored pattern. This proves that the pattern falls into the fourth category.}
            \label{fig:7vertices}
        \end{figure}
        \noindent
        The four remaining patterns are illustrated in \cref{fig:combinedfigures}, together with colorings demonstrating that they fall into the fourth category. 
        \begin{figure}[h!]
            \centering
            \begin{subfigure}[h]{0.48\textwidth}
                \centering
                \includegraphics[scale=0.8]{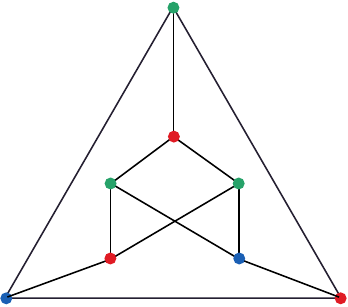}
                \caption{}
                \label{fig:8vertices1}
            \end{subfigure}
            \hfill 
            \begin{subfigure}[h]{0.48\textwidth}
                \centering
                \includegraphics[scale=0.8]{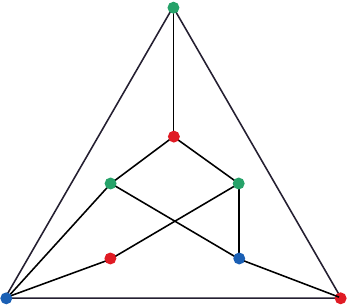}
                \caption{}
                \label{fig:8vertices2}
            \end{subfigure}
            \hfill 
            \begin{subfigure}[h]{0.48\textwidth}
                \centering
                \includegraphics[scale=0.8]{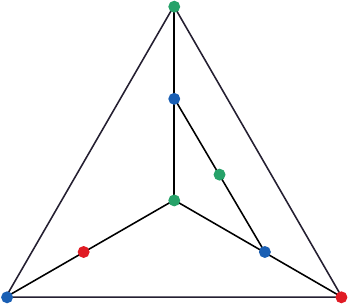}
                \caption{}
                \label{fig:8vertices3}
            \end{subfigure}
            \begin{subfigure}[h]{0.48\textwidth}
                \includegraphics[scale=0.8]{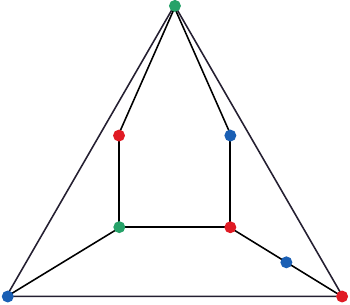}
                \caption{}
                \label{fig:8vertices4}
            \end{subfigure}
            \caption{The four patterns on $8$ vertices which are not extensions of the pattern from \cref{fig:7vertices}. In each of these colored patterns, after removing one or two vertices with a monochromatic neighborhood, the remaining pattern is a triangle attached to a nicely colored pattern. This shows that the patterns fall into the fourth category.}
            \label{fig:combinedfigures}
        \end{figure}
    \end{proof}

    \section{Conclusions}
        \label{sec:conclusions}
        Our work is a first step in the systematic study of structured instances of combinatorial Triangle Detection. It makes progress towards a dichotomy theorem for $H$-free graphs, with all patterns of size at most $8$ classified. Several open questions arise. The most pressing one is to complete the dichotomy for all patterns. We put forth the following hypothesis:

\begin{hypothesis*}[Dichotomy for Triangle Detection in $H$-Free Graphs]
The Triangle Detection problem in $H$-free graphs is:
\begin{itemize}
\item in $\Order(n^{3-\eps})$ time via a combinatorial algorithm for some $\eps>0$, if $H$ is $3$-colorable and contains at most one triangle; and
\item as hard as Triangle Detection in general graphs, otherwise.
\end{itemize}
\end{hypothesis*}

Note that this hypothesis is making two non-obvious claims: First, that there is a dichotomy by which each pattern $H$ can be classified as either ``easy'' or \emph{triangle-hard}; i.e., that there is no $H$ for which Triangle Detection in $H$-free graphs requires cubic time but there is no reduction from Triangle Detection in general graphs to Triangle Detection in $H$-free graphs. Second, it asserts that the only two conditions that can make a pattern triangle-hard are having a chromatic number greater than $3$ or having more than one triangle.

To prove the hypothesis, one has to extend our algorithms to more patterns, and to disprove it, one has to extend our lower bounds. We believe that the former is more likely. In fact, our current algorithmic techniques of ``embedding'' subpatterns by sampling vertices may be sufficient if new, more intricate analysis tools are developed.

\paragraph{Other open questions.} From the lower bound perspective, our current understanding is that some patterns admit an $n^{3-o(1)}$ lower bound, while any other pattern only admits the trivial $\Omega(n^2)$ lower bound. It would be interesting to prove a super-quadratic lower bound (i.e., $\Omega(n^{2+\eps})$ for some $\eps >0$) for Triangle Detection in $H$-free graphs for some pattern $H$ that admits a subcubic algorithm. This also addresses the goal of optimizing the running times, which was not our main focus in the paper. 

Another open question is about the necessity of randomness in our algorithms. In the $H$-free setting, it seems possible to avoid randomness by using deletions, as is done in our specialized algorithm for $C_{2k+1}$-free graphs. However, in the $H$-sensitive setting, our algorithm crucially depends on randomness as it needs to quickly find a neighbor of the triangle that participates in a small number of copies of $H$. Derandomizing this component seems more difficult. 

Finally, this work may be further developed, for example, by considering $k$-Clique Detection in $H$-free graphs or directed versions of the problem. Another equally important question is about Triangle Detection in induced $H$-free graphs, which we did not address in this work. 

    \section*{Acknowledgments}
        We would like to thank Elad Tzalik for useful discussions.

    \newpage

    \bibliographystyle{plainurl}
    \bibliography{references}
\end{document}